\documentclass[11pt]{article}
\usepackage{hyperref}
\usepackage{graphicx}
\usepackage{graphics}
\usepackage{dsfont}
\usepackage{epsfig}
\usepackage{amsmath,amssymb,amsthm,amsxtra}
\bibliographystyle{unsrt}
\usepackage{float}
\allowdisplaybreaks
\usepackage{xcolor}
\usepackage{xypic}
\setlength{\oddsidemargin}{0.25in}      
\setlength{\evensidemargin}{0.25in}     
\setlength{\topmargin}{0in}           
\setlength{\textwidth}{6in}           
\setlength{\textheight}{9in}            
\addtolength{\topmargin}{-\headheight}  
\addtolength{\topmargin}{-\headsep}     
\setlength{\marginparwidth}{0.75in}

\newtheorem{prop}{Proposition}
\newtheorem{lem}{Lemma}

\newtheorem{thm}{Theorem}

\newcommand{\tr}{\textrm{tr}}
\newcommand{\Res}{\textrm{Res}}

\newcommand{\be}{\begin{equation}}
\newcommand{\ee}{\end{equation}}

\numberwithin{equation}{section}


\def\cF{\mathcal{F}}

\def\cM{\mathcal{M}}

\def\cW{\mathcal{W}}

\def\fH{\mathfrak{H}}

\def\fh{\mathfrak{h}}

\def\mC{\mathbb{C}}

\def\mF{\mathbb{F}}

\def\mN{\mathbb{N}}

\def\mP{\mathbb{P}}
\def\mQ{\mathbb{Q}}
\def\mR{\mathbb{R}}
\def\mS{\mathbb{S}}
\def\mT{\mathbb{T}}

\def\mZ{\mathbb{Z}}

\def\tB{B}

\def\tG{\mathrm{G}}
\def\tH{H}

\def\tL{\mathrm{L}}
\def\tM{\mathrm{M}}

\def\tS{\mathrm{S}}

\def\tZ{Z}

\def\th{\mathrm{h}}

\def\GLpR2{\tG\tL^+(2,\mR)}
\def\SLR2{\tS\tL(2,\mR)}
\def\SLZ2{\tS\tL(2,\mZ)}
\def\abcd{\left(\substack{ a\, b\\c\, d}\right)}
\renewcommand{\Im}{\mathop{\mathrm{Im}}}
\renewcommand{\Re}{\mathop{\mathrm{Re}}}
\renewcommand{\mod}{\mathop{\mathrm{mod}}}
\newcommand{\Vol}{\mathop{\mathrm{Vol}}}
\newcommand{\ord}{\mathop{\mathrm{ord}}}

\newcommand{\diff}[2]{\textrm{d}^{#1}{#2}}

\def\parab{\mathrm{par}}
\def\mero{\mathrm{mero}}

\def\={\;=\;}

\usepackage{physics}
\usepackage{array}
\usepackage{tikz}

\usepackage{cite}
\usepackage{amsmath,amscd} 
\usepackage{mathtools}
\usepackage{placeins}

\usepackage{mdframed}
\usepackage{lipsum}

\usepackage{nccmath}

\usepackage{tikz}
\usepackage{pgfplots}
\usetikzlibrary{shapes.geometric}
\usetikzlibrary{calc}

\newlength\bshft
\bshft=.18pt\relax
\def\fakebold#1{\setbox0=\hbox{$#1$}#1\kern-\wd0\kern\bshft#1\kern-\wd0\kern\bshft#1}

\newmdtheoremenv{theo}{Theorem}

\newcommand{\ssm}{\mathbin{\vcenter{\hbox{\text{$\scriptscriptstyle\mathrlap{\setminus}{\hspace{.2pt}\setminus}$}}}}}



\newcommand{\HH}{\mathfrak H}

\def\ds=\displaystyle
\def\={\; = \;}
\def\+{\, + \,}
   
\def\C{\Bbb C}  
\def\Z{\Bbb Z}

\def\Th{\Theta}
\def\a{\alpha} 
 
\def\G{\Gamma} 
\def\l{\lambda} 
\def\z{\zeta} 
\def\t{\tau} 
\def\th{\theta}  
\def\l{\lambda} 
 
\def\h{\frac12}

\newcommand{\Hyp}[3]{\,{}_3F_2\!\left(\genfrac{}{}{0pt}{0}{#1}{#2} \,; #3\right)}

\begin{document}
\begin{titlepage}
{}~ \hfill\vbox{ \hbox{} }\break

\rightline{BONN-TH-2022-07}

\vskip 1.5 cm


\centerline{
\Large \bf D-brane masses at special fibres of hypergeometric families}
\vskip .2 cm
\centerline{
\Large \bf  of Calabi-Yau threefolds, modular forms, and periods}   
\vskip 1 cm

\renewcommand{\thefootnote}{\fnsymbol{footnote}}
\vskip 30pt \centerline{
Kilian B\"onisch\footnote{kilian@mpim-bonn.mpg.de,
$^\diamond$aklemm@th.physik.uni-bonn.de, $^\dagger$esche@bicmr.pku.edu.cn, $^{\ddagger}$dbz@mpim-bonn.\-mpg.de}$^\diamond$,
Albrecht Klemm$^{\diamond\circ}$,   
Emanuel Scheidegger$^\dagger$ 
and Don Zagier$^{\ddagger*}$} 
\vskip 0.8 cm

\begin{center}
{\em 
$^*$Max-Planck-Institut f\"ur Mathematik, Bonn, D-53111, Germany \\ [1 mm] 
$^\diamond$Bethe Center for Theoretical Physics, Universit\"at Bonn, D-53115, Germany\\[1 mm]
$^\circ$Institute for Theoretical Studies, ETH Z\"urich, CH-8092, Switzerland\\[1 mm]    
$^\dagger$Beijing International Center for Mathematical Research, Peking Univ., 100871, China \\ [1 mm] 
$^\ddagger$International Center for Theoretical Physics, 34151 Trieste,  Italy }
\end{center}

\setcounter{footnote}{0}
\renewcommand{\thefootnote}{\arabic{footnote}}
\vskip  .5 cm 

\begin{abstract}
We consider the fourteen families $W$ of Calabi-Yau threefolds with one complex structure parameter 
and Picard-Fuchs equation of hypergeometric type, like the mirror of the quintic in $\mathbb{P}^4$. 
Mirror symmetry identifies the masses of even--dimensional D--branes of the mirror Calabi-Yau $M$ with four periods 
of the holomorphic $(3,0)$-form over a symplectic basis of $H_3(W,\mathbb{Z})$. 
It was discovered by Chad Schoen that the singular fiber at the conifold of the quintic gives rise to a Hecke
eigenform of weight four under $\Gamma_0(25)$, whose Hecke eigenvalues are determined by the Hasse-Weil zeta function which can be obtained by 
counting points of that fiber over finite  fields.  Similar features are known for 
the thirteen other cases. In two cases we  further find  
special regular points, so called rank two attractor points, where the Hasse-Weil zeta function  
gives rise to modular forms of weight four and two. We numerically identify entries of the period 
matrix at these special fibers as  periods and quasiperiods of the associated modular forms. 
In one case we prove this by constructing a correspondence between the conifold fiber and a Kuga-Sato variety. We also comment on simpler applications to local Calabi-Yau threefolds. 

\end{abstract}

\end{titlepage}
\vfill \eject

\newpage

\baselineskip=16pt

\tableofcontents
\addtocontents{toc}{\protect\setcounter{tocdepth}{2}}
 
\newpage

\section{Introduction}
\label{introduction}
In this work we study classical questions concerning fourteen fourth order differential
equations of hypergeometric type, with solutions such as     
\be
\varpi_0(z)= \sum_{n=0}^\infty \frac{ (5 n)!}{(n!)^5}  z^n\ ,
\label{fundamentalperiod}
\ee    
using geometric and arithmetic tools. These differential equations are the unique hypergeometric 
Picard-Fuchs equations that describe the variation of the Hodge structure in one-parameter families of Calabi-Yau threefolds 
$W\rightarrow {\cal  M}_{\text{cs}}=\mathbb{P}^1 \ssm \{0,\mu,\infty\}$. In the case that $W$ is the mirror of the quintic hypersurface $M$ in $\mathbb{P}^4$, $(2\pi i)^3\varpi_0=\int_{T^3}\Omega$ is a period of the holomorphic $(3,0)$ form $\Omega$ over a three-torus.
The mirror manifold $W$ can be obtained as a resolved orbifold of a subfamily of $M$ by a group action of $(\mathbb{Z}/5\mathbb{Z})^3$. The other thirteen differential operators and its solutions 
have similar geometric  interpretations. The manifolds $M$, their
topological invariants and the parameters specifying the differential operators are summarized in Table  \ref{Table:topdataone}.
    
All hypergeometric systems have a point of maximal unipotent monodromy  (MUM-point)
at $z=0$ and a conifold point at $z=\mu$. At $z=\infty$ the quintic mirror  has an orbifold point , which can be made into a regular 
point by going to a five fold cover, at the expense of introducing five conifold points at 
the fifths roots of $\mu$. It has been pointed out in~\cite{MR3822913} that after locally removing this 
finite branching, one-parameter families of Calabi-Yau threefolds can have 
three types of limiting mixed Hodge structures at their critical points. Besides the two
types mentioned above there are also K-points, see~\cite{MR3822913}. It can be seen in Table~\ref{Table:topdataone} that the fourteen hypergeometric operators exhibit any of the three types of singular
points at $z=\infty$.

A classical task in the theory of differential equations is to analyze the global structure of its solutions. 
The parameter space ${\cal  M}_{\text{cs}}$ can be covered by three patches around the singular points $z \in \{0,\mu,\infty \}$. 
At any of these, vectors $\Pi_z$ of local Frobenius solutions can be constructed and have overlapping regions of convergence. Around $z=0$ a canonical basis $\Pi$, corresponding to an integral symplectic basis of cycles, exists and the global solutions are then  
specified by a choice of branch cuts and the transition matrices $T_z$  (with 
$ \Pi=T_z\Pi_z$). Using a 
Barnes integral representation, $T_\infty$ can be determined in terms of values of Gamma
functions and their derivatives extending a method pioneered in~\cite{Candelas:1990rm} 
for the mirror quintic. The transition matrix $T_\mu$ has been 
determined in~\cite{Huang:2006hq} first in terms of nine real numerical constants that were 
found to be related by quadratic Legendre relations that  cut
them down to six constants. In Section~\ref{Legendrerelation} we derive these relations from special geometry. In~\cite{Scheidegger:2016ysn} the nine
constants were analytically determined and given as infinite sums of
special values of hypergeometric functions ${}_3F_{2}$. In this 
work we numerically relate, for each of the fourteen hypergeometric models, at least four of these constants to periods and quasiperiods of modular forms 
determined by arithmetic properties of the conifold fiber $W_{\mu}$.       

Since $\Pi$ corresponds to an integral symplectic basis of cycles, the monodromy group acts as a subgroup of Sp$(4,\mathbb{Z})$ on $\Pi$. Guided by mirror symmetry the authors 
of~\cite{Candelas:1990rm} obtained a special choice $\Pi$ for such an integer  
basis for the quintic, which encodes  at  the  MUM-point the topological data of $M$ 
and the genus zero Gromov-Witten invariants. This lead to the $\hat \Gamma$-class  
formalism which identifies periods over an integral symplectic basis of  $H_3(W,\mathbb{Z})$  
to canonical algebraic $K^0_{\text{alg}}$-theory classes of coherent sheaves with support 
on $k$-dimensional holomorphic submanifolds on the mirror $M$ (\cite{MR2282969}, \cite{IRITANI}, \cite{MR2483750}, \cite{MR3536989}). 
In physics this basis is singled out because  the $K^0_{\text{alg}}$-theory classes 
map naturally to the central charges of D$(2k)$--branes on $M$, which 
determine the masses of the latter.     

In Section \ref{sec:conifold} we  motivate and provide overwhelming numerical 
evidence for the conjecture\footnote{This can be expected from motivic arguments and was communicated to 
us by M.~Kontsevich and V. Golyshev.}
that  two  entries of $T_\mu$, which correspond to periods at the conifold that 
determine the D$4$ and D$2$ brane central charges and masses, are given 
by two periods $\omega^\pm$ associated to weight four Hecke eigenforms $f$, which in the case of the quintic can be given by
\be
f(\tau)=\frac{\eta(5 \tau)^{10}}{\eta(\tau)\eta(25 \tau)} + 5 \eta(\tau)^2 \eta(5 \tau)^4 \eta(25 \tau)^2=q+q^2+7 q^3-7 q^4+7 q^6+\cdots
\label{chadschoen}
\ee
with the Dedekind eta function $\eta$ and $q=e^{2\pi i \tau}$.
The coefficients of $f$  have a precise number theoretical  meaning in terms of the number of points of $W_\mu$ over finite fields $\mathbb{F}_q$~\cite{Schoen}. 
More generally, these numbers determine the Hasse-Weil zeta function as reviewed in 
Section \ref{sec:local-global-zeta}. For a discussion of the role of mirror symmetry in this context see~\cite{Candelas:2000fq}.  
The definition of the periods $\omega^\pm$ which uses 
the theory of Eichler integrals and period polynomials is reviewed more generally in  Appendix~\ref{Periodpolynomials} and exemplified for level 25 in \ref{sec:Ford-circles}. 
In Section \ref{sec:conifold} we also motivate and verify a new conjecture that two 
other entries in $T_\mu$ are given by quasiperiods $\eta^\pm$ associated to $f$. The quasiperiods are obtained by associating with $f$ a certain cohomology class represented by meromorphic modular forms, which has the same Hecke eigenvalues as $f$. For such a class one can again define an Eichler integral and the period polynomials of this give rise to the quasiperiods. This theory is developed in Appendix \ref{quasiperiods} and exemplified for level 25 in \ref{sec:exampl-space-ms_4g}. The periods and quasiperiods are related by a quadratic relation which we also call Legendre relation.

One way to prove the two conjectures is to construct an explicit correspondence between the conifold fiber  $W_{\mu}$ and the relevant Kuga--Sato variety. The 
Kuga--Sato variety is constructed from a fibration of elliptic curves over a modular curve. The correspondence allows to identify the periods of $W_\mu$  with periods of the Kuga--Sato variety and the latter are canonically identified with periods and quasiperiods of modular forms. In Section  \ref{sec:Correspondence} we provide such a correspondence  and hence the proof of the above conjectures 
for the Calabi-Yau family which is mirror to four quadrics in $\mathbb{P}^7$.
In this case we can further numerically identify all entries of $T_\mu$ in terms of periods, quasiperiods, factors of $2\pi i$ and $\log 2$. An intermediate 
result  between the numerical evidence and the full construction of
the correspondence with a Kuga-Sato variety can be obtained by the technique of fibering out of motives \cite{BGK21}.            

Our results are not specific for hypergeometric families of Calabi-Yau threefolds, i.e.\ by motivic conjectures it is expected that the identification of the Galois representation of a variety with that of modular forms implies that periods of the variety can be given by periods and quasiperiods of modular forms. We give further examples in this direction, by considering rank two attractor points. Such points were introduced in the context of charged black holes in type IIB string compactifications on Calabi-Yau threefolds~\cite{Moore:1998pn}.

For one-parameter families of Calabi-Yau threefolds, rank two attractor points are smooth fibers in the moduli space such that the Betti cohomology of the fiber has a splitting into two parts which is compatible with the Hodge decomposition. Conjecturally, this induces a splitting of the Galois representation and this can give rise to modular forms of weight 4 and weight 2. Using a $p$-adic method~\cite{Candelas:2021tqt} for the computation of the Hasse-Weil zeta function allows to find such points~\cite{Candelas:2019llw}.  We find two rank 2 attractor points for the hypergeometric families
\begin{equation}
X_{3,3}(1^6) \quad {\rm at}\  z_*=-1/2^3 3^6 \qquad {\rm and} \qquad X_{4,3}(1^5 2^1) \quad {\rm at}\  z_*=-1/2^4 3^3 \, .
\end{equation}
Numerically we find that the associated period matrices can be given completely in terms of the periods and quasiperiods of the associated modular forms. 

Other examples of one parameter Calabi-Yau threefolds whose Galois representation is related to automorphic forms are known, e.g.\ \cite{GvS2022} for the case of GSp(4) (para)modular 
forms~\footnote{Examples of these kind are investigated in an international project on GSp(4) motives. For more information contact Vasily Golyshev.}. Even in the hypergeometric one-parameter families there are further examples, including the occurrence of two Hilbert modular forms for $X_{2,2,2,2}(1^8)$ at $z=(17\pm 12\sqrt{2})/2^8$ and modular forms of weight 3 and 2 for $X_{2,2,2,2}(1^8)$ at $z=-1/2^8$, where in the latter case the Galois representation is given by the product of the Galois representations attached to the modular forms.

 Surprisingly, these number theoretic considerations have quite deep and wide ranging 
 connections to physics. Generically,  the  D--brane central charges
 and hence their masses are completely 
 determined by the periods and therefore they take particular interesting arithmetic values at the 
 special points. The ratio of the D--brane central charge  of the D2--brane to the
 one of the D0--brane at the conifold point
 determines the growth of the  Gromov-Witten
 invariants~\cite{Candelas:1990rm}  or the BPS numbers of 
 D2--D0 bound states at the MUM point, which in turn is related  to the Bekenstein--Hawking entropy  
 of spinning $\mathcal{N}=2$ black holes. It can be exactly determined in terms of periods of modular forms for the first time in the $X_{2,2,2,2}(1^8)$ model.  
 At rank two attractor points  the periods describe the value of the
 moduli of the vector multiplets at the horizon of  ${\cal N}=2$  black hole 
solutions in type IIB string  compactifications which are isolated supersymmetric ${\cal N}=2$ vacua for which the 
 theta-angles for the $U(1)$ gauge couplings are fixed by the periods and quasiperiods~\cite{AAKM22}. 
 Moreover the arithmetic structure of Calabi-Yau threefold periods can also used to fix master integrals 
 that are associated to the four loop banana Feynman integral~\cite{Bonisch:2020qmm},~\cite{Acres:2021sss}.

\vskip 3 mm
\noindent  
{\bf Acknowledgement:} It is a pleasure to thank Francis Brown, Philip Candelas, Vasily Golyshev, Alexander Goncharov, 
Minxin Huang, Amir Kashani-Poor, Sheldon Katz, Maxim Kontsevich, Greg Moore, Fernando Rodriguez Villegas,  Cumrun Vafa,  
Duco van Straten and Eric Zaslow for very useful discussions and comments.  In particular we like  to thank Georg Oberdieck
for discussion and help with the resolution of singularities of the quotient of the four quadrics in $\mathbb{P}^7$ that leads 
to the construction of its mirror in Section~\ref{orbifoldquadrics}. K.B. is supported by the International Max Planck Research School on Moduli Spaces of the Max Planck Institute for Mathematics in Bonn. A.K.  likes to thank  Dr. Max R\"ossler,  the Walter Haefner Foundation and the  ETH Z\"urich Foundation for support. E.S. acknowledges support from NSFC grant No. 11431001.

\section{Hypergeometric one-parameter Calabi-Yau threefolds}
\label{cyperiods}
In this section we introduce the hypergeometric Picard-Fuchs equations describing the variation of Hodge structure of one-parameter families of Calabi-Yau threefolds. We explain the choice of solutions corresponding to an integral symplectic basis of cycles and derive the quadratic Legendre relations satisfied by the periods from special geometry. We then construct the one-parameter families $W$ by resolving orbifolds of their mirrors $M$. Finally we describe the physical significance of the periods in type II string compactifications on the Calabi-Yau manifolds.

\subsection{Fourth order hypergeometric Picard-Fuchs operators}   
The fourteen hypergeometric fourth order differential operators  
that arise as Picard-Fuchs operators for one-parameter families of 
Calabi-Yau threefolds are given by 
\be  
L \; = \;\theta^4- \mu^{-1} z \prod_{k\; = \;1}^4(\theta+a_k)
\label{diffgeneral} 
\ee  
with $\theta = z \dv{}{z}$ and for the values of $\mu$ and $\{a_k\}$ specified in the Table \ref{Table:topdataone}. 
\begin{table}[h!]
{{ 
\begin{center}
	\begin{tabular}{|c|c|c|c|c|c|c|}
		\hline
	        $N$	& $a_1,a_2,a_3,a_4$						        & $1/\mu$	& Mirror $M$			& $\kappa$	& $c_2 \cdot D$	& $\chi(M)$\\\hline
8  	& $\frac{1}{2},\frac{1}{2},\frac{1}{2},\frac{1}{2}$			& $2^8$		& $X_{2,2,2,2}(1^8)$		& $16$		& $64$		& $-128$\\[2mm]
9  	& $\frac{1}{4},\frac{1}{3},\frac{2}{3},\frac{3}{4}$			& $2^63^3$	& $X_{4,3}(1^5 2^1)$		& $6$		& $48$		& $-156$\\[2mm]
16 	& $\frac{1}{4},\frac{1}{2},\frac{1}{2},\frac{3}{4}$			& $2^{10}$	& $X_{4,2}(1^6)$		& $8$		& $56$		& $-176$\\[2mm]
25 	& $\frac{1}{5},\frac{2}{5},\frac{3}{5},\frac{4}{5}$			& $5^5$		& $X_5(1^5)$			& $5$		& $50$		& $-200$\\[2mm]
27 	& $\frac{1}{3},\frac{1}{3},\frac{2}{3},\frac{2}{3}$			& $3^6$		& $X_{3,3}(1^6)$		& $9$		& $54$		& $-144$\\[2mm]
32 	& $\frac{1}{4},\frac{1}{4},\frac{3}{4},\frac{3}{4}$			& $2^{12}$	& $X_{4,4}(1^4 2^2)$		& $4$		& $40$		& $-144$\\[2mm]
36 	& $\frac{1}{3},\frac{1}{2},\frac{1}{2},\frac{2}{3}$			& $2^43^3$	& $X_{3,2,2}(1^7)$		& $12$		& $60$		& $-144$\\[2mm]
72 	& $\frac{1}{6},\frac{1}{2},\frac{1}{2},\frac{5}{6}$			& $2^83^3$	& $X_{6,2}(1^5 3^1)$		& $4$		& $52$		& $-256$\\[2mm]
108	& $\frac{1}{6},\frac{1}{3},\frac{2}{3},\frac{5}{6}$			& $2^43^6$	& $X_{6}(1^4 2^1)$			& $3$		& $42$		& $-204$\\[2mm]
128	& $\frac{1}{8},\frac{3}{8},\frac{5}{8},\frac{7}{8}$			& $2^{16}$	& $X_{8}(1^4 4^1)$		& $2$		& $44$		& $-296$\\[2mm]
144	& $\frac{1}{6},\frac{1}{4},\frac{3}{4},\frac{5}{6}$			& $2^{10}3^3$	& $X_{6,4}(1^3 2^2 3^1)$		& $2$		& $32$		& $-156$\\[2mm]
200	& $\frac{1}{10},\frac{3}{10},\frac{7}{10},\frac{9}{10}$			& $2^85^5$	& $X_{10}(1^3 2^1 5^1)$		& $1$		& $34$		& $-288$\\[2mm]
216	& $\frac{1}{6},\frac{1}{6},\frac{5}{6},\frac{5}{6}$			& $2^83^6$	& $X_{6,6}(1^2 2^2 3^2)$		& $1$		& $22$		& $-120$\\[2mm]
864	& $\frac{1}{12},\frac{5}{12},\frac{7}{12},\frac{11}{12}$		& $2^{12}3^6$	& $X_{12,2}(1^4 4^1 6^1)$	& $1$		& $46$		& $-484$\\ [ 2 mm]
		\hline
	 \end{tabular}	
\end{center}}}
\caption{Data of fourteen one--parameter Calabi--Yau families $W$  with hypergeometric Picard--Fuchs operators, arranged  
 according to the level $N$ of the weight four cusp form $f_4\in
 S_4(\Gamma_0(N))$ associated with the modular conifold fiber $W_\mu$. The coefficients $a_i$ and $\mu$  specify the
 Picard-Fuchs  operator (\ref{diffgeneral}). The mirror $M$ of the first 
 thirteen families are generically smooth complete intersection of $r$
 polynomials $P_j$ of degree $d_j$ in the weighted  projective spaces 
 $\mathbb{P}^{3+r}(w_1,\ldots,w_{4+r})$. The notation is such that e.g.\ $X_{4,3}(1^5 2^1)$ stands for the intersection of a quartic and a cubic in $\mathbb{P}^5(1,1,1,1,1,2)$. The mirror of the last family is a generically non-smooth intersection in 
 the indicated space. The last three columns denote the triple
 intersection number 
 $\kappa=D^3$ of $M$, the intersection number of $D$ with the class $c_2(TM)$ and
 the Euler number $\chi(M)$.}
 \label{Table:topdataone}
\end{table}
   
The associated Riemann symbol  
\be
{\cal  P}\left\{\begin{array}{ccc}
0& \mu& \infty\\ \hline
0& 0 & a_1\\
0& 1 & a_2\\
0& 1 & a_3\\
0& 2 & a_4 
\end{array}\right\}\ 
\label{riemannsymbolgeneral}
\ee 
shows that the system (\ref{diffgeneral}) has always three regular 
singular points at $z\in \{0,\mu,\infty \}$ so that the parameter space of $z$ is 
${\cal M}_{\text{cs}}\; = \;\mathbb{P}^1\ssm \{0,\mu,\infty\}$.
The theorem of Landman~\cite{MR344248}  states that  principal properties of a monodromy matrix $M_*$ around a singular point  $*$
or more generally a divisor ${\cal D}$ in ${\cal M}_{\text{cs}}$  are captured by the minimal integer $1\le k< \infty$ such that
\begin{equation}
(M_*^k-1)^{p+1}\; = \;0
\label{unipotentcy}  
\end{equation}
for some $0\le p\le {\rm dim}_\mathbb{C}(W)$.
If $k>1$ there is a finite order branch cut  transversal to the divisor $\mathcal{D}$. If in addition $p = 0$, ${\cal D}$ is 
called an orbifold divisor. Locally one can consider a $k$-fold covering of ${\cal M}_{\text{cs}}$ and remove the finite branching. After this, one has $k=1$ and can consider the limiting mixed Hodge structure ~\cite{MR3822913}.  If $p>0$ one has an infinite shift
symmetry. For $p=1$ either of the two following cases arises. In the
first case, there is a a single vanishing period dual to a logarithmic
degenerating period at $z=*$. Such a point is called {\sl conifold point} and the local exponents~\footnote{Here $a\neq b\neq c$ and $k a,k b, k c\in \mathbb{Z}$.} 
in (\ref{riemannsymbolgeneral}) have the schematic form
$(a,b,b,c)$. In the second case, there are two distinct vanishing
periods that are dual to two distinct logarithmic degenerating periods
at $z=*$. Such a point is called {\sl $K$--point} and has local exponents of
the form $(a,a,b,b)$.
The value $p=2$ cannot occur due to Schmid's $\text{SL}(2,\mathbb{C})$ orbit theorem and for $p = 3$ one has a point 
of {\sl maximal unipotent monodromy}, the MUM--point, with local exponents $(a,a,a,a)$. The latter fact
 implies that besides a holomorphic solution also single, double and triple logarithmic solutions exist at this point.    
 From the entries in the columns of (\ref{riemannsymbolgeneral}) under the singular point  
one sees  that  at $z = 0$ one has a MUM-point, at $z = \mu$  a conifold point and at $z = \infty$ generically a  
finite branching.  All  
types of limiting mixed Hodge structures that can occur according to~\cite{MR3822913}, do occur in hypergeometric
examples. The enumerative geometry of the $X_{2,2,2,2}(1^8)$ model at the second MUM-point with local 
exponents  $\frac{1}{2},    \frac{1}{2},\frac{1}{2},\frac{1}{2}$ is studied in \cite{AET2022}.

\subsection{The choice of the integral symplectic basis at the  MUM-point}
\label{sec:mum}
A basis $\Pi$ of periods corresponding to an integral symplectic basis of cycles has been determined for the mirror quintic in~\cite{Candelas:1990rm} by 
identifying the period  $F_0=\int_{S^3} \Omega$ over the vanishing three sphere  
$S^3$ near the conifold $z=\mu$, see (\ref{quinticconifoldsolutions}), and  
making the corresponding Picard--Lefshetz monodromy $M_\mu$ simultaneously integral 
symplectic with the order five orbifold monodromy $M_\infty$. Using  the Barnes integral 
representation (\ref{barnesrepresentation}) $F_0$ can be exactly analytically continued to the MUM-point. This calculation 
reveals the following facts that generalize to all hypergeometric cases~\cite{Klemm:1992tx},~\cite{Font:1992uk},~\cite{Doran:2005gu}:
$F_0$  degenerates with a triple logarithm at the MUM-point $z=0$. The $S^3$ is symplectic dual 
to the three torus $T^3$, whose period gives rise to a holomorphic solution $X^0$ at the MUM-point. The remaining  Sp$(4,\mathbb{Z})$ ambiguity can be 
canonically fixed  by identifying the coordinate $t$ of the complexified Kähler moduli space $\mathcal{M}_{\text{cK}}$ of $M$ with the ratios of 
the periods given in (\ref{eq:mirrormap}) and using special geometry on the mirror manifold  
$M$ at its large volume point. This is explained in Section \ref{sec:Physics} and leads to  
the $\hat \Gamma$-class conjecture that relates the period vector  $\Pi$ systematically  
to the central charges of the even--dimensional branes on $M$.

In the following we explain how the integral period vector $\Pi$ can be defined for all examples by the observations made in~\cite{Hosono:1994ax}. At the MUM-point  $z
= 0$ a unique basis of solutions of the Picard-Fuchs equation can be defined by
\begin{equation} \label{eq:Frobenius}
    \Pi_0(z) \, = \,
    \begin{pmatrix*}[r]
        f_0(z) \\
        f_0(z)\log(z) + f_1(z) \\
        \frac{1}{2} f_0(z) \log^2(z) +f_1(z) \log(z) + f_2(z) \\
        \frac{1}{6} f_0(z) \log^3(z) + \frac{1}{2}f_1(z) \log^2(z) + f_2(z) \log(z) + f_3(z) 
    \end{pmatrix*}
\end{equation}
for power series normalized by $f_0(0) = 1$ and $f_1(0)=f_2(0)=f_3(0)=0$. It was observed in~\cite{Hosono:1994ax} that from
\be 
(2\pi i)^3\sum_{k\; = \;0}^\infty \frac{\prod_{l\; = \;1}^r \Gamma( d_l (k+\epsilon)+1) }{\prod_{l\; = \;1}^{r+4}\Gamma(w_l (k+\epsilon)+1)} z^{k+\epsilon}\; = \;\sum_{m\; = \;0}^\infty L_m(z) (2 \pi i\epsilon)^m \, ,
\label{varpizsgen} 
\ee 
where the weights $w_l$ and the degrees $d_l$ are given in Table~\ref{Table:topdataone},
one gets four solutions $L_m(z)$ for $m = 0,1,2,3$ that constitute a  $\mathbb{Q}$--basis and combine into a $\mathbb{Z}$--basis $\Pi$ by
 \begin{align}
\Pi\; &= \; \left(\begin{array}{c} F_0 \\ F_1 \\ X^0\\ X^1  \end{array}\right)\!\! \; = \;\!\! \left(\begin{array}{c} \kappa L_3+ \frac{c_2\cdot D}{12} L_1 \\ 
                                                                                                       -\kappa L_2 + \sigma L_1 \\ L_0 \\ L_1 \end{array}\right)\!\! \; = \; (2\pi i)^3
    \begin{pmatrix}
    \frac{\zeta(3)\chi(M)}{(2\pi i)^3} & \frac{c_2\cdot D}{24 \cdot 2\pi i} & 0 & \frac{\kappa}{(2\pi i)^3} \\
 \frac{c_2 \cdot D}{24} & \frac{ \sigma}{2 \pi i} & -\frac{\kappa}{(2\pi i)^2} & 0 \\
 1 & 0 & 0 & 0 \\
 0 & \frac{1}{2 \pi i} & 0 & 0
    \end{pmatrix} \Pi_0 \, .
\label{periodI}
\end{align}
The constants in (\ref{periodI})  can be related  to
topological invariants of the mirror $M$. If $D$ is the positive generator of $H_4(M,\mathbb{Z})$  then  
$\kappa = D\cdot D \cdot D$  denotes the triple intersection number on $M$.  The
integer $c_2\cdot D$ denotes the intersection of the second Chern
class $c_2(TM)$ of the tangent bundle of 
$M$ with $D$, and  $\chi(M)$ denotes the Euler number of $M$. The constant  
$\sigma$ can be chosen to be
\be
\sigma \; = \;(\kappa \ {\rm mod } \ 2 )/2\ ,
\label{eq:sigma}
\ee
since there is an $\text{Sp}(4,\mathbb{Z})$ transformation corresponding to a shift of $\sigma$ by 1. 
With these choices $\Pi$ corresponds to periods over an integral  symplectic basis of cycles $A^0,A^1,B_0, B_1\in H_3(W,\mathbb{Z})$ 
with non-vanishing intersections $A^i\cap B_j=\delta^i_j$, i.e.\ the intersection matrix in this basis is given by
\be
\Sigma\; = \;\left(
\begin{array}{cccc}
 0 & 0 & 1 & 0 \\
 0 & 0 & 0 & 1 \\
 -1 & 0 & 0 & 0 \\ 
 0 & -1 & 0 & 0 \\ 
 \end{array}
\right) \  .
\label{symplecticform}
\ee 
The topological invariants are summarized
in Table~\ref{Table:topdataone}. They also determine the monodromy matrix at the MUM-point 
\be 
M_{0}\; = \;\left(\begin{array}{cccc} 1& -1& \frac{\kappa}{6}+\frac{c_2\cdot D}{12} & \frac{\kappa}{2}+
\sigma\\ 0& 1& \sigma-\frac{\kappa}{2}& -\kappa\\ 0& 0& 1& 0\\ 0& 0& 1& 1 \end{array}\right)
\label{maxmon}
\ee    
with respect to the basis $\Pi$. 
Note that integrality of $M_0$ implies that $ 2 \kappa + c_2 \cdot D \equiv 0 \ {\rm mod} \ 12 $, 
which follows geometrically from the Hirzebruch--Riemann--Roch theorem and the 
integrality of the holomorphic Euler characteristic  $\chi({\cal O}_D)$ of $D$. 
Together with the conifold shift monodromy 
\be
M_\mu\; = \;\left(
\begin{array}{cccc}
 1 & 0 & 0 & 0 \\
 0 & 1 & 0 & 0 \\
 -1 & 0 & 1 & 0 \\
 0 & 0 & 0 & 1 \\
\end{array}
\right)
\label{Mc}
\ee
these matrices generate the monodromy group $\Gamma\subset {\rm Sp}(4,\mathbb{Z})$. These monodromy groups have been analyzed in~\cite{Klemm:1992tx},~\cite{Font:1992uk} and ~\cite{Doran:2005gu}. The index of these monodromy groups in $\text{Sp}(4,\mathbb{Z})$ has been studied in~\cite{MR3358302}.

\subsection{The Legendre relations and special geometry}
\label{Legendrerelation}
In this section we prove the Legendre relations satisfied by the transition matrices 
using the fact that the Picard-Fuchs equation comes from a family of Calabi-Yau threefolds. We also recall consequences of special geometry and mirror symmetry. For more background we refer to ~\cite{BryantGriffith} and  \cite{Candelas:1990rm}. 

For a family of Calabi--Yau threefolds $\pi: W \to {\cal M}_{\text{cs}}$ (with
fibers $W_z$ over $z\in {\cal M}$), one obtains a polarized variation of
Hodge structure on the bundle ${\cal H} = \cup_{z\in {\cal M}}
H^3(W_z,\mC)$. This bundle is equipped with a Hodge filtration ${\cal F}^3 \subseteq {\cal F}^2 \subseteq {\cal F}^1 \subseteq {\cal F}^0 = {\cal H}$,
where ${\cal F}^p$ are holomorphic subbundles with fibers ${\cal
  F}^p_z\; = \;\bigoplus_{l\ge p} H^{l,3-l}(W_z)$. The bundle $\mathcal{F}^3$ is one-dimensional and can be trivialized by a holomorphic section $\Omega$. The bundle ${\cal H}$ is further equipped with a connection $\nabla$, called
the Gauss--Manin connection, which can be defined by the requirement that $\dv{}{z_i} \int_\gamma \omega = \int_\gamma \nabla_i \omega$ for any holomorphic section $\omega$ and any constant cycle $\gamma$ defined over a contractible subset of $\mathcal{M}_{\text{cs}}$. This connection satisfies Griffiths transversality,
i.e.\ $\nabla_i \Gamma(\mathcal{F}^p) \subseteq \Gamma(\mathcal{F}^{p+1})$, and is flat with respect to the intersection pairing $< \cdot, \cdot> : \Gamma(\mathcal{H}) \times \Gamma(\mathcal{H}) \rightarrow \mathcal{O}_{\mathcal{M}_{\text{cs}}}$ defined by $<\omega_1,\omega_2> = \int_W \omega_1 \wedge \omega_2$. 

In our case of one-parameter families it follows that   
\be
\label{specialI}
< \nabla^k \Omega,  \Omega > \; = \; \Pi^T \Sigma  \dv{^k}{z^k} \Pi\; = \; 
\left\{
\begin{array}{ll} 
0&\ {\rm if}\  k< 3\\ [ 3 mm]
C_{zzz} & \ {\rm if}\ k\; = \;3\ ,  \\ 
\end{array}
\right. 
\ee
where $C_{zzz}$ is a holomorphic function. Expanding the Picard-Fuchs operator as $L = \sum_{k=0}^4 A_k(z) \dv{^k}{z^k}$ and using the antisymmetry of the intersection pairing and Griffiths transversality we find that
\begin{align}
  \dv{}{z} <\Omega, \nabla^3 \Omega> \, &= \, <\nabla \Omega,\nabla^3 \Omega> + < \Omega,\nabla^4 \Omega> \\
  &=\, \dv{}{z} \left( \dv{}{z} \underbrace{<\Omega,\nabla^2 \Omega>}_{=0}-<\Omega,\nabla^3\Omega>\right)-\frac{A_3(z)}{A_4(z)}<\Omega,\nabla^3 \Omega>
\end{align}
and thus
\begin{align*}
  C_{zzz}'(z)+\frac{1}{2}\frac{A_3(z)}{A_4(z)} C_{zzz}(z) = 0 \, .
\end{align*}
Solving this differential equation and using the structure of $\Pi$ around the MUM-point $z=0$ to fix the normalization gives 
\be 
C_{zzz}(z)\; = \;\frac{(2 \pi i)^3\kappa}{z^3 (1- z/\mu)}\ . 
\label{Czzz}
\ee   

We are now in the position to prove
the following  Lemma.                                                 
\begin{lem}
\label{LemmaLegendre} 
For any  CY threefold hypergeometric system in Table \ref{Table:topdataone} let $\delta = 1-z/\mu$ be the conifold variable 
and $\Pi_\mu^T(z) = (1+O(\delta)^3, \nu(\delta),\delta^2+O(\delta^3), \nu(\delta) \log(\delta) +O(\delta^3))$  with $\nu(\delta)  = \delta + O(\delta^2)$ a uniquely determined
basis of the solutions of the Picard-Fuchs equation around the conifold point. Then the transition matrix $T_\mu$ between the integral 
symplectic basis $\Pi$ (\ref{periodI}) and the basis $\Pi_\mu$ (which is defined by $\Pi = T_\mu  \Pi_\mu$ and depends on a chosen path of analytic continuation) fulfills the quadratic relation         
\be 
\label{legendrehyper}
(2 \pi i)^3
\left( \begin{array}{cccc} 
0&0& \frac{\kappa}{2}&0\\
0&0& -\alpha \kappa& -\kappa\\ 
-\frac{\kappa}{2}&\alpha \kappa&0&0\\
0&\kappa&0&0\end{array}\right)\; = \;T^T_\mu \left(
\begin{array}{cccc}
 0 & 0 & 1 & 0 \\
 0 & 0 & 0 & 1 \\
 -1 & 0 & 0 & 0 \\
 0 & -1 & 0 & 0 \\
\end{array}
\right) T_\mu 
\ee
with 
\be 
\alpha\; = \;\frac{3}{4}\left(\sum_{i\; = \;1}^4 a_i- \sum_{i<j}^4 a_i a_j\right)\, .
\label{alpha}
\ee
\end{lem}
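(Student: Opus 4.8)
The plan is to recognize that the matrix $T_\mu^T\Sigma T_\mu$ is simply the special-geometry pairing \eqref{specialI} rewritten in the conifold frame. Since $T_\mu$ is constant and $\Pi = T_\mu\Pi_\mu$, differentiation commutes with $T_\mu$, so substituting into \eqref{specialI} gives
\[
\Pi_\mu^T\,\bigl(T_\mu^T\Sigma T_\mu\bigr)\,\frac{d^k}{dz^k}\Pi_\mu \;=\; \langle\nabla^k\Omega,\Omega\rangle \;=\;
\begin{cases}0, & k=0,1,2,\\ C_{zzz}, & k=3.\end{cases}
\]
Writing $\widehat\Sigma := T_\mu^T\Sigma T_\mu$, this is a constant antisymmetric $4\times4$ matrix with six unknown entries, and the Lemma is equivalent to showing $\widehat\Sigma = (2\pi i)^3\widetilde\Sigma$, where $\widetilde\Sigma$ is the matrix such that the left-hand side of \eqref{legendrehyper} equals $(2\pi i)^3\widetilde\Sigma$. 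I would pin down the six entries by expanding the four scalar identities above around the conifold $\delta=0$ and matching coefficients of $\delta^m(\log\delta)^n$; constancy of $\widehat\Sigma$ guarantees that all but finitely many orders are automatically consistent.

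First I would organize the input from special geometry. Differentiating the four identities and using both the compatibility of $\nabla$ with the pairing and the Picard--Fuchs equation $\sum_{k=0}^4 A_k\nabla^k\Omega=0$ to eliminate $\nabla^4\Omega$, one obtains the entire antisymmetric pairing matrix $P_{ij}=\langle\nabla^i\Omega,\nabla^j\Omega\rangle$ in closed form: $P_{03}=-C_{zzz}$, $P_{12}=C_{zzz}$, $P_{13}=C_{zzz}'$ and $P_{23}=C_{zzz}''+A_4^{-1}(A_2C_{zzz}+A_3C_{zzz}')$, all other independent entries vanishing. Equivalently, with $\mathcal{P}_\mu$ the fundamental solution matrix whose columns are the $\Pi_\mu^{(k)}$, one has $\mathcal{P}_\mu^T\widehat\Sigma\,\mathcal{P}_\mu = -P$, so that $\widehat\Sigma = -(\mathcal{P}_\mu^{-1})^T P\,\mathcal{P}_\mu^{-1}$ is manifestly constant. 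This organizes the whole computation.

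Next I would read off the universal entries from the leading conifold asymptotics. Using $\Pi_\mu^T=(1+O(\delta^3),\,\nu,\,\delta^2+O(\delta^3),\,\nu\log\delta+O(\delta^3))$ with $\nu=\delta+O(\delta^2)$ and $\tfrac{d}{dz}=-\mu^{-1}\tfrac{d}{d\delta}$, the last component $\pi_4$ produces the only singular derivatives ($\pi_4''\sim\delta^{-1}$, $\pi_4'''\sim-\delta^{-2}$). Matching the $\delta^{-2}$ and $\delta^{-1}$ coefficients of the $k=3$ identity against $C_{zzz}\sim (2\pi i)^3\kappa\,\mu^{-3}\delta^{-1}$ forces $\widehat\Sigma_{14}=0$ and $\widehat\Sigma_{24}=-(2\pi i)^3\kappa$; the leading order of the $k=1$ identity forces $\widehat\Sigma_{12}=0$; and the $\delta^0$ term of the $k=2$ identity then gives $\widehat\Sigma_{13}=(2\pi i)^3\kappa/2$ (here the $\log\delta$ terms cancel between the $\pi_2\pi_4''$ and $\pi_4\pi_2''$ contributions). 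These four entries are independent of the particular operator and already reproduce the corresponding entries of $(2\pi i)^3\widetilde\Sigma$.

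The crux is the remaining pair $\widehat\Sigma_{34}$ and $\widehat\Sigma_{23}$, which is where the exponents $a_k$ must enter. I would extract these from the next order, which requires the sub-leading Frobenius coefficients of $\Pi_\mu$; these are governed by the recursion of the explicit operator \eqref{diffgeneral}. Writing $L=\sum_k A_k\,\tfrac{d^k}{dz^k}$ one finds $A_4=z^4\delta$, $A_3=6z^3\delta-\mu^{-1}z^4 e_1$ and $A_2=7z^2\delta-\mu^{-1}z^3(3e_1+e_2)$ with $e_1=\sum_i a_i$ and $e_2=\sum_{i<j}a_ia_j$, and the Calabi--Yau condition $e_1=2$ is precisely what makes the relation $C_{zzz}'=-\tfrac12(A_3/A_4)C_{zzz}$ consistent. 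Feeding $A_2,A_3,A_4$ into $P_{23}$ shows that its singular part is proportional to $e_2$, and combining this with the sub-leading terms of $\mathcal{P}_\mu$ (equivalently, matching the $k=1$ and $k=2$ identities at order $\delta$) yields $\widehat\Sigma_{34}=0$ together with $\widehat\Sigma_{23}=-(2\pi i)^3\alpha\kappa$, with the combination $\alpha=\tfrac34(e_1-e_2)$ of \eqref{alpha} emerging exactly from these $A_k$-dependent coefficients. Propagating the sub-leading conifold data cleanly and isolating the factor $\tfrac34(e_1-e_2)$ is the only genuinely laborious step; everything else is forced by special geometry and the antisymmetry of $\widehat\Sigma$.
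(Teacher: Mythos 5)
Your proposal is correct and follows essentially the same route as the paper: the paper's proof also writes the special-geometry identities $\Pi_\mu^T\Sigma_\mu\,\frac{d^k}{dz^k}\Pi_\mu=0$ ($k<3$), $=C_{zzz}$ ($k=3$) in the conifold frame and determines the constant antisymmetric matrix $\Sigma_\mu=T_\mu^T\Sigma T_\mu$ by expanding $\Pi_\mu$ to third order in $\delta$. Your write-up merely makes explicit the bookkeeping (the pairing matrix $P_{ij}$, the coefficients $A_2,A_3,A_4$, and which orders in $\delta$ fix which entries) that the paper leaves implicit.
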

\begin{proof}
  
  With the intersection matrix $\Sigma_\mu = T_\mu^T\Sigma T_\mu$ in the basis $\Pi_\mu$ we have
  \be
\label{specialI}
 \; \Pi_\mu^T \Sigma_\mu  \dv{^k}{z^k} \Pi_\mu\; = \; 
\left\{
\begin{array}{ll} 
0&\ {\rm if}\  k< 3\\ [ 3 mm]
C_{zzz} & \ {\rm if}\ k\; = \;3\ .  \\ 
\end{array}
\right. 
\ee
Expanding $\Pi_\mu$ up to the third order in $\delta$ and solving for the intersection matrix $\Sigma_\mu$ gives the relation (\ref{legendrehyper}).              
\end{proof}

The proof of the Lemma works analogous for any local basis $\Pi_z$ around any point $z$  
and yields  similar Legendre relations.  For $z=\infty$ as calculated in (\ref{Tom}) one 
can easily check them, and for the attractor points they lead to the Legendre relations of periods and quasiperiods of modular forms as discussed in Section  \ref{sec:attractor}. 

We conclude this subsection with some comments on the Kähler structure and mirror symmetry.
Griffiths transversality implies that locally the full period vector $\Pi$ (in a suitable integer smyplectic basis) can be written in terms of a {\sl holomorphic prepotential} $F(X^0,X^1)$ 
that is homogeneous of degree two in its arguments as  $\Pi^T = (F_0,F_1 ,X^0,X^1)$ 
with $F_0 = \partial_{X^0} F$ and $F_1 =
\partial_{X^1} F$. In the inhomogeneous coordinate $t=X^1/X^0$ this relation becomes
\begin{equation} 
\Pi^T\; = \;X^0( 2 {\cal F}-t\partial_{t} {\cal F}, \partial_{t} {\cal F},1,t)
\label{eq:perspecial} 
\end{equation}
with ${\cal F}(t) = F(1,t)$. Note that  (\ref{specialI}) then implies  $ \frac{1}{(X^0)^2}(\dv{z}{t})^3C_{zzz}=-\partial^3_{t} {\cal F}(t)$.

We now illustrate the structure around the MUM point $z=0$ with the periods $\Pi$ chosen as in (\ref{periodI}).
The mirror map then has the form
\be 
t(z)=X^1/X^0=\log(z)/2\pi i+O(z) \, .
\label{eq:mirrormap}
\ee
In the limit $z \rightarrow 0$, corresponding to $t \rightarrow i\infty$, mirror symmetry 
suggests that $t$ can be interpreted as a {\sl complexified K\"ahler coordinate}  $t=\int_{C_\beta} b+i \omega$ 
on the mirror $M$ of $W$, where $\omega$ is the positive generator of $H^{1,1}(M,\mathbb{Z})$, $b \in \Omega^{2}(M,\mathbb{R})$ is the Neveu-Schwarz 2-form which by the equations of motion is harmonic and $C_\beta$ is a primitive  curve 
class spanning the Mori cone that is dual to the K\"ahler cone of $M$.   Hence $t=\int_{C_\beta} b+i\, {\rm Area}(C_\beta)$ and  
$t\rightarrow i \infty$ corresponds to the maximal volume limit of $M$.   One gets by comparison of (\ref{eq:perspecial}) with (\ref{periodI})
\be 
{\cal F}\; = \;-\frac{\kappa}{6} t^3 + 
\frac{\sigma}{2} t^2 + \frac{c_2 \cdot D}{24} t+ \frac{ \chi(M)}{2} \frac{\zeta(3)}{(2 \pi i)^3}- {\cal F}_{\text{inst}}(Q) \ .
\label{prepot}
\ee
Here $Q = e^{2 \pi i t}$ such that the instanton corrections ${\cal
  F}_{\text{inst}}(Q)$ are exponentially suppressed near the MUM--point.
The famous predictions of the genus 
zero BPS invariants $n_0^\beta$ are obtained  from the expansion
\be 
{\cal F}_{\text{inst}}({Q})\; = \;\frac{1}{(2 \pi i)^3} \sum_{\beta=1}^\infty n_0^\beta  {\rm Li}_3( Q^\beta) \, ,
\label{prepotinst} 
\ee
with e.g.\ $\{n_0^\beta\}=\{2875,609250,317206375,242467530000,\ldots\}$ for the quintic.  
Note that  $n_0^0 = -\chi(M)/2$  can be viewed as the degree zero genus zero BPS number. More generally  
$n_g^\beta=(-1)^{{\rm dim}{\cal M}_g^\beta} \chi({ \cal M}_{g}^\beta)$ if the moduli ${\cal M}_{g}^\beta$ of the Jacobian fibration over the deformation
space of the image curve ${\cal C}_g$ is smooth\cite{Gopakumar:1998jq}\cite{Katz:1999xq}. For the constant 
maps ${\cal M}_{g}^\beta=M$  so that we get $-\chi(M)$ and the factor $1/2$ comes from the zero 
mode structure of constant maps. The regularizing factor  $\zeta(3)={\rm Li}_3(1)$ should also be
understandable in Gromov--Witten theory.  Comparison of (\ref{prepot}) with the normalization $X^0(z) = (2\pi i)^3+O(z)$ and its relation to the third derivative of 
the  prepotential fixes the choice of the normalization of $C_{zzz}$ in (\ref{Czzz}) in order to get $\frac{1}{(X^0)^2}(\dv{z}{t})^3C_{zzz}=\kappa+ O(Q)$ 
at the MUM-point. 

The complex structure moduli space $\mathcal{M}_{\text{cs}}$ can be equipped with the Weil-Petersson metric with components $G_{\bar i j} = \partial _{\bar i} \partial_j K$ in terms of the Kähler potential $K$ defined by
\be 
e^{-K}\; = \;i \langle \Omega,\bar \Omega\rangle \; = \; -i \Pi^\dagger \Sigma \Pi \, .
\label{Kaehlerpot} 
\ee
Note that this metric is independent of the choice of the holomorphic section $\Omega$.

\subsection{The geometry of hypergeometric one-parameter  Calabi-Yau families}  
\label{sec:mirrorquintic}
In this subsection we explain the construction of the  mirror Calabi-Yau families $W$ with one complex structure 
parameter as resolved orbifolds of the manifolds $M$ in Table~\ref{Table:topdataone}. The resolved  
orbifolds are projective algebraic and one can determine their Hasse-Weil zeta functions 
geometrically as in~\cite{Candelas:2000fq}. 

The orbifold group $\Gamma$ will be an abelian group. It is maximal in the sense that a family $M_{\rm inv}$ that admits 
the  action of $\Gamma$ has only one complex structure deformation. (See the quintic example in Section \ref{orbifoldquintic} for more explanations.) 
The action on $M_{\rm inv}$ 
leaves the restriction of the holomorphic $(3,0)$-form $\Omega$ of $M$ to $M_{\text{inv}}$ invariant. Under this condition 
the orbifold admits a Calabi-Yau resolution $W=\widehat{M_{\text{inv}}/\Gamma}$, which can be 
identified with the mirror of $M$. Orbifold constructions have been  studied from the physical point of view by~\cite{Dixon} and from 
the mathematical  point of view by~\cite{HH}.  According to~\cite{HH} the Euler number of the 
Calabi-Yau orbifold resolution is given by
\be 
\chi({\widehat {M/\Gamma}}) \, = \, \sum_{[\gamma]} \chi(M^{\langle \gamma\rangle}/C(\gamma))=\sum_{[\gamma]} \frac{1}{|C(\gamma)|}\sum_{\delta \in C(\gamma)} \chi(M^{\langle\gamma,\delta\rangle})\, ,
\label{OrbEul}
\ee  
where $[\gamma]$ is summed over all conjugacy classes of $\Gamma$, $C(\gamma)$ is the centralizer of $\gamma$, 
 $\langle \gamma,\delta\rangle$ denotes the subgroup of $\Gamma$ generated by $\gamma$ and~$\delta$ and 
  $M^{\langle\gamma,\delta\rangle}$ its fixed point set.   
Here we abbreviate by $M$ a smooth member of $M_{\text{inv}}$ such that 
its  fixed point loci admit a suitable smooth stratification. In our applications  $\Gamma$ is abelian and thus the sums extend over all $\gamma,\delta \in \Gamma$ and $1/|C(\gamma)|=1/|\Gamma|$ can be pulled out of the first sum in the last expression in (\ref{OrbEul}). 
We rewrite this formula  by denoting by $M_G$, for any subgroup $G\subset \Gamma$, the subset $M^G\smallsetminus \bigcup_{\Gamma\supseteq G' \supsetneq G} M^{G'}$ of 
 points in~$M$ whose stabilizer is exactly~$G$. Then $M^G=\bigsqcup_{G'\supseteq G} M_{G'}$ (disjoint union), and hence
 $\chi(M^G)=\sum_{G'\supseteq G} \chi(M_{G'})$, so 
\eqref{OrbEul} can be recast as
\be
\chi({\widehat {M/\Gamma}})\, =\, \sum_{G\subseteq \Gamma} \frac{ |G|^2}{|\Gamma|}  \chi(M_{G}) \ .
\label{Orbabelian}
\ee
For suitable  $\Gamma$ this in particular allows to check that $\chi(W)=-\chi(M)$ and by further analysis $h^{1,1}(W)=h^{2,1}(M)$.   
Orbifold mirror constructions  for non-singular $M_{\rm inv}$ have been described in literature. For the quintic $X_5(1^5)$ this was done in~\cite{Candelas:1990rm} and for the hypersurface examples $X_6(1^4 2^1)$, $X_{8}(1^4 4^1)$ and $X_{10}(1^3 2^1 5^1)$ in weighted projective 
spaces the procedure is very 
similar~\cite{Klemm:1992tx}\footnote{For all Calabi-Yau hypersurface in toric varieties Batyrev's mirror construction~\cite{MR1269718} 
applies and describes for our examples the same resolution as the orbifold construction.}. 
 For the complete intersections $X_{3,3}(1^6)$   the mirror  has been constructed in~\cite{Libgober:1993hq}
 and for  $X_{4,4}(1^4 2^2)$ as well as $X_{6,6}(1^2 2^2 3^2)$ in~\cite{Klemm:1993jj}.  
 However,  for $M$ given by  four quadrics in $\mathbb{P}^7$, denoted 
 by $X_{2,2,2,2}(1^8)$,  $M_{\rm inv}$ is singular, and this case has not been treated explicitly in 
 the literature~\footnote{Batyrev's and Borisov's mirror construction~\cite{MR1463173} applies 
 to the complete intersections discussed here except $X_{6,4}(1^3 2^2 3^1)$, but is notationally heavy and implicit.}. This case is of particular interest as we  will relate the conifold fiber to a Kuga-Sato variety in Section~\ref{sec:Correspondence}.  Since $M_{\rm inv}$ is a generically 
singular family (singular apart from the orbifold action) it requires additional resolutions, which will be 
explained in Subsection \ref{orbifoldquadrics}.  The mirror construction described here will apply to all 
complete intersection hypergeometric cases, but has to be generalized for the family $X_{12,2}(1^4 4^1 6^1)$, which  
is special as it has no smooth member even before we restrict to $M_{\rm inv}$. In this case the mirror can be realized as a
sublocus of a smooth three--parameter family of elliptically and K3--fibered
Calabi--Yau threefolds~\cite{Klemm:2004km},~\cite{Clingher:2016ab}. 

 \subsubsection{Mirror construction for the quintic}  
 \label{orbifoldquintic} 
We exemplify the strategy  with  the quintic  Calabi-Yau threefold, where the family $M$  is defined by the zero locus of generic degree five 
polynomials $P$ in the  projective space $\mathbb{P}^4$
\be 
M_{\underline \psi}\, = \,\{P(x_0,\ldots, x_4;{\underline \psi} )\, = \,0\; \bigl\rvert \;  (x_0:\ldots: x_4) \, \in \,   \mathbb{P}^4\} \ .
\ee
It has $101$ independent complex structure deformations (denoted by $\underline \psi$, where ${\underline{\psi}}$ differing~\footnote{In this particular case we can 
choose representatives for the PGL(5)-equivalence classes, by writing the generic quintic as $\sum_{i=0}^5 x_i^5 +\sum_{k=1}^{101} \psi_k m_k({\underline{x}})$, 
where $m_k({\underline{x}})$ are the 101 monomials in $x_0,\ldots,x_4$ of degree 5 and individual degrees $\le 3$. The only fibers
on which the group $\Gamma_5^4$ defined in~\eqref{Gammaquintic} acts come from the one-parameter family  \eqref{constraintMinv}.}     by the action of PGL(5) 
are considered as equivalent), which correspond to elements 
in $H^1(M,TM)$. By the theorem of Tian \cite{tian} and Todorov
\cite{todorov}, the complex structure deformation space ${\cal
  M}_{\text{cs}}(M)$ of a Calabi-Yau threefold has dimension $\text{dim } H^1(M,TM)=h^{2,1}(M)$ 
and is unobstructed. 
The only  cohomologically non-trivial $(1,1)$-form on $M$ is the pullback of the K\"ahler form 
of the ambient space $\mathbb{P}^4$ and the non trivial  Hodge numbers are therefore 
$h^{2,1}(M) = 101$ and $h^{1,1}(M) = 1$.   

To specify $\Gamma$ and its subgroups, let us define
\be
\Gamma_N^n\, = \, \Bigl\{\xi=(\xi_j)_{j=0,\ldots,n}\in (\mu_N)^{n+1}\; \Bigl|\; \prod_{j=0}^n\xi_j=1\Bigr\} \big/\mu^{\rm diag}_N\, ,
\label{Gammaquintic} 
\ee 
which is isomorphic to $(\mathbb{Z}/N\mathbb{Z})^{n-1}$. Here $\mu_N$ denotes the cyclic group of the $N$th roots of unity, 
generated by $e^{2 \pi i/N}$, and $\xi$ acts on the coordinates of $\mathbb{P}^n$  by $x_j\mapsto \xi_j x_j$.
The action of $\Gamma=\Gamma_5^4$  exists on the family 
\be 
M_{\rm inv}=\left\{P_\psi=\sum_{i = 0}^4  x_i^5-5 \psi  \prod_{i = 0}^4 x_i\, = \,0 \; \biggl\rvert \; (x_0:\ldots: x_4) \, \in \,  {\mathbb{P}^4}\right\} \, ,
\label{constraintMinv}  
\ee
which represents a  one-parameter invariant subspace $\cup_\psi M_\psi$  in the $101$ dimensional complex 
moduli space of  quintics in $\mathbb{P}^4$. The mirror quintic $W$  is  obtained  as the canonical resolution of the 
quotient of $M_{\rm inv}$ by $\Gamma$, namely as $W\; = \;{\widehat{
    M_{\text{inv}}/\Gamma}}$. The condition $\prod_{j=0}^4\xi_j=1$
ensures that the restriction of the holomorphic $(3,0)$-form $\Omega$ to $M_{\text{inv}}$ is invariant under $\Gamma$~\footnote{This 
can be seen from the specialization of \eqref{(3,0)-form} to the quintic hypersurfaces in the five homogenous coordinates  of $\mathbb{P}^4$. 
There is only one residuum around $P=0$, which is invariant,  and so is $\dd \mu_4$.}~\footnote{On $M_{\rm inv}$
an $S_5$ permutation acts on the coordinates of $\mathbb{P}^4$, which identifies different fixed point sets of subgroups $G \subseteq \Gamma$. 
The alternating subgroup $A_5$ of $S_5$ is isomorphic to the icosahedral group. It leaves $\Omega$ invariant and $\chi(\widehat{ M/A_5})=-16$ 
has been calculated as an application of~(\ref{OrbEul})
in~\cite{Klemm:1990df}.}. This condition (or more 
generally, the condition that at fixed points the orbifold group acts, in suitable local coordinates in 
which $\Omega$ is written as $\Omega=\dd z_1\wedge\dd z_2\wedge \dd
z_3$, as a subgroup of ${\rm SL}(3,\mathbb{C}))$  turns out to be sufficient in order  
that $M_{\text{inv}}/\Gamma$ admits a  Calabi-Yau resolution~\cite{MR1380512}. Denoting the variables identified in $\mathbb{P}^4/\Gamma$ by $\hat x_i$, we 
can define the fibers $W_\psi$ of the  mirror one-parameter family $W$ by the right hand side of 
\eqref{constraintMinv} with $x_i$ replaced by $\hat x_i$ (and $\mathbb{P}^4$  by  $\mathbb{P}^4/\Gamma$). 
It is easy to see that the fibers $W_\psi$ and $W_{\xi\psi}$ for $\xi \in \mu_5$ are isomorphic and we thus introduce the variable
\be 
z\, = \,\frac{1}{(5 \psi )^5}\, ,
\ee
which also occurs in (\ref{fundamentalperiod}), (\ref{diffgeneral}).
This  identifies five conifolds at $\psi^5 =1$ with one conifold at $z=1/5^5$ and creates a 
$\mathbb{Z}/5 \mathbb{Z}$ orbifold singularity  at $z=\infty$, where $z$ belongs to the  complex moduli space $\chi(C_{pq}) =-10$  of the mirror family $W$. 

The action of $\Gamma$  on $M_{\rm inv}$ has  ten fixed curves $C^G$, with 
$G=\mathbb{Z}/5 \mathbb{Z}$, defined by $C_{p,q }:=M_\psi\cap
\{x_{p}=0\}\cap  \{x_{q}=0\}$ with $p\neq q$. We have $C_{p,q} = \{x_i^5+x_j^5 +x_k^5=0 \}\subset \mathbb{P}^2$ (which we also denote by $C_{i,j,k}$)
and its Euler number can be calculated by the adjunction formula, $\chi(C_{p,q}) =-10$. Each of these curves meets in three out of ten fixed point sets $P_{i,j}=\{x_i^5+x_j^5=0\} \subset \mathbb{P}^1$
obtained by setting three distinct  coordinates  $x_o=x_p=x_q=0$ of $\mathbb{P}^4$ to zero. The stabilizer group of these fixed point sets $P^G$ is  
$G=(\mathbb{Z}/5 \mathbb{Z})^2$ and  again by the adjunction formula their multiplicity evaluates to $\chi(P_{i,j})=5$. Hence summing 
over all possible  groups  $G$ of all fixed point sets, (\ref{Orbabelian}) reads in this case
\be
\chi{(\widehat {M/\Gamma}})\, = \, (-200+ 10\cdot (-10-3 \cdot 5)- 10\cdot 5)\cdot \frac{1^2}{125}+  10\cdot (-25)\cdot \frac{5^2}{125} + 10 \cdot 5 \cdot \frac{25^2}{125} \, = \, 200\ .
\label{eulorbquintic}
\ee       
We note that the contribution of the identity element, the first term in (\ref{eulorbquintic}),  is always zero in the orbifold
construction of mirror manifolds. On the normal direction to the curves  $C_{p,q}$ the  orbifold action  is given by $\Gamma_5^1 \times \mu_5^{\rm diag}$, acting 
on the normal coordinates by $z_k\mapsto \xi_k z_k$, $k=0,1$. At the
fixed points $P_{i,j}$ the orbifold action is given by
$\Gamma_5^2 \times \mu_5^{\rm diag}$, acting on the normal coordinates by  $z_k\mapsto \xi_k z_k$, $k=0,1,2 $. 
These singularities can all be resolved torically by two-- and three dimensional  fans which 
globally fit into the toric diagram depicted in Figure~\ref{fig:f3codim2curves}. It is  the projection of the four dimensional 
reflexive lattice polyhedron (a simplex) that features  in the construction of Batyrev~\cite{MR1269718}. Here we omitted for clarity  
the inner points~\footnote{Those that lie not on lower dimensional boundary components of the faces.}  on all  codimension zero and codimension one  faces as well as the ones on seven codimension two and three codimension one  faces.  
As explained in the caption of the figure, the exceptional divisors correspond to $100$ new cohomological non-trivial $(1,1)$-forms, 
so that the non-trivial  Hodge numbers are $h^{1,1} (W) = 101$ and $h^{2,1}(W) = 1$ as claimed.  
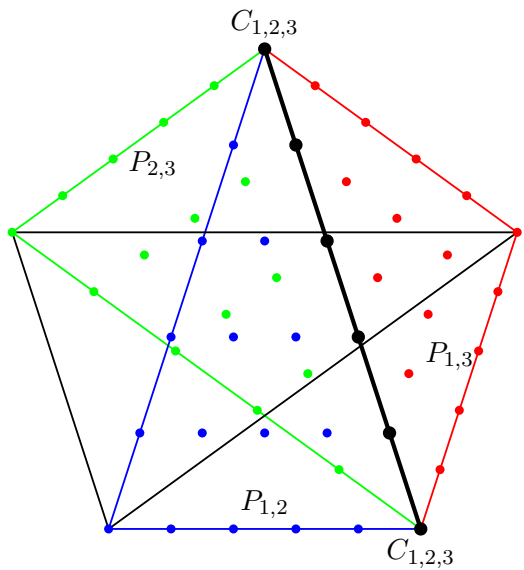
\begin{figure}[h!]
  \centering
  \begin{tikzpicture}[line width=0.7pt]
    \node[regular polygon, regular polygon sides=5, minimum size=200] (p) {};
    
    \draw[green] (p.corner 1) -- (p.corner 2) -- (p.corner 4);
    \draw[blue] (p.corner 1) -- (p.corner 3) -- (p.corner 4);
    \draw[red] (p.corner 4) -- (p.corner 5) -- (p.corner 1);
    \draw (p.corner 2) -- (p.corner 3) -- (p.corner 5) -- (p.corner 2);
    \draw[ultra thick] (p.corner 1) -- (p.corner 4);
    
    \foreach \i in {0,...,5}
    {
      \draw[fill=black, ultra thick] ($(p.corner 1)+\i/5*(p.corner 4)-\i/5*(p.corner 1)$) circle (0.06);
    }
    \draw (p.corner 1) node[above] {$C_{1,2,3}$};
    \draw (p.corner 4) node[below] {$C_{1,2,3}$};
    
    \foreach \i in {0,...,4}
    {
      \coordinate (P) at ($(p.corner 2)+\i/5*(p.corner 1)-\i/5*(p.corner 2)$);
      \coordinate (Q) at ($(p.corner 2)+\i/5*(p.corner 4)-\i/5*(p.corner 2)$);
      \foreach \j in {0,...,\i}
      {
        \fill [green] ($(P)+\j*\i^(-1)*(Q)-\j*\i^(-1)*(P)$) circle (0.06);
      }
    }
    \draw ($(p.corner 1)+1/2*(p.corner 2)-1/2*(p.corner 1)$) node[below] {$\quad P_{2,3}$};
    
    \foreach \i in {0,...,4}
    {
      \coordinate (P) at ($(p.corner 5)+\i/5*(p.corner 1)-\i/5*(p.corner 5)$);
      \coordinate (Q) at ($(p.corner 5)+\i/5*(p.corner 4)-\i/5*(p.corner 5)$);
      \foreach \j in {0,...,\i}
      {
        \fill [red] ($(P)+\j*\i^(-1)*(Q)-\j*\i^(-1)*(P)$) circle (0.06);
      }
    }
    \draw ($(p.corner 4)+1/2*(p.corner 5)-1/2*(p.corner 4)$) node[above] {$P_{1,3} \quad \ $};
    
    \foreach \i in {0,...,4}
    {
      \coordinate (P) at ($(p.corner 3)+\i/5*(p.corner 1)-\i/5*(p.corner 3)$);
      \coordinate (Q) at ($(p.corner 3)+\i/5*(p.corner 4)-\i/5*(p.corner 3)$);
      \foreach \j in {0,...,\i}
      {
        \fill [blue] ($(P)+\j*\i^(-1)*(Q)-\j*\i^(-1)*(P)$) circle (0.06);
      }
    }
    \draw ($(p.corner 3)+1/2*(p.corner 4)-1/2*(p.corner 3)$) node[above] {$P_{1,2}$};
  \end{tikzpicture}
  \caption{This toric graph represents a four dimensional simplex, whose vertices are the corners of the larger pentagram. The ten edges connecting vertices represent the ten curves $C_{i,j,k}$ and the ten triangular three faces, spanned by vertices, the ten points $P_{j,k}$. Each face is bounded by three edges corresponding to the three curves $C_{i,j,m},C_{i,j,n}$ and $C_{i,j,p}$ that meet in the point $P_{i,j}$. Likewise each curve, as for example the  curve $C_{1,2,3}$ represented by the black edge, contains three points which correspond to the three faces $P_{1,2}$, $P_{1,3}$ and $P_{2,3}$ incident to that edge.  Each inner lattice point, four for each edge and six for each face, correspond to an independent exceptional  divisor~\cite{MR1269718}. Hence together with the hyperplane class one gets $1+4\cdot 10+6\cdot 10=101$ independent divisors and thus $h^{2,1}(W)=1$ and $h^{1,1}(W)=101$.}
  \label{fig:f3codim2curves}
\end{figure}

 \subsubsection{Mirror construction for four quadrics in $\mathbb{P}^7$}  
 \label{orbifoldquadrics} 
We next construct  the mirror $W$ to the complete intersection $M$  of four quadrics in $\mathbb{P}^7$ abbreviated as $X_{2,2,2,2}(1^8)$ in 
Table~\ref{Table:topdataone}.  For generic quadrics the Euler number is  $\chi(M)=-128$.  The  only  
K\"ahler class of $M$  is inherited from the ambient $\mathbb{P}^7$, i.e. $h^{1,1}(M)=1$ and by the Calabi-Yau 
properties one has  $h^{2,1}(M)=65$. To construct the mirror $W$ with $h^{2,1}(W)=\text{dim }H_1(W,TW) =1$ and $h^{2,1}(W)=65$, we 
consider $M_{\rm inv}$ as the one-parameter family  of threefolds  defined by the four equations   
\be
P_j\, :=\,  x_j^2+y_j^2- 2 \,\psi\, x_{j+1}\, y_{j+1}=0, \qquad j \in \mathbb{Z}/4 \mathbb{Z} \, 
\label{eq:4quadrics}
 \ee 
 in the homogeneous coordinates $x_j,y_j$, $j=0,\ldots, 3$ of $\mathbb{P}^7$. The variety $M_{\rm inv}$ 
 has an automorphism group $A$ of order $2^{12}$, generated by transpositions  $\tau_j:\,  x_j\leftrightarrow  y_j$, the cyclic permutation of the coordinates $\sigma:\, j\mapsto j+1$ with  $j\in \mathbb{Z}/4 \mathbb{Z}$ and the elements of 
 \be 
 \Gamma \, = \, \Bigl\{\xi=(\xi_j)_{j=0,\ldots,3}\in (\mu_4)^{4}\Bigr\} \big/\mu^{\rm diag}_4\, 
\label{Gamma4quadrics} 
\ee 
that act on the coordinates of $\mathbb{P}^7$ by 
\be
(x_j,y_j)\, \mapsto \, (\xi_j x_j,\xi_{j-1}^2 \xi_j^{-1} y_j)\, ,
\label{eq:actionZ4^4}
\ee
with $\xi_j\in \mu_4=\{1,i,-1,-i\}$ and is  isomorphic to 
$(\mathbb{Z}/4 \mathbb{Z})^3$. The orbifoldization by $\Gamma$ leads to the mirror 
manifold, while the group $S=(\mathbb{Z}/2 \mathbb{Z})^4\rtimes
\mathbb{Z}/4 \mathbb{Z}$ generated  by $\tau_j$ and $\sigma$  is
useful to identify the fixed point sets.    

Just as before we find that  the holomorphic $(3,0)$-form $\Omega$ is invariant~\footnote{Similarly  one can check that $\Omega $ is invariant under $\sigma$, but anti-invariant under $\tau_j$.}  
under $\Gamma$. This can be  seen from  an analogous expression to \eqref{(3,0)-form} that defines $\Omega$ on $M_{\rm inv}$ 
in the $x_j,y_j$ coordinates of $\mathbb{P}^7$. While  the integrand and the measure $\dd \mu_7$ are not separately invariant 
under some elements of  $\Gamma$ (they both change sign), the form $\Omega$ is. 
  
We will show that  the one-parameter family $W={\widehat {M_{\text{inv}}/\Gamma}}$ given 
explicitly in (\ref{2222orb}) is the mirror of the generic complete intersection in  $\mathbb{P}^7$. The 
new feature, compared with  the situation in  Subsection \ref{orbifoldquintic}, is that 
\eqref{eq:4quadrics} has  $32$  nodal points $P_{j,l}^m$, $j\in \mathbb{Z}/ 4 \mathbb{Z}$, 
$m\in \mathbb{Z}/4 \mathbb{Z}$, $l\in \mathbb{Z}/2 \mathbb{Z}$, whose non-vanishing  
inhomogeneous coordinates are given by  ($\alpha=\exp(2 \pi i/8)$)
\be
\begin{array}{rl}
P^m_{j,0}: & (y_{j+2},x_{j+3},y_{j+3})=(\sqrt{2 \psi} \alpha^{1+2m},1,\alpha^{2+4m} )\\ [2 mm] 
P^m_{j,1}: & (x_{j+2},x_{j+3},y_{j+3})=(\sqrt{2 \psi} \alpha^{1+2m},1,\alpha^{2+4m})\, ,
\end{array}
\label{eq:nodes}
\ee
for generic $\psi$. An additional node develops at $x_j=y_j=1$, $\forall j$, when  $\psi^8=1$ or equivalently $z=1/2^8$ (with $z=1/(2 \psi)^8$), i.e.\
at the conifold locus.

With respect to the action of $\Gamma$ the fibers of $M_{\text{inv}}$ have sixteen  irreducible fixed curves 
$C^\pm_{j,l}$, $j\in \mathbb{Z}/ 4 \mathbb{Z}$, $l\in \mathbb{Z}/2 \mathbb{Z}$ given by 
\be 
\begin{array}{rl} 
C^\pm_{j,0}:& x_j=y_j=x_{j+1}=0, \quad y_{j-1}=\pm i x_{j-1} \\ [2 mm]
C^\pm_{j,1}:& x_j=y_j=y_{j+1}=0, \quad y_{j-1}=\pm i x_{j-1} \ 
\end{array}
\ee 
with  stabilizer ${\rm Stab}(C^\pm_{j,l})=\mathbb{Z}/4 \mathbb{Z}$.   
In addition the action of $\Gamma$ has exactly the $32$ nodes $P^m_{j,l}$ as fixed points with stabilizer 
${\rm Stab}(P^m_{j,l})=(\mathbb{Z}/4 \mathbb{Z})^2$. In $M_{\rm inv}$  eight of the 
nodal fixed points lie on every curve $C_{j,l}^{\pm1}$, i.e.
\be
\left\{P^{m}_{j-1,l}, P^{n}_{j,0}, P^{n}_{j,1}\right\} \, \subset \, C^{(-1)^{(n-1)}}_{j,l}\, ,
\ee    
for $m,n\in \mathbb{Z}/4 \mathbb{Z}$, and   each point $P^{m}_{j,l}$ 
lies on the intersection  of four curves   
\be 
P^{m}_{j,l} \, \in \, C^{(-1)^{m-1}}_{j,0}\cap C^{(-1)^{m-1}}_{j,1}\cap C^{+}_{j+1,l}\cap  C^{-}_{j+1,l}\,,
\ee 
as in the schematic intersection pattern in Figure \ref{fig:intersection}.
We note that $C^\pm_{j,l}=\sigma^j(C^\pm_{0,l})$, 
$C^\pm_{j,l}=\tau_{j-1}(C^{\mp}_{j,l})$ and $C^\pm_{j,0}=\tau_{j+1}(C^{\pm}_{j,1})$.  So all
16 curves are equivalent and we  can focus on one curve, say  $C_{0,0}^+$, given by the equations $x^2_2+y^2_2-2 i \psi x_3^2=0$, $y_1^2-2 \psi x_2 y_2=0$.
One checks that this is a smooth curve of  genus one~\footnote{It is isomorphic to the elliptic curve $C:2 y^3=x^3+x$. 
This can be seen by the map $(x,y)\mapsto (0:0:0:A y:x:i(x^2+1)/2:(x^2-1)/(2 A i):(x^2-1)/(2 A))$ with $A^2=2 i \psi$, where $0=(\infty,\infty)\mapsto P^0_{0,0} $ and $P=(0,0)\mapsto P^2_{0,0}$, 
independent of $\psi$. }, 
hence $\chi(C^\pm_{j,l})=0$. Moreover $\Gamma$ identifies the curves
$C^+_{j,l}$ with $C^-_{j,l}$ as well as the points $P_{j,l}^{m}$ with
$m\in \mathbb{Z}/4\mathbb{Z}$ for fixed $j,l$ respectively. 
We hence need to provide a desingularization of $M_{\rm inv}/\Gamma$ with eight nodes and an orbifold singularity on top of these nodes.  
To explore the local neighborhoods  of the nodes we expand infinitesimally  around the critical coordinate 
values of $P_{j,l}^m$, $x_k=x_k^{(0)}+\epsilon_{k}+\ldots$, $y_k=y_k^{(0)}+\delta_{k}+\ldots$,  
$k\in \mathbb{Z}/4 \mathbb{Z}$. With the overall scaling of $\mathbb{P}^7$ we set one infinitesimal   deformation  
of a coordinate with finite critical value to zero. For example for $P^0_{0,0}$ we set $x_3=1$, hence $\epsilon_3=0$. We see 
that $P_1=0$ requires $x_2\sim \epsilon_2^2$ and that the local geometry is given by $\epsilon_0^2+ \delta_0^2-2 \psi \epsilon_1 \delta_1=0$. 
Using the symmetries we conclude that each node $P$ is given locally by an affine equation $s^2+t^2-  2 \psi x y=0$ on which the $(\mathbb{Z}/4 \mathbb{Z})^2$ 
stabilizer group ${\rm Stab}(P)$ acts like $(s,t,x,y)\mapsto (\rho^{a} s, \rho^{3 a} t, \rho^{a+b}x , \rho^{a + 3 b} y)$, with $a,b\in (\mathbb{Z}/4 \mathbb{Z})^2$ and $\rho$ a  
non-trivial fourth root of unity.
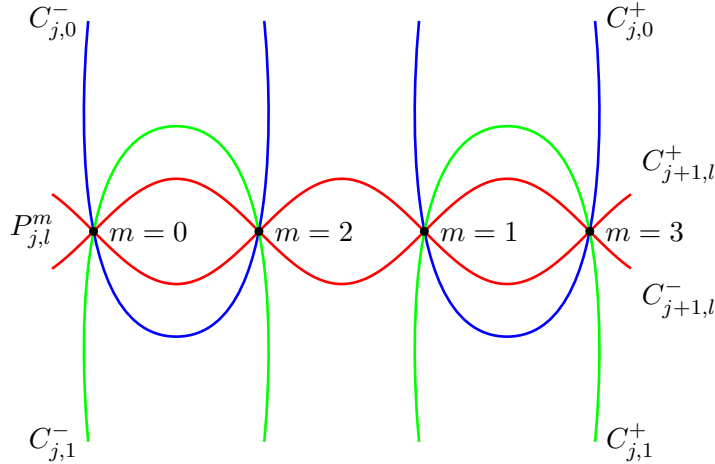
\begin{figure}[h]
  \centering
  \begin{tikzpicture}[scale=0.7, line width=1pt]
    \draw[red, domain=-5.5:5.5, samples=500, variable=\x] plot ({\x}, {cos(360*\x/(2*pi))}) node[black,above right] {$C^+_{j+1,l}$};
    \draw[red, domain=-5.5:5.5, samples=500, variable=\x] plot ({\x}, {-cos(360*\x/(2*pi))}) node[black,below right] {$C^-_{j+1,l}$};
    
    \foreach \i in {-1,1}
    {
      \draw[blue] plot[smooth, tension=2.5] coordinates {({pi*\i-(pi/2+0.1)},4) ({pi*\i},-2) ({pi*\i+(pi/2+0.1)},4)};
      \draw[green] plot[smooth, tension=2.5] coordinates {({pi*\i-(pi/2+0.1)},-4) ({pi*\i},2) ({pi*\i+(pi/2+0.1)},-4)};
    }
    
    \foreach \i in {{{-2,0}},{{-1,2}},{{0,1}},{{1,3}}}
    {
      \pgfmathsetmacro\x{\i[1]}
      \draw[fill=black] ({pi*(\i[0]+1/2)},0) circle (0.06) node[right] {$\, m=\x$};
    }
    
    \draw ({-pi*3/2},0) node[left] {$P^m_{j,l} \quad $};
    \draw ({-pi*3/2-0.1},4) node[left] {$C^-_{j,0}$};
    \draw ({-pi*3/2-0.1},-4) node[left] {$C^-_{j,1}$};
    \draw ({pi*3/2+0.1},4) node[right] {$C^+_{j,0}$};
    \draw ({pi*3/2+0.1},-4) node[right] {$C^+_{j,1}$};
  \end{tikzpicture}
  \caption{Intersection of the curves at the four points $P_{j,l}^{m}$, $m=0,\ldots ,3$. Eight copies of such four points 
    and  sixteen curves connecting them permuted by the group $S$ complete the intersection picture.}   
  \label{fig:intersection}
\end{figure}

We may bring the local  nodal geometries into the form $uv-yx=0$ by setting $u=s+i t$ 
and $v=s-i t$ and rescaling $(x,y)$ to absorb the $ 2 \psi $ and resolve the latter 
torically,  see  e.g.~\cite{MR1234037} as review\footnote{Whose notation we follow.
In particular $\sigma =\langle e_1,e_2,e_3,e_1+e_3-e_2\rangle$ is defined in the conical lattice $N$, spanned by  $e_1,e_2,e_3$ over 
$\mathbb{Z}$. With $M$ the dual lattice to $N$ one defines the associated semigroup 
$S_\sigma =\sigma^\vee\cap M=\{t \in M:\langle t,s\rangle \ge 0, \ \forall \ s\in \sigma\}$  which defines the affine variety 
by $U_\sigma={\rm Spec} ( \mathbb{C}[S_\sigma])$.}. First, we describe the local singularity   using 
the cone spanned by $\sigma=\langle v_1=e_1,v_2=e_2,v_3=e_3,v_4=e_1+e_3-e_2\rangle$ torically. Indeed, we see that 
$S_\sigma=\mathbb{C}[u,v,x,y]/(uv-xy=0)$. Without the group action there are  three canonical  ways to resolve the latter singularity. Small 
resolutions are defined  by subdivisions of $\sigma$  into $\sigma^+_{1}=\langle v_1,v_2,v_3\rangle$ and $\sigma^+_{2}=\langle v_1,v_3,v_4\rangle$ or 
$\sigma^{-}_{1}=\langle v_1,v_2,v_4\rangle$ and $\sigma^{-}_{2}=\langle v_2,v_3,v_4\rangle$ yielding two different resolutions 
$X^\pm_1={\rm Spec}\bigl(S_{\sigma^\pm_1}\bigr) \cup {\rm Spec}\bigl(S_{\sigma^\pm_2}\bigr)$. Both are isomorphic 
to the total space  ${\rm Tot}({\cal O}_{\mathbb{P}^1}(-1)\oplus {\cal O}_{\mathbb{P}^1}(-1))$ of line bundles over the base $\mathbb{P}^1$
and  related by a flop. They can be  further blown up  torically by adding the vector $\tilde e_0=v_1+v_2=v_2+v_4$ to the cone $\sigma$. 
This leads  locally to a non-compact space ${\rm Tot}\left({\cal O}_{\mathbb{P}^1\times \mathbb{P}^1}(-1,-1)\right)$. 
The local action of the orbifold suggest another resolution, obtained by adding 
the vector $e_0=\frac{1}{2}(v_1+v_2)=\frac{1}{2}(v_2+v_4)$ and refining the lattice $M=\langle e_0,e_1,e_2\rangle_{\mathbb{Z}}$. This 
yields the Calabi-Yau resolution  ${\rm Tot}\left({\cal O}_{\mathbb{P}^1\times \mathbb{P}^1}(-2,-2)\right)$. Indeed 
in the coordinates $(u,v,x,y)$ the group ${\rm Stab}(P)$  is generated by two order four elements 
$g_a$ and $g_b$, where $g_b$ acts only on $(x,y)$  as before, 
while the generator $g_a$  acts  as  $g_a:(u,v,x,y)\mapsto  (i v,i u,i x, i y)$. 
Resolving first the subgroup generated by $(g_a)^2:(u,v,x,y)\mapsto (-u,-v,-x,-y)$ torically, leads precisely 
to  the Calabi-Yau resolution~\footnote{This can be viewed as the Calabi-Yau resolution of the total space of the cotangent bundle $T_3^* L(2,1)$ to 
the lens space  $L(2,1)=S^3/(\mathbb{Z}/2 \mathbb{Z})$ considered in~\cite{Aganagic:2002wv}.}  $X_{\text{loc}}={\rm Tot}\left({\cal O}_{\mathbb{P}^1\times \mathbb{P}^1}(-2,-2)\right)$.  To obtain 
the induced action $\Gamma_{\rm loc}$ on $X_{\text{loc}}$, we parametrize  the latter by $(a_0,\ldots, a_4)\in \mathbb{C}^5$  subject to two $\mathbb{C}^*$  
actions $a_k\mapsto  \mu_r^{l^{(r)}_k} a_k$, $\mu_r \in \mathbb{C}^*$ for $r=1,2$,  
$k=0,\ldots4$ and $l^{(1)}=(-2,1,1,0,0)$ and  $l^{(2)}=(-2,0,0,1,1)$.  The locus $a_1=a_2=0$ and $a_3=a_4=0$ is excluded. 
Hence the homogeneous coordinates $[a_1:a_2]=:[x_0:x_1]$ and $[a_3:a_4]=:[y_0:y_1]$ are identified 
with the ones of the first and the ones of the  second $\mathbb{P}^1$. The blow-up relations identify $x_0 y_1=u$, $x_1 y_0=v$, $x_0 y_0=x$ and $x_1 y_1=y$. 
Clearly, $(g_a)^2$ acts trivially on $X_{\text{loc}}$ while $g_a$ generates
merely an $\mathbb{Z}/(2 \mathbb{Z})$ action  $([x_0:x_1], [y_0:y_1])\mapsto   ([y_0:y_1], [x_0:x_1])$. 
The action of $g_b$ is given by  $([x_0:x_1], [y_0:y_1])\mapsto   ([\rho x_0:x_1], [y_0:y_1 \rho^{-1}])$ 
on $X_{\text{loc}}$.  
It follows that the local orbifold group $\Gamma_{\rm loc}$ is isomorphic to $\mathbb{Z}/2 \mathbb{Z} \times \mathbb{Z}/4 \mathbb{Z}$  and that its 
action leaves the holomorphic $(3,0)$-form invariant. Therefore we can  apply (\ref{OrbEul}). There are four  fixed points under the   $\mathbb{Z}/4 \mathbb{Z}$ action namely 
when each of the $[x_0:x_1]$ and  $[y_0:y_1])$ take 
the values  $[1:0]$ or $[0:1]$. The  $\mathbb{Z}/2 \mathbb{Z}$ action leaves 
the diagonal $\Delta\sim\mathbb{P}^1$ invariant. Hence, there are two fixed points invariant under  
$\Gamma_{\text{loc}}=\mathbb{Z}/2 \mathbb{Z} \times \mathbb{Z}/4 \mathbb{Z}$, namely $([1:0],[1:0]), ([0:1],[0:1])$. The  Euler 
number of the non-compact  Calabi-Yau manifold $\chi(X_{\text{loc}})=4$ comes from the 
compact $\mathbb{P}^1\times \mathbb{P}^1$ section of the degree $(-2,-2)$ line bundle and application of  (\ref{OrbEul}) yields
\begin{equation} 
\chi\left( \widehat{ X_{\text{loc}}/\Gamma_{\text{loc}}}\right)=\frac{4- (2-2) - (4-2) -2}{8}  + \frac{(4-2)\cdot  4^2}{8} + \frac{2\cdot  8^2}{8}=20 \ . 
\label{localcontr}
\end{equation}  
Hence we can apply (\ref{OrbEul}) to  the Calabi-Yau manifold defined by (\ref{eq:4quadrics}) with the group 
action \eqref{eq:actionZ4^4} in two steps. First we perform the small resolution at all $32$ nodes. 
By similar arguments as in~\cite{MR3184930} \cite{MR1033432} we can conclude that the resulting manifold $\tilde M$ is smooth and 
projective with Euler number $\chi(\tilde M)=-128 - 32 + 2 \cdot 32=-96$.  It has the induced action of $\Gamma$ 
at the resolved nodes described above. Hence applying (\ref{OrbEul}) yields 
\begin{equation} 
\chi(\widehat{ \tilde M/\Gamma})=\frac{-96 -16 ( 0-8) -32 }{64}  + \frac{16 (0-8)\cdot 4^2}{64} + 8 \cdot 20=128 \ .  
\end{equation} 
To get the last term, which represents the  contribution of the $\left(\mathbb{Z}/4 \mathbb{Z}\right)^2$ fixed points, we  used the fact that on $M_{\rm inv}/\Gamma$ we have $32/4=8$ such fixed points and their resolution contributes according to (\ref{localcontr}) 
with $20$ to the Euler number.  We also note that on $M_{\rm inv}/\Gamma$ there are $8$ curves with stabilizer $\mathbb{Z}/4 \mathbb{Z}$, 
and that their resolution contributes three exceptional divisors for each curve. Locally, we get five exceptional 
divisors for each of the eight fixed points. The Euler number calculations suggest that these contribute the 64 
new homologically independent  divisor classes to raise $h^{1,1}(W)$ to $64+1$. We thank Georg Oberdieck for 
providing arguments that the latter are independent homology classes.  

We have established that the mirror family $W$ is obtained by resolving the singular quotient of 
$M_{\rm inv}$ and that the one complex structure deformation family  of it can be described as  the orbifold resolution of 
 \be
\label{2222orb}
	W_\psi \, = \, \{(\hat x_0:\hat y_0:\ldots : \hat x_3:\hat y_3) \in  { \widehat{ \mathbb{P}^7/\Gamma}}  \mid \, P_j:=\hat x_j^2  + \hat y_j^2 - 2 \psi \hat x_{j+1}\hat y_{j+1}=0\ ,  \quad j \in \mathbb{Z}/4 \mathbb{Z}\} \ .
\ee
The holomorphic $(3,0)$-form can be given explicitly as  
\begin{equation} 
\Omega=\frac{1}{(2 \pi i)^4} \oint_{\gamma_1}  \oint_{\gamma_2} \oint_{\gamma_3} \oint_{\gamma_4} \frac{ (2 \psi)^4 \dd \mu_7}{\prod_{k=0}^3 P_k(\hat x,\hat y,\psi)}\ ,
\label{(3,0)-form}
\end{equation} 
where  $\dd \mu_7/\prod_{k=0}^3 P_k$  with $\dd \mu_7 =\sum_{k=0}^7 (-1)^{k} \,  z_k \dd z_0\wedge \cdots \wedge {\widehat {\dd z_k} } \wedge \cdots \wedge \dd z_7$ 
is a  7-form~\footnote{To display the form $\dd \mu_7$ it is convenient
  to define $(z_0:\ldots:z_7)=(\hat x_0:\hat y_0:\ldots:\hat x_3:\hat
  y_3)$.} in $\mathbb{P}^7/\Gamma$ and the $\gamma_j$ are $S^1$
cycles encircling $P_j=0$. A similar residuum expression  for $\Omega$ 
can be written down  for the invariant one-parameter families of all hypergeometric cases. By performing all residues over the  
three $S^1$ integrals one can explicitly compute the holomorphic period $\int_{T^3} \Omega$ around the MUM-point \cite{Candelas:1990rm,Klemm:1992tx,Libgober:1993hq,Klemm:1993jj}.

\subsection{Physics concepts related to the arithmetic of Calabi--Yau periods}
\label{sec:Physics} 
Type II string compactifications on Calabi--Yau threefolds give rise
to ${\cal N} = 2$ effective supergravity theories in four
dimensions, and if in addition 3-form fluxes are turned on in Type IIB theory 
one can break the supersymmetry to get an ${\cal N} = 1$ effective
supergravity theory~\cite{MR2920151}.  Very important quantities attached to these 
effective theories can be given in terms of the Calabi--Yau periods. In particular, the central charges 
and  masses  of the ${\cal N}=2$  BPS states, the gauge kinetic terms of the vector multiplets in ${\cal N}=2$ 
theories and the flux super potential and its vacua are determined by the periods.  Moreover, in 
the  $\Pi$ stability conditions~\cite{MR2171358}\cite{MR2373143}  and wall crossing formulas~\cite{MR3330788} 
the phases of the centrals charges ultimately determine which are the stable states.  An important source of insights from physics are  the partially proven {\sl  mirror symmetry 
conjectures}: 
\begin{itemize}
\item  The conjectured equivalence between the $(2,2)$ supersymmetric non-linear sigma models 
on the worldsheet of type II string  theory compactified on $M$ and $W$. This isomorphism 
exchanges the $h^{1,1}(M)$  marginal deformations that correspond  to the complexified K\"ahler structure deformations of $M$ with the $h^{2,1}(W)$ marginal deformations 
that correspond to the complex structure deformations of $W$ and vice versa~\cite{Lerche:1989uy}.
\item The conjectured equivalence between type IIA theory on $M$ and type IIB theory on $W$ and vice versa~\cite{MR1429831}.
It goes beyond the first conjecture as it also exchanges the
non-perturbative even--dimensional D--branes\footnote{These are called
  $B$--branes and the special Lagrangian 3-branes, which are objects of  the Fukaya category,  are called A--branes, because  
they are natural boundary conditions for the topological B--  and  A--model,  respectively.} of type IIA 
with  the odd ones  of type IIB or in mathematical terms $D^b({\rm Coh}(M))$, the bounded 
derived category of coherent sheaves on $M$, with  $D^\pi({\rm Fuk}(W))$, the bounded derived 
Fukaya category on $W$, and vice versa~\cite{MR1403918}.           
\end{itemize}

It has been first suggested  in~\cite{Moore:1998pn} that the arithmetic of the periods at 
special points play an important role for the properties of these compactifictions. For example the rank two 
black hole attractor points and supersymmetric minima of flux superpotentials can be characterized 
by their arithmetic properties. The families of hypersurfaces in (weighted) projective spaces $X_d(w_1,\ldots w_5)$ (i.e.\ the 
four models with $d=5,6,8$ and $10$) have  a special fiber 
$\sum_{i=1}^5 x_i^{d/w_i}=0$, in which  the $(2,2)$ supersymmetric non-linear sigma models 
on the  worldsheet is conjectured by Doron Gepner to be described exactly by a rational conformal field theory~\cite{Gepner:1987vz}.
The corresponding point in the complex structure moduli space ${\cal M}_{\text{cs}}$ is referred to as the Gepner point. It has been 
speculated in~\cite{MR2044895} that generally points in ${\cal M}_{\text{cs}}$ at which a rational conformal field 
theory descriptions exist have interesting  arithmetic properties and might be related to 
analogues of points of complex multiplication of elliptic curves~\cite{MR2510071}.

\paragraph{Central charges and masses of D-branes}
\label{ssec:BPSbranemasses}
For  $\Gamma\in H_3(W,\mathbb{Z})$  the D3--branes of Type IIB  wrapping a special Lagrangian in this class 
give rise to a BPS state with electro-magnetic charge $\Gamma$ in the effective  ${\cal  N}\; = \;2$ four--dimensional theory.  
These BPS states have a central charge which is given entirely in terms of periods of $W$
\be
Z(z, \Gamma) \, = \, e^\frac{K}{2} \int_\Gamma \Omega_z \, = \, e^\frac{K}{2} \Pi_\Gamma
\label{centralcharge-bmodel}
\ee
with the K\"ahler potential as defined in (\ref{Kaehlerpot}). 
Note that they are moduli dependent  and determine the masses of the BPS states  non-perturbatively  exact  as
\be
m_\Gamma(z) \, = \, | Z(z, \Gamma)| \ .
\ee
From our analysis of the periods in  the integral basis we will get the central 
charges and the masses of BPS D-branes at the conifold and the attractor points  
in terms of formulas with a strongly number theoretical flavor.

\paragraph{Central charges at the MUM point and the $\boldsymbol{\hat \Gamma}$-class conjecture}
\label{ssec:hatGamma}

The second mirror conjecture relates these D3--brane states in $D^\pi({\rm Fuk}(W))$, specified by the class $\Gamma$,
to the even--dimensional D$2k$--branes, $k=0,\ldots, 3$, viewed as objects in $D^b({\rm Coh}(M))$ specified by a class  ${\frak G}$ in the  algebraic $K$-theory group $K^0_{\text{alg}}$. In the large volume limit, the latter can be understood 
in terms of classical algebraic  geometry. Mirror symmetry induces an isomorphism  
\begin{equation} 
{\frak M}: \, H_{3}(W)\rightarrow  K^0_{\text{alg}}(M)\ .
\end{equation} 
This can be  used  to relate the central 
charge given in  formula (\ref{centralcharge-bmodel}) for the odd--dimensional D--branes at the MUM-point on $W$  
to one that is derived using classical properties of even--dimensional
D--branes on $M$. Under the map $\frak{M}$  the pairing $(\Gamma,
\Gamma')$ on $D^\pi({\rm Fuk}(W))$ induced from the one on
$H_3(M,\mathbb{Z})$ was first identified with the Euler pairing
$(\frak{G},\frak{G}')=\int_M {\rm Td}(TM)  {\rm ch}( {\frak{G}}^\vee) {\rm ch}(
{\frak{G}}')$ on $K^0_{\text{alg}}(M)$. Here, the Todd class Td is the
multiplicative characteristic class generated by $x/(1-e^{-x})$. 
However, it was realized in~\cite{MR2282969}\cite{MR2683208}\cite{MR2483750}\cite{MR3536989} that the natural analogue 
of $(\Gamma, \Gamma')$ that also makes contact with the central charge formula (\ref{centralcharge-bmodel}),~(\ref{periodI}) 
is obtained by taking the square root of the Todd class. Noting that
$\Gamma (1+x/(2 \pi i)) \,  \Gamma(1-x/(2 \pi i))=e^{-x/2}
x/(1-e^{-x})$ one can hence take  $e^{x/4} \, \Gamma(1-x/2 \pi i)$
as generating function for the $\hat \Gamma$ class. 
Expansion in terms of Chern classes of the tangent bundle of the
Calabi-Yau threefold\footnote{The $e^\frac{x}{4}$ factor can be
  omitted for Calabi-Yau manifolds as it gives a trivial
  contribution. The Euler--Mascheroni contribution in the
  expansion of the $\Gamma$-function vanishes for the same reason.}   $M$ gives   
\begin{equation}
 \hat \Gamma(TM)\; = \;1+\frac{1}{24} c_2+\frac{i c_3 \zeta(3)}{8 \pi ^3}\ .
\end{equation}
 The natural pairing becomes $(\frak{G} ,\frak{G}') =\int_M {\overline {\psi({\frak G}^\vee)}} \psi({\frak G}')$  with
 $\psi(\frak{G})= \hat \Gamma(TM)\cdot {\rm ch}({\frak{G}})$. The operation $\overline {\psi({\frak G})}$ gives a 
 sign $(-1)^k$ on elements of weight $2k$.

In the following we restrict to one--parameter families $W$. In the large volume limit of $M$, which corresponds to a MUM-point of $W$, one can calculate~\cite{MR2282969}\cite{MR2683208}\cite{MR2483750}\cite{Bizet:2014uua}\cite{MR3536989}
  \begin{equation} 
  \check \Pi_{\frak{G}}({{t}}) =\int_{M} e^{{ \omega} {t}} \, \widehat \Gamma(TM){\rm  ch}({\frak{G}}) +  O(Q)\ .
\label{K-theorycharge} 
 \end{equation} 
Here the check on ${\check \Pi}_{\frak{G}}={\check \Pi}_{\frak{M}(\Gamma)}$ indicates that relative 
to $\Pi_\Gamma$ we made the usual large radius gauge choice
$X^0=1$. The $K$-theory class $D_6$ of the D6--brane is 
 given in terms of the structure sheaf ${\cal O}_M$ by $D_6=[{\cal O}_M]={\frak M}(B^0)$ 
and  ${\rm ch} ({\cal O}_M)=1$, where $B^0$ is the homology class of
the vanishing cycle $S^3$. In the
coordinate $t$ given by the mirror map~(\ref{eq:mirrormap}) and with
the K\"ahler class $\omega=Dt$, where $D$ corresponds to the
restriction of the hyperplane class to $M$, we get 
\begin{equation} 
\check \Pi_{D_6}(t) \; = \; \int_{M} \left(\frac{\omega^3 t^3}{6}  + \frac{1}{24} t \omega   c_2 + \frac{i  c_3 \zeta(3)}{8 \pi^3}\right)+O(Q)\; = \;2 {\cal F}- t \partial_t {\cal F} \ . 
\label{gammalargevolume} 
\end{equation}    
The D0--brane is given by the skyscraper sheaf ${\cal O}_{pt}$ with  $D_0=[{\cal O}_{pt}]={\frak M}(A_0)$ 
and  $\Pi_{{\cal O}_{pt}}=1$, where $A_0$ is the class dual to $B_0$,
see \cite{MR2683208}. Following \cite{MR2683208} we can also identify
the D2--brane with a $K$-theory class $D_2$ of a sheaf supported on a genus zero curve ${\cal C}$ and  the D4--brane 
with the $K$-theory class $D_4$ of a sheaf supported on the restriction
of the hyperplane class $D$ to $M$. We can summarize  the mirror symmetry identification  
and the special geometry relation  with ${\cal F}$ given in (\ref{prepotinst}) 
\begin{equation}
\left(\begin{array}{c} \check \Pi_{D_6} \\ \check \Pi_{D_4} \\ \check \Pi_{D_0} \\ \check \Pi_{D_2} \end{array}\right) \, = \, 
X_0^{-1}\left(\begin{array}{c} \int_{A_0} \Omega\\ \int_{A_1} \Omega\\ \int_{B^0} \Omega \\ \int_{B^1} \Omega \end{array}\right)   
\, = \, \left(\begin{array}{c}\!\! 2 {\cal F}-t \partial_t {\cal F}  \\ \partial_t {\cal F}  \\ 1  \\ t \end{array}\right) \ . 
\label{periodsII}
\ee
The central charges  and masses  of the even--dimensional D--branes are defined in a gauge independent 
way as  
\be
Z(t,D_{2k}) \, = \, e^{K/2} \Pi_{\frak{M}^{-1}(D_{2k})} \, = \, Z(z,\frak{M}^{-1}(D_{2k})), \qquad m_{\text{D}{2k}}=| Z(t, D_{2k})| \ ,
\ee 
in a way that extends over the full deformation space ${\cal M}_{\text{cs}}(W)$, which is identified 
with the stringy K\"ahler moduli space ${\cal M}_{\text{ks}}(M)$  of $M$.  
By this mirror symmetry identification, we do not only get a natural integral structure 
at the MUM-point, but we can also study the masses of the odd--  and the even--dimensional BPS 
branes globally  in their  moduli space.   

\paragraph{D-branes at the conifold}
\label{ssec:BPSbraneconifold}
Geometrically one can see from \eqref{constraintMinv} or more indirectly from 
\eqref{riemannsymbolgeneral} that at the conifold points $z= \mu$  
the families of the mirror manifolds $W$ develop a node, i.e.\  a singularity at which an 
$S^3$ that represents  a class in $H_3(W,\mathbb{Z})$ shrinks to zero size. 
From the Lefschetz monodromy theorem one gets the monodromy $M_\mu$ \eqref{Mc} 
assuming that the $S^3$ represents a primitive class. From  \eqref{periodsII},~\eqref{Tmc}, 
we see that the class of the $S^3$ corresponds to the class  of the D6--brane on the 
mirror. This implies that $m_{\text{D}6}$ vanishes at the conifold   while the other D--brane
masses take arithmetically interesting values there. 

Defining the K\"ahler potential $K(z)$ according to (\ref{Kaehlerpot}), we see 
that its value is  {\sl exactly } given in terms of the  periods  in the integral  symplectic basis. In particular, from  (\ref{Kaehlerpot}),~(\ref{Tmc})
follows that at the conifold 
\be 
e^{\frac{K(\mu)}{2}}\; = \;\frac{1}{\sqrt{ 2 i w^+ w^-}} \ .
\ee
This implies that the masses of the even--dimensional type IIA BPS D-branes defined at the large radius are 
given at the conifold in terms of the entries of the transition matrix $T_\mu$ as follows\footnote{If $\kappa$ is even 
one can set $\sigma = 0$ and obtains the relation $m_{\mathrm{D}2}(\mu) m_{\mathrm{D}4}(\mu)=1/2$.}
 \be
\left(\begin{array}{c}  
m_{\mathrm{D}6}(\mu)\\ [5 mm]   
m_{\mathrm{D}4}(\mu)\\ [5 mm]   
m_{\mathrm{D}0}(\mu)\\ [5 mm] 
m_{\mathrm{D}2}(\mu)\\ \end{array}\right) \; = \; 
\left(\begin{array}{c} 0\\ [2 mm] 
\displaystyle{\frac{\left|\sigma w^++w^-\right|}{\sqrt{2 i w^+ w^-}}} \\ [3.5 mm ] 
\displaystyle{\frac{|b|}{\sqrt{2 i w^+ w^-}}}  \\ [3.5 mm] 
\displaystyle{\frac{ w^+ }{\sqrt{2 i w^+ w^- }}}   \end{array}\right) \ . 
\label{BPSmasses}
\ee
It is remarkable that the physical values of the masses of the D2-- and D4--branes at the conifold are determined in such a simple form in terms of $w^\pm$,
i.e.\ those numbers that are rational multiples of periods $\omega_f^\pm$ of the associated newform $f \in S_4(\Gamma(N))$ specified in Table~\ref{tab:holFormsConifold}. 
The numerical values of the $w^\pm$ can be found in Table~\ref{Tmcnumerical} and the 
relations to the periods  $\omega_f^\pm$ in Table \ref{tab:ResultsConifold}. Alternatively, the periods of $f$ can always be expressed by values at $s=1,2$ of the associated $L$-function or its twists by Dirichlet characters given by
\be 
{ L}(f\otimes \chi,s) \, = \, \sum_{n=1}^\infty a_n \chi(n) n^{-s}\, =\, \prod_{p\ {\rm prime}} \frac{1}{1-a_p p^{-s} + \chi(p) p^{4-1-2s}}\, , 
\ee  
for $\text{Re } s > 3$. Here the numbers $a_n$ are the Fourier coefficients of the newform $f$, which  
for primes $p$ coprime to $N$ are just the Hecke eigenvalues. 
For example, for the quintic $f$ is given in \eqref{chadschoen} and we have
\be 
w^+\; =\; -5 \, (2 \pi i)^2 \, { L}(f,1),  \qquad w^-\;=\; -\frac{625}{4} \, (2 \pi i) \, {L}(f,2)\, .
\ee
Such equations for $w^\pm$ relating the exact values of the D-brane masses to special values of $L$-functions of weight four newforms extend to conifold points of non-hypergeometric one-parameter models and also to rank two attractor points.

As reviewed in Section \ref{sec:local-global-zeta}, the Hecke 
eigenvalues encode the information of the point count over finite 
fields in the associated fiber.  This makes it likely that there is an interpretation of the 
mass of the BPS D--branes in terms of states that are related to the point count in 
that geometry, because the way the values of the masses are calculated using the L--functions  
resembles  the  calculation of regularized determinants or one--loop BPS saturated amplitudes  
like in the Schwinger loop amplitude that lead Gopakumar and Vafa  to the definition 
of the D2--D0--brane bound states at the large volume point~\cite{Gopakumar:1998jq}.  
It is tempting to speculate  that the analogues of the integer BPS invariants  in the GV calculation 
could be interpreted as the numbers of points in the Calabi-Yau fiber over finite number 
fields. The masses of the D-branes are  directly relevant  physical quantities in the 
low energy supergravity theory.  In particular,  the fact that the D6--brane becomes 
massless is physically  interpreted  famously  by Strominger~\cite{Strominger:1995cz} 
as the occurrence of a massless black hole in analogy to the massless monopole that 
occurs in ${\cal N} = 2$  super Yang-Mills theory as predicted by Seiberg and Witten~\cite{Seiberg:1994rs}. 
Its gravitational one loop $\beta$-function induces similarly as in the Yang-Mills theory a  monodromy 
around the conifold.  The quantity $|b|$ that determines the mass of the D0--brane at the conifold 
\be 
m_{\mathrm{D}0}\; = \;\frac{|b|}{\sqrt{ 2 i w^+ w^- }}
\ee
is also relevant for the low energy effective action. If $b$ would have been zero the latter 
would eventually have no local Lagrangian description, because electrically and magnetically 
charged  D-- branes could become massless. This is the scenario that occurs  
the case of rank two attractors as discussed in more detail in~\cite{AAKM22}.       
With the  exception of the $X_{2,2,2,2}(1^8)$ model for which the full transition 
matrix $T_\mu$ is given in  \eqref{eq:TmuN=8}, we have so far only the obvious 
expression  of  $b = X^0(\mu ) =  (2 \pi i)^3 \varpi_0(\mu )$  as the (slowly converging) series coming from the solution that is holomorphic at the MUM-point. 

\paragraph{Exact metric and curvature at the conifold and growth of
  instanton numbers}
\label{par:metric}

Of relevance to the growth of the BPS invariants $n^\beta_g$ of holomorphic curves and the entropy of 
microscopic black hole states is the value of the complexified K\"ahler parameter (\ref{eq:mirrormap}) at the 
conifold. The latter  is given according to the transition matrix $T_\mu$ in (\ref{Tmc}) and the leading behaviour of the
Frobenius solutions at the conifold (\ref{quinticconifoldsolutions}) by   
\begin{equation} 
t(\mu)\; = \; \frac{X^1}{X^0} \; =\; \frac{w^+}{b}=:i \frak{c} \ .
\end{equation}  
The value $\frak{c}$ determines the leading exponential growth of the 
$n^\beta_g$ for large degree $\beta$  at genus $g\; = \;0$~\cite{Candelas:1990rm}  as well as at higher 
genus $g$~\cite{Bershadsky:1993cx} by 
\be
n^\beta_g \, \sim \, a_g\,  \beta^{ 2g -3} (\log \beta )^{2g-2}  e^{2 \pi \; \frak{c}\; \beta}\ .  
\label{asymgrow}
\ee
Here $a_g$ is a constant depending on $\frak{c}$ and $|b|$~\cite{Candelas:1990rm}\cite{Cecotti:2018ufg}\cite{Klemm:1999gm}.
While  $w^+$ can be always related to a period of the holomorphic cusp form $f$ called $\omega_f^+$, 
see Table \ref{tab:ResultsConifold} (with a proof for $N=8$ given in Section~\ref{sec:Correspondence}), 
the value $b$ of the central charge of the D0--brane at the conifold is in general not well 
understood. However, for the $N=8$ case there is a precise conjecture given in \eqref{eq:TmuN=8} to the effect that
\be 
b_{N=8}=-32 \omega_f^-,\qquad \frak{c}_{N=8}=i \frac{\omega^+_f}{\omega^-_f} \, ,    
\ee
making this the first compact case where the exponential growth of the worldsheet instanton 
numbers of the mirror is exactly determined in terms of the arithmetic numbers given by the 
periods of newforms. 

On the other hand, the leading order behaviour of  the metric  and  the scalar curvature  at the conifold is encoded for all 
models in terms of $w^\pm$ 
\be 
g_{\delta \bar \delta} \, \sim \, - \frac{4\pi^3 \kappa \log |\delta|}{|w^+ w^-|},\qquad  R \sim - \frac{ |w^+ w^-|}{2\pi^3 \kappa |\delta|^2 \log^3 |\delta|}\ ,
\ee  
i.e.\ the numbers related to the periods of the associated newforms  or their $L$-function values as explained above.

\paragraph{Properties of the effective action for special  fibers}
As reviewed in~\cite{Moore:1998pn}, the attractor flow equations for 
charged ${\cal N}=2$ black holes  specify subloci of 
the vector multiplet moduli fields, at which the flow can end. The  vector moduli  
parametrize the complex structure moduli space ${\cal M}_{\text{cs}}(W)$ of the  Calabi-Yau 
family in the type IIB compactification. The main observations in~\cite{Moore:1998pn} 
is that the subloci specified by the so called attractor equations have interesting arithmetic 
and Hodge theoretic properties. Moreover the constraints imposed by the attractor equations 
are very similar to the conditions for  supersymmetric flux vacua that likewise 
occur  at restricted subloci of the moduli in ${\cal M}_{\text{cs}}$ of Type IIB 
compactifications  with  $F_3,G_3\in H^3(W,\mathbb{Z})$ flux backgrounds.  
At rank two attractor points $z_0\in {\cal M}_{cs}$ the lattice 
$H^{3,0}(W_{z_0})\oplus H^{0,3}(W_{z_0}))\cap H^3(W,\mathbb{Z})$ has 
rank two and, as reviewed in~\cite{Kachru:2020sio}\cite{AAKM22}, for one-parameter models the condition for rank two attractors and ${\cal N}=2$ supersymmetric 
flux vacua are equivalent.   It is remarkable that despite the possibility to 
follow  the attractor flow lines  the first rank two attractor point has been 
found  using arithmetic methods~\cite{Candelas:2019llw}. 
In this paper we present two rank two attractor points for hypergeometric Calabi-Yau families and a discussion of the effective action and its 
$C$-- and $P$--symmetries can be found in~\cite{AAKM22}.  Interesting  observations 
concerning the theta-angle in the gauge kinetic term of the graviphoton for 
rigid Calabi-Yau compactifications have bee made in~\cite{Cecotti:2018ufg}. In~\cite{AAKM22} these are extended to rank two attractor points.

\section{Special fibers and periods of modular forms}
\label{sec:SpecialFibers}
In this section we present as one main result the comparison between the period matrix of special fibers of the hypergeometric one-parameter families of Calabi-Yau manifolds and the periods and quasiperiods of associated modular forms. We start by discussing the special fibers we consider.

Let $W$ be any of the fourteen hypergeometric one-parameter families of Calabi-Yau threefolds. If we choose $z$ so that $W_z$ is smooth and defined over $\mathbb{Q}$ we can compute for all primes $p$ of good reduction the local zeta function $Z(W_z/\mathbb{F}_p,T)$. From the Weil conjectures (together with the positive sign in the relevant functional equation) it follows that the numerator of the local zeta function is completely determined by the action of the Galois group on the middle cohomology and has the form
\begin{align}
  P_3(W_z/\mathbb{F}_p,T) \, = \,  \det(1-T\text{Fr}_p^* | H^3(\overline{W_z},\mathbb{Q}_\ell)) \, = \,  1+\alpha_pT+\beta_ppT^2+\alpha_pp^3T^2+p^6T^4
\end{align}
for integers $\alpha_p$ and $\beta_p$. We are interested in special fibers where the motive attached to the middle cohomology splits. This can happen for example when $z=z_*$ is a rank 2 attractor point, i.e. if $H^3(W_{z_*},\mathbb{Q}) = \Lambda \oplus \Lambda_\perp$ where $\Lambda \subset H^{3,0}(W_{z_*}) \oplus H^{0,3}(W_{z_*})$ and $\Lambda_\perp \subset H^{2,1}(W_{z_*}) \oplus H^{1,2}(W_{z_*})$. For one-parameter families of Calabi-Yau threefolds, a beautiful method for finding such points is given in \cite{Candelas:2019llw} and \cite{Candelas:2021tqt}. Hodge-like conjectures would then imply that $\Lambda$ and $\Lambda_\perp$ are 2-dimensional motives and, as explained in \cite{MR2785550}, the Serre–Khare–Wintenberger theorem implies that the Galois actions on these motives are associated with newforms $f$ and $g$ of weight 4 and 2. Practically speaking this means that we get a factorization
\begin{align}
  P_3(W_{z_*}/\mathbb{F}_p,T) \, = \,  (1-a_pT+p^3T^2)(1-b_p(pT)+p(pT)^2)
\end{align}
where $a_p$ and $b_p$ are the Hecke eigenvalues of $f$ and $g$. We also expect that the period matrix of $H^3(W_{z_*},\mathbb{Q})$ can be completely expressed in terms of the periods and quasiperiods of $f$ and $g$. In ~\ref{sec:attractor} we numerically verify this for rank 2 attractor points that appear in two hypergeometric families.

Another special point is the conifold fiber $W_\mu$. This fiber is not smooth but one can still compute the local zeta function which will again be a rational function. It was observed in \cite{Villegas} that the numerator of the local zeta function then has a factor
\begin{align}
  (1-\chi(p)pT)(1-a_pT+p^3T^2)
  \label{eq:FactorizationConifold}
\end{align}
where $\chi(p) = (\frac{\kappa}{p})$ and the numbers $a_p$ are the Hecke eigenvalues of a weight 4 newform $f$. The logic behind this is that the fiber $W_\mu$ can be resolved to give a rigid Calabi-Yau threefold $\widehat{W_\mu}$ (i.e. $h^{2,1}(\widehat{W_\mu})=0$) and again by the Serre–Khare–Wintenberger theorem the Galois action on $H^3(\widehat{W_\mu},\mathbb{Q})$ is associated with a weight 4 newform $f$. We also expect that the period matrix of $H^3(\widehat{W_\mu},\mathbb{Q})$ can be completely expressed in terms of the periods and quasiperiods of $f$. In Section~\ref{sec:conifold} we numerically check this for each of the hypergeometric families and also identify other entries of the rank 4 period matrix of $W_\mu$. Numerically the occurrence of the periods of $f$ was already studied in~\cite{MR3844465}. For completeness we comment on the structure of the period matrix of $W_\infty$, too.

\subsection{The period matrix at the conifold points}  
\label{sec:conifold}
Let $W$ be any of the fourteen hypergeometric one-parameter families of Calabi-Yau manifolds. The generic conifold fiber $W_\mu$  is located at $z = \mu$.
From the Riemann symbol (\ref{riemannsymbolgeneral}) we can read off that the local exponents at this point are $0,1,1,2$ and hence a local basis of solutions consists of three 
power series starting with order $0,1,2$ and one logarithmic solution. We choose the basis such that
\be
\Pi_\mu(z)\, = \, \left(
\begin{array}{l} 
1+O\left(\delta^3\right)\\ 
\nu(\delta)\\ 
\delta^2+O\left(\delta^3\right)\\ 
\nu(\delta) \log (\delta)+O\left(\delta^3\right)
\end{array}
\right)\, ,
\label{quinticconifoldsolutions}
\ee
where $\delta = 1- z/\mu$ and $\nu(\delta) = \delta+O\left(\delta^2\right)$.

The period matrix $T_\mu$ that relates the integral symplectic basis $\Pi$ defined in (\ref{periodI}) and the basis $\Pi_\mu$ at the conifold by $\Pi= T_\mu \Pi_\mu$ can be numerically computed by analytically continuing the periods from $z=0$ to $z=\mu$ and depends on the chosen path of analytic continuation. We choose the path along the open interval $(0,\mu)$. The structure of the period matrix is further constrained from
\begin{align}
  T_\mu^T \Sigma T_\mu \, = \, \Sigma_\mu \ , \qquad T_\mu \left(
  \begin{array}{cccc}
    1 & 0 & 0 & 0 \\
    0 & 1 & 0 & 0 \\
    0 & 0 & 1 & 0 \\
    0 & 2 \pi i  & 0 & 1 \\
  \end{array}
  \right)
  T_\mu^{-1} \, = \, M_\mu
\end{align}
and from the fact that the first derivative of $\int_{S^3}\Omega$ evaluated at the conifold can be calculated explicitly  using the description of the vanishing $S^3$ in the conifold 
geometry, see \cite{Candelas:1990rm}  for the quintic and \cite{Klemm:1992tx} for the hypersurfaces in 
weighted projective spaces. From these considerations one finds that for all hypergeometric one-parameter families the period matrix of the conifold fiber is of the form
\be
T_\mu \, = \, \left(
\begin{array}{cccc}
 0 & \sqrt{\kappa}(2\pi i)^2 & 0 & 0 \\
 \sigma w^++ w^-    & \sigma a^++ a^-  & \sigma e^++  e^-  & 0 \\
 b & c & d & -\sqrt{\kappa}2\pi i \\
 w^+ & a^+ & e^+ & 0 \\
\end{array}
\right) 
\label{Tmc}
\ee  
with $\kappa$  in Table \ref{Table:topdataone} and $\sigma$  as in \eqref{eq:sigma}. Here $w^+,a^+,e^+$ are real and $w^-,a^-,e^-,b,c,d$ are purely imaginary and these nine numerical constants fulfill the quadratic relations
\begin{align}
  w^+ e^-  -w^- e^+\, &= \, -(2\pi i)^3\frac{\kappa}{2} \label{eq:FirstQuadraticRelation} \\
  w^+ a^-  -w^- a^+\, &= \, -(2\pi i)^2\sqrt{\kappa}b \\
  a^+ e^- - a^- e^+\, &= \, (2\pi i)^3\kappa \alpha+(2\pi i)^2\sqrt{\kappa}d \, .
\end{align}
Here $\alpha\in \mathbb{Q}$ is given in  (\ref{alpha}) in terms of the $a_i$, $i\; = \;1,\ldots,4$.      
The remaining constants can be calculated numerically to a very high precision~\footnote{We have calculated the numerical values to 1000 digits 
to check the conjectures and computing to higher accuracy can be done
without any problems.} and the approximate values are given in
Table~\ref{Tmcnumerical}. Closed analytic expressions in terms of
infinite sums of special values of hypergeometric functions
${}_3F_{2}$ have been derived for all the constants $w^{\pm}$,
$a^{\pm}$, $e^{\pm}$, $b,c,d$ in~\cite{Scheidegger:2016ysn} (see
also~\cite{Knapp:2016rec}). For example, for the quintic we have
\begin{equation}
  \label{eq:35}
  w^+ \, = \, \sqrt{5}     \Gamma(\tfrac{1}{5})^2\Gamma(\tfrac{2}{5})^2\Gamma(\tfrac{4}{5})\sum_{\ell=0}^\infty
    \frac{(\frac{1}{5})_\ell(\frac{2}{5})_\ell}{(\frac{3}{5})_\ell
      \ell!}\Hyp{-\ell,\frac{3}{5},\frac{4}{5}}{1,1}{1}
\end{equation}
where $(a)_\ell = \frac{\Gamma(a+\ell)}{\Gamma(a)}$ denotes the
  Pochhammer symbol. 

\begin{table}[h!]
  \centering
  \resizebox{\textwidth}{!}{%
  $
  \begin{array}{|c|cccccc|}
    \hline
    N & \begin{array}{l} w^+ \\ w^- \end{array} & \begin{array}{l} a^+ \\ a^- \end{array} & \begin{array}{l} e^+ \\ e^- \end{array} & b & c & d \\ \hline 
    8 & \begin{array}{l} \hphantom{-}223.9220165 \\ -2219.823957 i \end{array} & \begin{array}{l} \hphantom{-}36.24129533 \\ -554.9559892 i \end{array} & \begin{array}{l} -8.167934872 \\ \hphantom{-}89.83386058 i \end{array} & -277.4779946 i & 8.025308746 i & -1.337138042 i \\ \hline
    9 & \begin{array}{l} \hphantom{-}297.7398851 \\ -1547.101826 i \end{array} & \begin{array}{l} \hphantom{-}37.23153237 \\ -280.2254923 i \end{array} & \begin{array}{l} -6.996235045 \\ \hphantom{-}38.85283506 i \end{array} & -267.1438008 i & 6.529815239 i & -1.498527304 i \\ \hline
    16 & \begin{array}{l} \hphantom{-}277.4779946 \\ -1791.376132 i \end{array} & \begin{array}{l} \hphantom{-}36.98123771 \\ -347.2828377 i \end{array} & \begin{array}{l} -7.392164711 \\ \hphantom{-}51.29901671 i \end{array} & -269.7075585 i & 6.955740334 i & -1.496470803 i \\ \hline
    25 & \begin{array}{l} \hphantom{-}320.8713030 \\ -1536.675110 i \end{array} & \begin{array}{l} \hphantom{-}37.39771091 \\ -252.1690170 i \end{array} & \begin{array}{l} -6.893856185 \\ \hphantom{-}34.94778947 i \end{array} & -265.5937802 i & 6.128728878 i & -1.434849337 i \\ \hline
    27 & \begin{array}{l} \hphantom{-}264.2581920 \\ -1792.615238 i \end{array} & \begin{array}{l} \hphantom{-}36.85667438 \\ -371.4391414 i \end{array} & \begin{array}{l} -7.459278462 \\ \hphantom{-}54.82457171 i \end{array} & -270.9159568 i & 7.220828893 i & -1.521267495 i \\ \hline
    32 & \begin{array}{l} \hphantom{-}331.3076700 \\ -1284.846229 i \end{array} & \begin{array}{l} \hphantom{-}37.55313067 \\ -208.5502647 i \end{array} & \begin{array}{l} -6.496695001 \\ \hphantom{-}26.69227218 i \end{array} & -263.9961931 i & 5.860894359 i & -1.425906613 i \\ \hline
    36 & \begin{array}{l} \hphantom{-}244.0637177 \\ -2017.155303 i \end{array} & \begin{array}{l} \hphantom{-}36.56502757 \\ -455.7307460 i \end{array} & \begin{array}{l} -7.825592526 \\ \hphantom{-}70.77552085 i \end{array} & -273.9891366 i & 7.637961074 i & -1.460867809 i \\ \hline
    72 & \begin{array}{l} \hphantom{-}351.9173326 \\ -1521.650120 i \end{array} & \begin{array}{l} \hphantom{-}37.58661259 \\ -221.7200905 i \end{array} & \begin{array}{l} -6.774643286 \\ \hphantom{-}30.70248663 i \end{array} & -263.8589705 i & 5.654643418 i & -1.351202495 i \\ \hline
    108 & \begin{array}{l} \hphantom{-}372.2764430 \\ -1261.892271 i \end{array} & \begin{array}{l} \hphantom{-}37.77656510 \\ -176.1713832 i \end{array} & \begin{array}{l} -6.337937329 \\ \hphantom{-}22.48294139 i \end{array} & -261.9897714 i & 5.266252962 i & -1.306509091 i \\ \hline
    128 & \begin{array}{l} \hphantom{-}439.9947243 \\ -1228.327316 i \end{array} & \begin{array}{l} \hphantom{-}38.05274079 \\ -139.1677051 i \end{array} & \begin{array}{l} -6.134259511 \\ \hphantom{-}17.68868648 i \end{array} & -259.5665046 i & 4.496718823 i & -1.138914049 i \\ \hline
    144 & \begin{array}{l} \hphantom{-}405.9683199 \\ -988.7259810 i \end{array} & \begin{array}{l} \hphantom{-}38.02051419 \\ -128.3125516 i \end{array} & \begin{array}{l} -5.788326330 \\ \hphantom{-}14.70833696 i \end{array} & -259.6941350 i & 4.686199444 i & -1.202518075 i \\ \hline
    200 & \begin{array}{l} \hphantom{-}538.2249038 \\ -932.1141418 i \end{array} & \begin{array}{l} \hphantom{-}38.40650813 \\ -85.32277915 i \end{array} & \begin{array}{l} -5.453031198 \\ \hphantom{-}9.674157709 i \end{array} & -256.4336628 i & 3.531095905 i & -0.9247888950 i \\ \hline
    216 & \begin{array}{l} \hphantom{-}480.8077208 \\ -690.4500218 i \end{array} & \begin{array}{l} \hphantom{-}38.37482290 \\ -76.17241925 i \end{array} & \begin{array}{l} -5.011615709 \\ \hphantom{-}7.454737366 i \end{array} & -256.5551153 i & 3.709287484 i & -0.9799750828 i \\ \hline
    864 & \begin{array}{l} \hphantom{-}590.1833073 \\ -1174.061806 i \end{array} & \begin{array}{l} \hphantom{-}38.43332195 \\ -93.60241425 i \end{array} & \begin{array}{l} -5.839377416 \\ \hphantom{-}11.82652070 i \end{array} & -256.3296014 i & 3.377681763 i & -0.8754559806 i \\
    \hline
  \end{array}
  $
  }
  \caption{The approximate values of the constants in the period matrix $T_\mu$ (\ref{Tmc}).}
\label{Tmcnumerical}
\end{table}

Now consider the weight 4 newform $f$ that we can associate with each
conifold fiber. In Appendix~\ref{sec:modformsperiods} we explain how one can compute two periods $\omega_f^\pm$ associated with $f$ and the approximate values are given in Appendix~\ref{sec:Computations}. Numerically we find that these are up multiplication by rational numbers equal to $w^\pm$. In Appendix~\ref{quasiperiods} we also explain how we can compute two quasiperiods $\eta_F^\pm$ associated with $f$ which are unique up to the addition of rational multiples of $\omega_f^\pm$. The approximate values are again given in Appendix~\ref{sec:Computations} and numerically we find that the quasiperiods can be chosen such that these are up to multiplication by rational numbers equal to $e^\pm$. The rational numbers which relate the entries of the period matrix to the periods and quasiperiods of $f$ are given in Table \ref{tab:ResultsConifold}. Note that the Legendre relation $\omega_f^+ \eta_F^--\omega_f^- \eta_F^+ = (2\pi i)^3$ corresponds to the quadratic relation (\ref{eq:FirstQuadraticRelation}).

\begin{table}[h!]
  \centering
  \begin{tabular}{|c|cccc|}
    \hline
    $N$ & $\frac{w^+}{\omega_f^+}$ & $\frac{w^-}{\omega_f^-}$ & $\frac{\eta_F^+}{e^+}$ & $\frac{\eta_F^-}{e^-}$\\ [1mm] \hline
    8 & $32$ & $-256$ & $32$ & $-4$ \\ \hline
    9 & $108$ & $-108$ & $36$ & $-36$ \\ \hline
    16 & $64$ & $-256$ & $64$ & $-16$ \\ \hline
    25 & $100$ & $-250$ & $100$ & $-40$ \\ \hline
    27 & $108$ & $-486$ & $108$ & $-24$ \\ \hline
    32 & $256$ & $-512$ & $256$ & $-128$ \\ \hline
    36 & $72$ & $-432$ & $72$ & $-12$ \\ \hline
    72 & $432$ & $-864$ & $432$ & $-216$ \\ \hline
    108 & $864$ & $-1296$ & $864$ & $-576$ \\ \hline
    128 & $1024$ & $-1024$ & $1024$ & $-1024$ \\ \hline
    144 & $1728$ & $-1728$ & $1728$ & $-1728$ \\ \hline
    200 & $8000$ & $-4000$ & $8000$ & $-16000$ \\ \hline
    216 & $5184$ & $-2592$ & $5184$ & $-10368$ \\ \hline
    864 & $20736$ & $-10368$ & $20736$ & $-41472$ \\ \hline
  \end{tabular}	
  \caption{Comparison of entries in period matrices and periods and quasiperiods of associated newforms.}
  \label{tab:ResultsConifold}
\end{table}

We finish the discussion of the period matrix at the conifold points with a comment about the $N=8$ case. For this case we prove in Section~\ref{sec:Correspondence} the occurrence of the periods and quasiperiods by constructing an explicit correspondence with a Kuga-Sato threefold. Numerically we further find that in this case the complete period matrix is given by

\begin{align}
  T_\mu \, = \, \left(
  \begin{array}{cccc}
    0 & 0 & 8 & 0 \\
    0 & 8 & 0 & 0 \\
    0 & 1 & 0 & 1 \\
    -1 & 0 & 1 & 0 \\
  \end{array}
  \right)
  \left(
  \begin{array}{cccc}
    \omega _f^+ & \eta _F^+ & 0 & 0 \\
    \omega _f^- & \eta _F^- & 0 & 0 \\
    0 & 0 & (2\pi i)^2 & 0 \\
    0 & 0 & 0 & 2\pi i  \\
  \end{array}
  \right)
  \left(
  \begin{array}{cccc}
    -32 & -8 & 0 & 0 \\
    0 & 0 & -\frac{1}{32} & 0 \\
    0 & \frac{1}{2} & 0 & 0 \\
    0 & 4+12\log 2 & -2 & -4 \\
  \end{array}
  \right) 
  \label{eq:TmuN=8}
\end{align}
where $\omega_f^\pm$ and $\eta_F^\pm$ are the periods and quasiperiods of the associated newform.

\subsection[The period matrix at $z=\infty$]{The period matrix at $\boldsymbol{z=\infty}$}
\label{sec:orbifold} 
The analytic continuation to the point $w=1/z =0$ can be done with a contour deformation 
argument using a Barnes integral representation. This can be done for all 14 hypergeometric models but here we just give the result for five examples.

Let us first consider the mirror quintic. Near $w= 0$ we chose a basis 
of solution $\Pi_\infty=(\omega_1,\omega_2,\omega_3,\omega_4)^T$  with  
\be        
\omega_k\, = \, w^\frac{k}{5} \sum_{n\; = \;0}^\infty \frac{\left(\frac{k}{5}\right)_n^5}{(k)_{5n}} w^n\, = \, -
\frac{\Gamma(k)}{\Gamma^5\left( \frac{k}{5}\right)}\int_{C_0} \frac{{\rm d} s }{e^{2 \pi i s}-1} 
\frac{\Gamma^5\left(s+\frac{k}{5}\right)}{\Gamma\left(5 s +k\right)} w^{s+\frac{k}{5}} \ ,   
\label{barnesrepresentation}
\ee
where the contour $C_0$ is along the $y$-axis (just left of it) and then closed clockwise in an infinite 
semicircle to the {\sl right} to include all poles at $s\in \mathbb{N}_0$.  $(x)_n=x(x+1)\cdots (x+n-1)$ is the ascending Pochhammer symbol. 
These solutions converge for $|w|\le 5^5$. If the contour is deformed to $C_\infty$ which is taken in the same way along the $y$-axis, but closed 
in an infinite semicircle counter clockwise to the {\sl left} to include all poles  $-5 s\in \mathbb{N}_+$ 
the expression converges for $|w|\ge 5^5$ and can be compared with 
$\Pi$ given  in (\ref{periodI}). For $T_\infty$ defined by
$\Pi= T_\infty\Pi_\infty$ this yields the analytic expressions   
\be
\label{Tom}
T_\infty^{-1} \, = \,\left(
\begin{array}{cccc}
 -\frac{16 \pi ^4 \alpha}{(\alpha-1) \Gamma \left(\frac{1}{5}\right)^5} & -\frac{16 \pi ^4 \alpha}{(\alpha-1)^2 \Gamma \left(\frac{1}{5}\right)^5} 
 &-\frac{16 \pi ^4}{\Gamma \left(\frac{1}{5}\right)^5}
 &\frac{16 \pi ^4 \alpha  (2\alpha+3)}{(\alpha-1)^3 \Gamma \left(\frac{1}{5}\right)^5} \\
 -\frac{16 \pi ^4 \alpha^2}{\left(\alpha^2-1\right) \Gamma \left(\frac{2}{5}\right)^5} & -\frac{16 \pi ^4 \alpha^2}{\left(\alpha^2-1\right)^2 \Gamma
   \left(\frac{2}{5}\right)^5} & 
   -\frac{16 \pi ^4}{\Gamma\left(\frac{2}{5}\right)^5}
   & \frac{16 \pi ^4 \alpha^2
    \left(2 \alpha^2+3\right)
    }{\left(\alpha^2-1\right)^3 \Gamma \left(\frac{2}{5}\right)^5} \\
 -\frac{32 \pi ^4 \alpha^3}{\left(\alpha^3-1\right) \Gamma \left(\frac{3}{5}\right)^5} & -\frac{32 \pi ^4 \alpha^3}{\left(\alpha^3-1\right)^2 \Gamma
   \left(\frac{3}{5}\right)^5} & -\frac{32 \pi ^4}{\Gamma  \left(\frac{3}{5}\right)^5}
   & \frac{32 \pi ^4 \alpha^3 
   \left(2 \alpha^3+3\right)
   }{\left(1-\alpha^3\right)^3 \Gamma \left(\frac{3}{5}\right)^5} \\
 -\frac{96 \pi ^4 \alpha^4}{\left(\alpha^4-1\right) \Gamma \left(\frac{4}{5}\right)^5} & -\frac{96 \pi ^4 \alpha^4}{\left(\alpha^4-1\right)^2 \Gamma
   \left(\frac{4}{5}\right)^5} & -\frac{96 \pi ^4}{\Gamma\left(\frac{4}{5}\right)^5}
   &\frac{96 \pi ^4 \alpha^4 
   \left(2 \alpha^4+3\right)
   }{\left(1-\alpha^4\right)^3 \Gamma \left(\frac{4}{5}\right)^5}
\end{array}\right) 
\ee
where $\alpha=\exp(2 \pi i/5)$. For more general structures of solutions at the point $z=\infty$ 
one notices that (\ref{diffgeneral}) reads $L_x=x \theta_x^4-\prod_{k=1}^4 (\theta_x-a_k)$  
in the coordinates $x=\mu/z$ with $\theta_x=x\frac{d}{dx}$. The solutions are special hypergeometric 
cases of the Meijer $G$--function as defined in~\cite{MR0058756} whose Barnes integral representation for all (four) solutions specializes  to 
\be 
G^{n,4}_{4,4}(x)=\frac{1}{2\pi i} \int_{\cal C}  \frac{{\Gamma(s)^4 \prod_{k=1}^n \Gamma(a_{\sigma(k)}-s) ((-1)^n x)^s}}{{ \prod_{k={n+1}}^4\Gamma(1-a_{\sigma(k)})}}\dd s \ .
\label{MeijerG}
\ee
Here $\sigma$ denotes a permutation in the four indices of the $a_k$.     
For $x>1$ the contour ${\cal C}$ is closed left to include the poles of the $\Gamma$ functions in the integrand on the 
negative $x$-axis. In particular the factor $\Gamma(s)^4$ produces  poles of maximal order $4$ and $((-1)^m x)^s$ has 
to be expanded in $s$ to pick the residue. This yields the logarithmic structure in $z$ at the MUM-point $z=\mu/x=0$. For $x\le 1$ the contour ${\cal C}$ is closed right to  include the poles of  $\Gamma(a_{\sigma(k)}-s)$ 
on the positive  $x$-axis.  
Together with $m$,  $\sigma$ has to be chosen to get the four solutions of the Picard-Fuchs 
equation at the point $x=0$. This choice depends on the nature of the singularity, which 
as explained above  is either an orbifold on top of a regular point, a 
conifold point, a K-point or a MUM-point. For regular points $n=1$, the $a_k$ are all different, 
and $\sigma $ are the four cyclic permutations of $k=1,\ldots,4$. For conifold points $n=1$ with 
$(a_1,a_2,a_2,a_3)$, $(a_2,a_1,a_2,a_3)$, $(a_3,a_1,a_2,a_2)$ and $n=2$ with $(a_2,a_2,a_1,a_3)$. 
For K-points $n=1$  with   $(a_1,a_1,a_2,a_2)$, $(a_2,a_1,a_1,a_2)$ and $n=2$ with $(a_1,a_1,a_2,a_2)$, $(a_2,a_2,a_1,a_1)$ 
and  for MUM-points $n=1,2,3,4$ all with  $(a_1,a_1,a_1,a_1)$. 

We give now the exact analytic continuations for four different types of models in turn. To fix the convention   we 
call the local variable $x$ and normalize the solutions with finite cuts as $\varpi^{a_i}(x)=x^{a_i}(1+O(x))$. If the  
local exponent occurs with multiplicity greater than one, we normalize the logarithmic solutions $l_1^{a_i}=\varpi^{a_j}(x) \log(x) +O(x^{a_i+1})$. 
The  leading logarithm of  higher logarithmic solutions $l_n^{a_i}$ are normalized to $\varpi^{a_i} \log^n(x)/n!$ and their  
pure series part is always chosen to start with $O(x^{a_i+1})$.  The  complete intersection $X_{4,3}(1^5 2^1)$ has an orbifold point 
at $x=1/z=0$  and one gets
\be
\label{TomX43}
T_\infty^{-1} \, = \,
\left(
\begin{array}{cccc}
 \frac{8(1+ i) \sqrt{2} \pi ^{9/2}}{\Gamma \left(\frac{1}{4}\right)^6} & \frac{8 \sqrt{2} \pi ^{9/2}}{\Gamma \left(\frac{1}{4}\right)^6} & -\frac{16 \sqrt{2} \pi
   ^{9/2}}{\Gamma \left(\frac{1}{4}\right)^6} & -\frac{24 i \sqrt{2} \pi ^{9/2}}{\Gamma \left(\frac{1}{4}\right)^6} \\
 \frac{4 \sqrt[6]{-1} 2^{2/3} \pi ^{5/2}}{\sqrt{3} \Gamma \left(\frac{1}{6}\right) \Gamma \left(\frac{1}{3}\right)} & \frac{4\ 2^{2/3} \pi ^{5/2}}{3 \Gamma
   \left(\frac{1}{6}\right) \Gamma \left(\frac{1}{3}\right)} & -\frac{4\ 2^{2/3} \pi ^{5/2}}{\Gamma \left(\frac{1}{6}\right) \Gamma \left(\frac{1}{3}\right)} & -\frac{4
   i 2^{2/3} \pi ^{5/2}}{\sqrt{3} \Gamma \left(\frac{1}{6}\right) \Gamma \left(\frac{1}{3}\right)} \\
 -\frac{80 (-1)^{5/6} \pi ^3}{3 \Gamma \left(\frac{2}{3}\right)^3} & \frac{80 \pi ^3}{3 \sqrt{3} \Gamma \left(\frac{2}{3}\right)^3} & -\frac{80 \pi ^3}{\sqrt{3} \Gamma
   \left(\frac{2}{3}\right)^3} & \frac{80 i \pi ^3}{3 \Gamma \left(\frac{2}{3}\right)^3} \\
 \frac{10(1- i) \sqrt{2} \pi ^{9/2}}{\Gamma \left(\frac{3}{4}\right)^6} & \frac{10 \sqrt{2} \pi ^{9/2}}{\Gamma \left(\frac{3}{4}\right)^6} & -\frac{20 \sqrt{2} \pi
   ^{9/2}}{\Gamma \left(\frac{3}{4}\right)^6} & \frac{30 i \sqrt{2} \pi ^{9/2}}{\Gamma \left(\frac{3}{4}\right)^6} \\
\end{array}
\right)\ .
\ee
The $X_{4,2}(1^6)$ geometry  is a model with a conifold point at $x=1/ (2^{12} z)=0$ and the  analytic continuation matrix is 
\be 
\label{TomX42}
T_\infty^{-1} \, = \,
\left(
\begin{array}{cccc}
 \frac{\sqrt[4]{-1} \sqrt{2} \pi ^{9/2}}{\Gamma \left(\frac{1}{4}\right)^6} & \frac{\pi ^{9/2}}{\Gamma \left(\frac{1}{4}\right)^6} & -\frac{2 \pi ^{9/2}}{\Gamma
   \left(\frac{1}{4}\right)^6} & -\frac{4 i \pi ^{9/2}}{\Gamma \left(\frac{1}{4}\right)^6} \\
 \frac{\pi }{8} & \frac{\pi }{16} & -\frac{\pi }{4} & 0 \\
 \frac{1}{8} (4-i \pi ) \pi  & \frac{\pi }{4} & -\pi  & \frac{i \pi ^2}{4} \\
 \frac{\left(1-i\right) \pi ^{9/2}}{64 \Gamma \left(\frac{3}{4}\right)^6} & \frac{\pi ^{9/2}}{64 \Gamma \left(\frac{3}{4}\right)^6} & -\frac{\pi
   ^{9/2}}{32 \Gamma \left(\frac{3}{4}\right)^6} & \frac{i \pi ^{9/2}}{16 \Gamma \left(\frac{3}{4}\right)^6} \\
\end{array}
\right)\ .
\ee
The $X_{4,4}(1^4 2^2)$ geometry  has a K-point at $x=1/ (2^{20} z)=0$ and we get 
\be 
\label{TomX44}
T_\infty^{-1} \, = \, 
\left(
\begin{array}{cccc}
 \frac{\left(1+i\right) \pi ^3}{4 \Gamma \left(\frac{1}{4}\right)^4} & \frac{\pi ^3}{4 \Gamma \left(\frac{1}{4}\right)^4} & -\frac{\pi ^3}{2 \Gamma
   \left(\frac{1}{4}\right)^4} & -\frac{i \pi ^3}{2 \Gamma \left(\frac{1}{4}\right)^4} \\
 \frac{\pi ^4}{2 \Gamma \left(\frac{1}{4}\right)^4} & 0 & -\frac{\pi ^4}{\Gamma \left(\frac{1}{4}\right)^4} & \frac{i \pi ^4}{\Gamma \left(\frac{1}{4}\right)^4} \\
 \frac{\left(1-i\right) \pi ^3}{256\Gamma \left(\frac{3}{4}\right)^4} & \frac{\pi ^3}{256 \Gamma \left(\frac{3}{4}\right)^4} & -\frac{\pi ^3}{128
   \Gamma \left(\frac{3}{4}\right)^4} & \frac{i \pi ^3}{128 \Gamma \left(\frac{3}{4}\right)^4} \\
 \frac{((2-2 i)-\pi) \pi ^3}{128 \Gamma \left(\frac{3}{4}\right)^4} & \frac{\pi ^3}{64 \Gamma \left(\frac{3}{4}\right)^4} & \frac{(\pi -2) \pi ^3}{64 \Gamma
   \left(\frac{3}{4}\right)^4} & \frac{i \pi ^3 (2+\pi )}{64 \Gamma \left(\frac{3}{4}\right)^4} \\
\end{array}
\right)\ .
\ee
The complete intersection $X_{2,2,2,2}(1^8)$ has a MUM-point at $x=2^8/z=0$ 
\be 
\label{TomX2222}
T_\infty^{-1} \, = \, \left(
\begin{array}{cccc}
 \frac{1}{32} & \frac{1}{64} & -\frac{1}{16} & 0 \\
 -\frac{i \pi }{32} & 0 & 0 & \frac{i \pi }{8} \\
 -\frac{\pi ^2}{48} & \frac{\pi ^2}{192} & \frac{\pi ^2}{24} & 0 \\
 \frac{\zeta (3)}{4}+\frac{i \pi ^3}{96} & \frac{\zeta (3)}{8} & -\frac{\zeta (3)}{2} & \frac{i \pi ^3}{12} \\
\end{array}
\right)\ .
\ee
We have calculated the exact expressions for the nine other hypergeometric 
cases in the above conventions and the data are available at request.

\subsection{The period matrix at the attractor points}
\label{sec:attractor}

Let $W$ be any of the fourteen hypergeometric models. There are algorithms to compute $P_3(W_z/\mathbb{F}_p,T)$ very efficiently for all points in $\mathcal{M}_{\text{cs}}$ (which are only finitely many after the reduction to $\mathbb{F}_p$) and this gives a powerful method for finding rank two attractor points, i.e. one computes $P_3(W_z/\mathbb{F}_p,T)$ for all $z \in \mathcal{M}_{\text{cs}}$ and many primes $p$ and searches for persistent factorizations. This was first done in \cite{Candelas:2019llw} and the method we use for computing $P_3(W_z/\mathbb{F}_p,T)$ is explained in \cite{Candelas:2021tqt}. Using this method we were able to find two rational rank two attractor points in the hypergeometric models. Numerically we also find that the periods associated with $\Lambda$ are the periods and quasiperiods of the associated weight 4 newform $f$ and that the periods associated with $\Lambda_\perp$ are (up to a multiplication by $2\pi i$) the periods and qausiperiods of the associated weight 2 newform $g$. For the attractor points found in \cite{Candelas:2019llw} such an analysis has been done in~\cite{BoenischThesis}.

\paragraph{The model with hypergeometric indices \protect\boldmath $\frac{1}{3},\frac{1}{3},\frac{2}{3},\frac{2}{3}$}

For the hypergeometric model with indices $\frac{1}{3},\frac{1}{3},\frac{2}{3},\frac{2}{3}$ we find that there is an attractor point at $z_*=-1/2^33^6$. The associated newforms $f\in S_4^{\text{new}}(\Gamma_0(54))$ and $g\in S_2^{\text{new}}(\Gamma_0(54))$ are uniquely determined by
\begin{align}
    f(\tau)\, = \, q+2q^2+4q^4+3q^5+\cdots \qquad \text{and} \qquad g(\tau)\, = \, q-q^2+q^4+3q^5+\cdots \ .
\end{align}
We numerically computed the period matrix $T_{z_*} = (\Pi(z_*) \ \Pi'(z_*) \ \Pi''(z_*) \ \Pi'''(z_*))$ where $\Pi$ is defined around $z=0$ in (\ref{periodI}) and the analytic continuation is done along the upper half plane.

In \ref{sec:modformsperiods} we explain how one can compute periods $\omega_f^\pm$ and $\omega_g^\pm$ associated with $f$ and $g$ and the approximate values are given in Appendix~\ref{sec:Computations}. Numerically we find that all entries in $\Pi(z_*)$ are rational linear combinations of the periods $\omega_f^\pm$ and all entries in the projection of $\Pi'(z_*)$ on the Hodge structure $(2,1)$ are rational linear combinations of $\tilde{\omega}_g^\pm = 2\pi i \omega_g^\pm$. In Appendix~\ref{quasiperiods} we also explain how we can compute quasiperiods $\eta_F^\pm$ and $\eta_G^\pm$ associated with $F$ and $G$ which are unique up to the addition of rational multiples of $\omega_f^\pm$ and $\omega_g^\pm$, respectively. The approximate values are again given in Appendix~\ref{sec:Computations} and numerically we find that the quasiperiods can be chosen such that all entries in the projection of $\Pi'''(z_*)$ on the Hodge structure $(3,0)$ and $(0,3)$ are rational linear combinations of $\eta_F^\pm$ and all entries in the projection of $\Pi''(z_*)$ on the Hodge structure $(1,2)$ and $(2,1)$ are rational linear combinations of $\tilde{\eta}_G^\pm = 2\pi i \eta_G^\pm$. In other words, the periods associated with $\Lambda$ are the periods and quasiperiods associated with $f$ and the periods associated with $\Lambda_\perp$ are the periods and quasiperiods associated with $g$. The complete period matrix can then be written as 
\begin{align}
    T_{z_*}\, = \, A\left(
\begin{array}{cccc}
 \omega_f^+ & \eta_F^+ & 0 & 0 \\
 \omega_f^- & \eta_F^- & 0 & 0 \\
 0 & 0 & \tilde{\omega}_g^+ & \tilde{\eta}_G^+ \\
 0 & 0 & \tilde{\omega}_g^- & \tilde{\eta}_G^- 
\end{array}
\right) B
\end{align}
where
\begin{align}
  A \, = \,  \left(
\begin{array}{cccc}
 0 & 486 & 6 & 12 \\
 -108 & -1620 & -4 & -4 \\
 0 & -162 & 2 & 0 \\
 27 & -81 & 1 & 1 \\
\end{array}
\right) \quad \text{and} \quad B \, = \,  \left(
\begin{array}{cccc}
 2 & 1944 & 12597120 & 0 \\
 0 & 0 & 0 & 15116544 \\
 0 & 17496 & 0 & -800783801856 \\
 0 & 0 & 3779136 & 72241963776 \\
\end{array}
\right) \, .
\end{align}
Note that the compatibility with the intersection pairing gives the quadratic relation
\begin{align}
     T_{z_*}^T\Sigma T_{z_*} \, = \, (2\pi i)^3 \left(
\begin{array}{cccc}
 0 & 0 & 0 & 2^{12}3^{18} \\
 0 & 0 & -2^{12}3^{18} & -2^{17}3^{22}7 \\
 0 & 2^{12}3^{18} & 0 & -2^{19}3^{26}47 \\
 -2^{12}3^{18} & 2^{17}3^{22}7 & 2^{19}3^{26}47 & 0 \\
\end{array}
\right)
\end{align}
which is equivalent to the Legendre relations
\begin{align}
   \omega_f^+ \eta_F^--\omega_f^- \eta_F^+  \, = \, (2\pi i)^3 \qquad \text{and} \qquad \omega_g^+ \eta_G^--\omega_g^- \eta_G^+ \, = \, 2\pi i.
\end{align}

\paragraph{The model with hypergeometric indices \protect\boldmath $\frac{1}{4},\frac{1}{3},\frac{2}{3},\frac{3}{4}$}

For the hypergeometric model with indices $\frac{1}{4},\frac{1}{3},\frac{2}{3},\frac{3}{4}$ we find that there is an attractor point at $z_*=-1/2^43^3$. The associated newform $f\in S_4^{\text{new}}(\Gamma_0(180))$ is uniquely determined by
\begin{align}
    f(\tau)\, = \,q+2q^2+4q^4+3q^5+\cdots
\end{align}
and the associated newform $g$ is the unique form in $S_2^{\text{new}}(\Gamma_0(32))$. 
We numerically compute the period matrix $T_{z_*} = (\Pi(z_*) \ \Pi'(z_*) \ \Pi''(z_*) \ \Pi'''(z_*))$ where $\Pi$ is defined around $z=0$ in (\ref{periodI}) and the analytic continuation is done along the upper half plane.

Proceeding as in the previous example one finds that the complete period matrix can be written as 
\begin{align}
    T_{z_*}\, = \, A\left(
\begin{array}{cccc}
 \omega_f^+ & \eta_F^+ & 0 & 0 \\
 \omega_f^- & \eta_F^- & 0 & 0 \\
 0 & 0 & \tilde{\omega}_g^+ & \tilde{\eta}_G^+ \\
 0 & 0 & \tilde{\omega}_g^- & \tilde{\eta}_G^- 
\end{array}
\right) B
\end{align}
where
\begin{align}
  A \, = \, \left(
\begin{array}{cccc}
 432 & 432 & 4 & 6 \\
 -1296 & -3888 & -1 & -3 \\
 0 & -864 & 2 & 0 \\
 432 & -432 & 1 & 1 \\
\end{array}
\right) \quad \text{and} \quad B \, = \, \left(
\begin{array}{cccc}
 1 & 432/5 & 217728/5 & 0 \\
 0 & 0 & 0 & 1296/25 \\
 0 & 2592/5 & 0 & -228427776 \\
 0 & 0 & 93312/5 & 161243136/5 \\
\end{array}
\right) \, .
\end{align}
The compatibility with the intersection pairing gives the quadratic relation
\begin{align}
     T_{z_*}^T\Sigma T_{z_*} \, = \, (2\pi i)^3 \left(
\begin{array}{cccc}
 0 & 0 & 0 & 2^{13}3^{10}/5 \\
 0 & 0 & -2^{13}3^{10}/5 & -2^{17}3^{13}19/5^2 \\
 0 & 2^{13}3^{10}/5 & 0 & -2^{19}3^{14}383/5^2 \\
 -2^{13}3^{10}/5 & 2^{17}3^{13}19/5^2 & 2^{19}3^{14}383/5^2 & 0 \\
\end{array}
\right)
\end{align}
which is equivalent to the Legendre relations
\begin{align}
   \omega_f^+ \eta_F^--\omega_f^- \eta_F^+  \, = \, (2\pi i)^3 \qquad \text{and} \qquad \omega_g^+ \eta_G^--\omega_g^- \eta_G^+ \, = \, 2\pi i.
\end{align}

\section{Explicit correspondence with a Kuga-Sato variety in a special case}
\label{sec:Correspondence}
In this section we construct an explicit correspondence between the conifold fiber 
in  the mirror family of four quadrics in $\mathbb{P}^7$ and the relevant  Kuga-Sato 
variety. Our construction makes use of the modular parametrization of 
the Legendre curve and provides  a proof  for the 
identification of the Calabi-Yau periods in this fiber with the periods and quasiperiods associated with the unique newform in $S_4(\Gamma_0(8))$.   

\subsection[A model for $X_0(8)$ and for the associated universal elliptic curve]{A model for $\boldsymbol{X_0(8)}$ and for the associated universal elliptic curve}

We start with the classical Legendre elliptic curve 
\be
\label{legendre} 
 L_\lambda\, : \quad y^2   \=  x (x-1)(x-\l) \qquad (\l\in \mathbb{C}\ssm  \{ 0,1\}) \, ,
\ee
with the standard  holomorphic 1-form 
\be
\omega \;=\;  \frac{\mathrm{d} x}{ 2\, y}
\ee   
and with 2-torsion subgroup $\{O,P_0,P_1,P_\l\}$, where $O=(\infty,\infty)$ is the  origin of 
the elliptic curve and $P_\nu=(\nu,0)$ ($\nu\in \{0,1,\l\}$) are the points of order~2.    
If $\l$ is given in the form $1-\a^2$ for some $\a\neq 0,\pm 1$, then the curve $L_\l$  also has the 
four 4-torsion points $Q_{1\pm\a}=(1\pm \a,\,  \a (1\pm \a))$ and $-Q_{1\pm\a}=(1\pm \a,\,  - \a (1\pm \a))$.
They all satisfy $2 Q=P_1$ and hence differ by 2-torsion points (e.g. $Q_{1+\a}= Q_{1-\a} + P_0$). 

These maps give rational parametrizations $X(2)  \xrightarrow{\sim} \mathbb P^1(\mathbb C)_\l$ and   
$X(2;4)  \xrightarrow{\sim} \mathbb P^1(\mathbb C)_\a$, where $X(2)$ is the compactified moduli space
of elliptic curves with labelled 2-torsion points and $X(2;4)$ is the compactified moduli space
of elliptic curves with labelled 2-torsion points and one labelled 4-torsion point. Over~$\C$, these two spaces
are the compactifications of the upper half-plane quotients $\HH/\G(2)$  and $\HH/\G(2;4)$, respectively 
where $\G(2)$ has its usual meaning (principal congruence subgroup) and $\G(2;4):=\G(2)\cap \G_0(4)$.
(Here we should really use $\G_1(4)$, which corresponds to a choice of a 4-torsion point rather than merely
of a cyclic subgroup of order~4 on an elliptic curve, but since the quotients  of~$\HH$ by  $\G_0(4)$ and $\G_1(4)$    
are isomorphic, we will ignore this point.)  The group $\G(2;4)$ is conjugate to $\G_0(8)$ by the matrix 
$\bigl(\begin{smallmatrix}2&0\\0&1\end{smallmatrix}\bigr)$, corresponding to the map $\t \mapsto 2\t$ from~$\HH$
to itself, so $\a$~can also be seen as a rational parameter on the compactified moduli space $X_0(8)\cong  \overline{\HH/\G_0(8)}$
of elliptic curves together with a cyclic subgroup of order~8.

We now describe this in transcendental (modular) terms. Let $\a(\t)$ and $\l(\t)$ be the two modular functions defined
 in terms of  the Dedekind eta-function $\eta(\t)=q^\frac{1}{24} \prod_{n=1}^{\infty}(1-q^n)$ (here and from now on 
 $q=e^{2 \pi i \t}$) by 
\be 
\a(\t)\=\frac{\eta(\t)^{8} \eta(4\t)^4}{\eta(2\t)^{12}}\,,\qquad  
\l(\t)\=16 \frac{\eta(\t/2)^8 \eta(2 \t)^{16}}{\eta(\t)^{24}}\=1-\a(\t/2)^2\,.
\label{lambdaalpha}
\ee
The function~$\l(\t)$ is the classical Legendre $\l$-function giving the isomorphism between 
$\HH/\G(2)$ and~$X(2)$ (Hauptmodul) and $\a(\t)$ is a Hauptmodul for~$\G_0(8)$, with
the factor~2 in the argument of $\a$ in~\eqref{lambdaalpha} corresponding to the bijection between 
$\Gamma(2;4)$ and $\G_0(8)$ described above.   The parametrization of the Legendre curve~\eqref{legendre}
with $\l=\l(\t)$ can be given in terms of the four  classical Jacobi theta functions $\Th_i(z)=\Th_i(\t,z)$ defined by 
\begin{align} 
\label{jacobitheta} 
 \Th_1(z) \=\sum_{n\in\Z+\h}(-1)^{n-\h}\,q^{n^2/2}\z^{n} 
\=  q^{1/8}\z^{1/2}\prod_{n=1}^\infty \bigl(1-q^n\bigr) \bigl(1-q^n\z\bigr)\bigl(1-q^{n-1}\z^{-1}\bigr)\, , \nonumber  \\
 \Theta_2(z)=-i \Theta_1(z+\h)\, ,  \quad   \Theta_3(z)=- q^{-\frac{1}{8}} \sqrt{\xi} \Theta_1(z+\h+\frac{\t}{2})\,, \quad   
 \Theta_4(z)=-q^{-\frac{1}{8}}\sqrt{\xi}  \Theta_1(z+\frac{\t}{2})
 \end{align} 
(here $z\in \C$, $\z=e^{2 \pi i z}$) and  their Nullwerte $\th_i=\Th_i(0)$ (which are related to $\l$ by 
$\l=\th_2^4/\th_3^4=1-  \th_4^4/\th_3^4$) by the formulas
\be
\label{modularparametrization}   
(x,\, x-1,\, x-\l,\, y)\= \Bigl(\frac{c^3 \Th_1(2 z)}{2 \, \Th_1(z)^4},\, \frac{c^2 \Th_4(z)^2}{\th^2_4 \Th_1(z)^2},\, \frac{c^2 \Th_2(z)^2}{\th_2^2 \Th_1(z)^2} ,\, 
\frac{c^2 \Th_3(z)^2}{\th_3^2 \Th_1(z)^2} \Bigr)
\ee 
with  $c= -2i \, \eta^3/ \th_3^2$.  With this identification the 1-form $\omega$ is given by  
\be 
\omega\= \pi \th_3^2\, \mathrm{d}z\; .
\ee
Note that under modular transformations $\t \mapsto \frac{a \t + b}{c \t +d},\, z\mapsto  \frac{z}{c \t +d}$ with 
$\bigl(\begin{smallmatrix}a&b\\c&d\end{smallmatrix}\bigr) \in \G(2)$, $\th_3^2$ and $\mathrm{d} z$ transform 
by $\th_3^2\mapsto (c\t+d)\th_3^2$ and $\mathrm{d} z \mapsto (c \t +d)^{-1} \mathrm{d}z$,  so $\omega$ is unchanged.   

For any complex number $\a\ne 0,\pm 1$ we define an algebraic curve $C_\a$ of genus~1 by  
\be 
\label{HSP1P1} 
C_\alpha\; :  \quad  \Bigl(Y_1-\frac{1}{Y_1}\Bigr)\, \Bigl(Y_2-\frac{1}{Y_2}\Bigr) =\; 4\, \a \, , 
\ee
where $Y_1$ and $Y_2$ are variables in $\mathbb{P}^1$.  This curve has eight obvious 
points where one of  the $Y_i$ is $\pm1$ and the other is $0$~or~$\infty$. If we chose $(\infty,-1)$ as the 
origin, then $C_\a$ becomes an elliptic curve and can be put into the Legendre form~\eqref{legendre}, 
with $\l=1-\a^2$, by  
\be 
\label{mapc1}
L_{1-\a^2} \; \overset \sim  \longrightarrow \; C_\a   \,,\qquad  (x,y)\; \mapsto \;  (Y_1,Y_2)=
\left( \frac{x  \alpha+y}{x  \alpha-y}, \frac{x(1 -x)}{y}\right)\ ,
\ee
with the inverse map given by  $x=1+\alpha \frac{1- Y_1}{1+Y_1}Y_2$ and $y=\alpha \frac{Y_1-1}{Y_1+1} x $.  
Under this isomorphism, the holomorphic 1-form $\omega$ becomes\footnote{One has $\omega=-2\,  \mathrm{d}Y_1/(\frac{\partial P}{\partial Y_2}|_{P=0})$ with $P=(Y_1^2-1) ({Y_2}^2-1)-4 \alpha Y_1 Y_2$.}      
$$
\omega \= \frac{1}{2\alpha} \,  \frac{Y_2^{-1}-Y_2}{{Y_2}^{-1}+Y_2}\, \frac{dY_1}{Y_1}  \= -\frac{dY_1}{\sqrt{Y_1^4+(4 \alpha ^2-2) Y_1^2+1}} 
$$
and the above-mentioned eight points  map  to eight points of order dividing 4 on 
$L_{1-\a^2}$, as given in the following Table:
\begin{table}[h!]
\centering\scalebox{1}{
  \begin{tabular}{c|cccc|cc}
  $(Y,Y')$ & $(\infty,-1)$& $(\infty,1)$& $(0,1)$& $(0,-1)$&$(\pm 1,\infty)  $& $(\pm 1,0)$\\ [ 1mm] \hline
  $(x,y)$ &   $O$          & $P_\l$       & $P_1$ & $P_0$  & $Q_{1 \mp \a}$ & $-Q_{1 \pm \a}$ \\
        \end{tabular}
	\label{torsionpoints} }
    \end{table}
    
The curve $C_\a$ also has a theta-series parametrization. With the Kronecker symbol $\left( \frac{\cdot}{\cdot} \right)$ this can be given by
\be 
Y_1(z) \, = \, -i\frac{\sum_{n\in \Z} \bigl(\frac{8}{n}\bigr) q^\frac{n^2}{8} \xi^n}{\sum_{n\in \Z} \bigl(\frac{- 8}{n}\bigr) q^\frac{n^2}{8} \xi^n}, \qquad  Y_2(z)\, = \, -Y_1(z+1/8) \, ,
\label{eq:ModParCurve}
\ee
where $Y_1$ is an odd function of $z$ (because its numerator is even and its denominator is odd) and $Y_2$ gets inverted under $z \mapsto -z$ (because up to a factor $i$ the numerator and denominator of $Y_1$ are exchanged under $z \mapsto z+1/4$). The parametrization is invariant under $\Gamma_0(8)$-Jacobi transformations up to the identification $(Y_1,Y_2) \sim (-Y_1,1/Y_2)$. For our following analysis we remark that there is an isomorphism
\begin{equation}
\begin{aligned}
  C_\alpha &\rightarrow C_{1/\alpha} \\
  (Y_1,Y_2) &\mapsto \left(\frac{Y_1-1}{Y_1+1},\frac{Y_2-1}{Y_2+1} \right) \, .
  \label{eq:InvolutionCurves}
\end{aligned}
\end{equation}

\subsection{Correspondence}

The identification
\be 
iY_j\=\frac{x_j}{y_j}\quad   {\rm for}\quad   j=0,\ldots, 3
\ee
gives a map of degree 8 from the family of Calabi-Yau threefolds defined in (\ref{eq:4quadrics}) to the hypersurface in $(\mathbb{P}^1)^4$ defined by
\be 
{\widetilde W}_\psi : \quad   \prod_{i=1}^4\left (Y_i-\frac{1}{Y_i}\right) \=16\, \psi^4\ . 
\label{HS}
\ee
We can identify ${\widetilde W}_\psi$ with $\bigcup_\alpha C_\alpha \times C_{\psi^4/\alpha}$, where the two curves are given by the coordinates $(Y_1,Y_2)$ and $(Y_3,Y_4)$, respectively. We have already seen that $C_\alpha$ is always an elliptic curve with a distinguished cyclic subgroup of order 8, where we can think of $\alpha$ as a parameter in the moduli space $Y_0(8)$ of such curves and we also have the modular parametrization (\ref{eq:ModParCurve}). For $\psi = 1$ the two factors $C_\alpha$ and $C_{1/\alpha}$ of ${\widetilde W}_\psi$ are isomorphic and so ${\widetilde W}_1$ can be identified with the Kuga-Sato threefold, which by definition is the union over the moduli space of the product of the corresponding elliptic curve by itself. 

From the modular parametrization (\ref{eq:ModParCurve}) of $C_\alpha$ and the symmetry~(\ref{eq:InvolutionCurves}) we get the modular parametrization
\begin{equation}
\begin{aligned}
  \Phi : \fH \times \mC \times \mC &\rightarrow {\widetilde W}_1 \\
  (\tau,z_1,z_2) &\mapsto \left(Y_1(z_1),Y_2(z_1),\frac{Y_1(z_2)-1}{Y_1(z_2)+1},\frac{Y_2(z_2)-1}{Y_2(z_2)+1}\right) \, .
\end{aligned}
\end{equation}
For the canonical $(3,0)$ form  $\Omega_\psi$ of ${\widetilde W}_\psi$, which can be defined using 
$P:= \prod_{i=1}^4(Y_i^2-1)-16 \psi^4\prod_{i=1}^4 Y_i=0$  in a patch of $(\mathbb{P}^1)^4$
as\footnote{That we have singled out $Y_4$ in the derivative and $Y_1,Y_2,Y_3$ in the measure is not important. The representations
with permuted indices describe  the same 3-form.}
\begin{align}
\Omega_{\psi} &\=\psi^4 \frac{\dd Y_1 \wedge \dd Y_2\wedge \dd Y_3}{\frac{\partial P} {\partial Y_4}\bigr|_{p=0}} 
\= \frac{1}{16}\frac{Y_4^2-1}{Y_4^2+1} \, \frac{\dd Y_1}{Y_1}\wedge\frac{\dd Y_2}{Y_2}\wedge\frac{\dd Y_3}{Y_3} \\
&\=\ -\frac{1}{4} \omega_{\alpha}^{(1)} \wedge \omega_{\psi^4/\alpha}^{(2)} \wedge \frac{\mathrm{d} \alpha}{\alpha} \, ,
\end{align}
one then finds that
\be 
\Phi^*\Omega_1 \= 2 (2\pi i)^3 f(\tau) \, \mathrm{d} \tau \wedge \mathrm{d} z_1 \wedge  \mathrm{d} z_2 \, .
\ee
Here $f(\tau) = -\frac{1}{16\pi i} \theta_3(2\tau)^2 \alpha'(\tau) = \eta(2\tau)^4\eta(4\tau)^4$ is the unique newform of level~8 and weight~4. This proves the occurrence of the periods of $f$
in the period matrix of ${\widetilde W}_1$. To prove that the quasiperiods also occur we consider components of derivatives of $\Omega_\psi$ at $\psi = 1$ which are anti-invariant under the involution $\Pi$ induced by
\begin{equation}
\begin{aligned}
  C_\alpha \times C_{1/\alpha} &\rightarrow C_\alpha \times C_{1/\alpha} \\
  (Y_1,Y_2,Y_3,Y_4) &\mapsto \left(\frac{Y_3-1}{Y_3+1},\frac{Y_4-1}{Y_4+1},\frac{Y_1-1}{Y_1+1},\frac{Y_2-1}{Y_2+1}  \right) \, .
\end{aligned}
\end{equation}
Denoting by $\nabla_z$ the partial derivative with respect to $z=1/(2\psi)^8$ with $Y_1,\, Y_2,\, Y_3,\, \alpha$ held constant, and by $(\cdot )^{(-)}$ the anti-invariant part of $(\cdot)$ under $\Pi^*$, we get
\begin{align}
  &\Omega_1^{(-)} \, = \,  \Omega_1 \\
  &(\nabla_z \Omega_\psi)|_{\psi = 1}^{(-)} \, = \,  \frac{1}{2} \left( -2^9\frac{Y_3^2}{4 Y_3^2+\alpha ^2 (1-Y_3^2)^2}-2^7\frac{(1-Y_1^2)^2}{1+(4 \alpha ^2-2)
   Y_1^2+Y_1^4}  \right) \Omega_1 \\
  &\qquad\, = \,  -2^6\Omega_1 + 2^4\mathrm{d} (\omega_{\alpha}^{(1)} \wedge \omega_{1/\alpha}^{(2)} ) \\
  &(\nabla_z^2 \Omega_\psi)|_{\psi = 1}^{(-)} \, = \,  \frac{1}{2} \left(2^{18} 3\frac{ Y_3^4}{(4 Y_3^2+\alpha ^2 (1-Y_3^2)^2)^2}+2^{14} 3\frac{ (1-Y_1^2)^4}{(1+(4 \alpha
   ^2-2) Y_1^2+Y_1^4)^2}\right) \Omega_1 \\
  &\qquad \, = \,  2^{14}\frac{1-4 \alpha ^2+\alpha ^4}{(1-\alpha ^2)^2}\Omega_1 + \mathrm{d} \Big(-2^{12}\frac{2-\alpha^2}{1-\alpha^2}\omega_\alpha^{(1)} \wedge \omega_{1/\alpha}^{(2)} \nonumber \\
   &\qquad \hphantom{\, = \,} +2^8 \frac{(1-Y_1^4) (1-Y_2^2)^3}{\alpha ^2 (1-\alpha ^2) Y_1^2 (1+Y_2^2)^3} \omega_{1/\alpha}^{(2)} \wedge \mathrm{d} \alpha+2^8 \frac{\alpha^2 (1-Y_3^4) (1-Y_4^2)^3}{(1-\alpha ^2) Y_3^2 (1+Y_4^2)^3} \omega_{\alpha}^{(1)} \wedge \mathrm{d}\alpha \Big) \, .
\end{align}
The modular parametrization further gives
\begin{align}
  \Phi^* \left[(\nabla_z^2 \Omega_\psi)|_{\psi=1}^{(-)}\right] \, &= \, \left[2^8(2\pi i)^3 (F(\tau)+2^6 \cdot 3\cdot f(\tau)) \, \mathrm{d} \tau \wedge \mathrm{d} z_1 \wedge  \mathrm{d} z_2 \right] \, ,
\end{align}
where
\begin{align}
  F(\tau) \, = \, \left(2^7 \frac{1-4\alpha(\tau)^2+\alpha(\tau)^4}{(1-\alpha(\tau)^2)^2} -2^6\cdot 3 \right) f(\tau)
\end{align}
is in the same class as the meromorphic partner of $f$ chosen for our numerical computations. In particular this shows that if
\begin{align}
  \int_\gamma \Omega_1 \, = \, \alpha_+ \omega_f^+ + \alpha_- \omega_f^-
\end{align}
for a $\Pi$-anti-invariant 3-cycle $\gamma$ then also
\begin{align}
  \int_\gamma (\nabla_z \Omega_\psi)|_{\psi=1} &\, = \, -64(\alpha_+ \omega_f^+ + \alpha_- \omega_f^-) \\
  \int_\gamma (\nabla_z^2 \Omega_\psi)|_{\psi=1} &\, = \, 24576 (\alpha_+ \omega_f^+ + \alpha_- \omega_f^-)+128(\alpha_+ \eta_F^+ + \alpha_- \eta_F^-) \, .
\end{align}
Up to a multiplicative constant this confirms two rows from (\ref{eq:TmuN=8}) since
\begin{equation}
\begin{aligned}
  &-\frac{1}{32}(\omega_f^\pm , \eta_F^\pm)
  \begin{pmatrix}
    -32 & -8 & 0 & 0 \\
    0 & 0 & -\frac{1}{32} & 0
  \end{pmatrix}\Pi_{1/2^8} \\
  \, = \,&\omega_f^\pm -64 \omega_f^\pm (z-1/2^8) + (24576\omega_f^\pm + 128 \eta_F^\pm)\frac{(z-1/2^8)^2}{2} + O((z-1/2^8)^3) \, .
\end{aligned}
\end{equation}

\section{Local hypergeometric one-parameter Calabi-Yau threefolds}
\label{sec:localcases}
Non-compact (also called local) Calabi-Yau manifolds $M$ have been studied much 
due to  their relation to Chern-Simons theory~\cite{Witten:1992fb},  matrix models~\cite{Aganagic:2002wv} and  ${\cal N}=2$ supersymmetric gauge 
theories~\cite{Katz:1996fh} and Feynman graphs~\cite{MR3780269}. They also provide examples for Calabi-Yau backgrounds on 
which the topological string can be completely solved by localization~\cite{Klemm:1999gm}, large 
string/gauge theory duality~\cite{MR2117633}\cite{MR2480744} and the modular approach~\cite{MR2452948}. We consider  
local Calabi-Yau threefolds $M$ given as the total space of 
the anti-canonical line bundle ${\cal O}(-K_S)\rightarrow S$ over a del Pezzo surface $S$ and consider a one parameter subslice in the K\"ahler parameter space of $S$. 
The mirror $W$ can be obtained by local version of Batyrev's construction~\cite{Katz:1996fh}\cite{Hori:2000kt} 
and is given by a one parameter family of elliptic curves $\cal C$
embedded into a non-compact three-dimensional space. The family ${\cal
  C}$ is parametrized by $z$ and each curve is equipped with 
a meromorphic one-form $\lambda$ of the third kind, which is obtained from a holomorphic $(3,0)$-form $\Omega$.

\subsection{Third order Picard-Fuchs operators} 
Because of the non-vanishing residuum of $\lambda$ there are three periods $\int_{\gamma_k} \lambda$, $k=1,2,3$ of $W$, which in our cases are annihilated by third order Picard-Fuchs operators of the form
\be 
L\; = \; (\theta^2- \mu^{-1}  z\  \prod_{i\; = \;1}^2 (\theta +a_i))\theta \ 
\label{localhypergeometric}
\ee 
with $\theta=z \dv{}{z}$ and the associated Riemann Symbol
\be
{\cal  P}\left\{\begin{array}{ccc}
0& \mu& \infty\\
\hline
0& 0 & 0\\
0& 1 & a_1\\
0& 1 & a_2 
\end{array}\right\} \, .
\label{riemannsymbollocal}
\ee 
The four possible choices for $(a_1,a_2)$ are displayed in Table \ref{localgeometries}.  
As it can be seen there, additional choices of the sign of $z$ and different topological invariants lead
to six different local Calabi-Yau threefolds $M$ that are related to the these four hypergeometric systems.  

Similar as for the compact cases we can fix a preferred basis of periods corresponding to an integral basis of cycles at the point of maximal unipotent monodromy $z=0$ in terms of 
topological invariants of $M$. We are mainly interested\footnote{At $z=\infty$ the operator $L$ has either a second conifold with the indices $(1/2,1/2)$ or an orbifold 
point and the corresponding transition matrix can be obtained comparing local expansion of Barnes integral 
representations of the periods at $z=0$ and $z=\infty$ as in Section \ref{sec:orbifold}.} 
in the relation of this basis to the Frobenius basis at the conifold $z=\mu$ expressed by a transition matrix $T_\mu$.  
In particular it contains the value $t(\mu)$ of the mirror map at the conifold, which determines the large degree asymptotics 
of the Gromov-Witten invariants at all genera~\cite{Klemm:1999gm}, determined there numerically. 
We will show that in all cases it can be expressed in terms of the value of of a Dirichlet $L$-function, which for $\text{Re }s>1$ is defined by
\begin{equation}
L_a(s)\; = \;\sum_{n\; = \;1}^\infty \left(\frac{a}{n}\right) {n^{-s}}\, , 
\label{LLegendre}
\end{equation}
at $s=2$. Here $a$ takes the values $-3,-4,-8$ and $\left(\frac{a}{n}\right)$ denotes the Kronecker symbol.

\subsection{The local geometries} 
To describe  $M\; = \,\mathrm{Tot}({\cal O}(-K_S))$ note that two dimensional del Pezzo surfaces are either $\mathbb{P}^1\times \mathbb{P}^1$ with canonical class $K=2 (H_1+H_2)$ (where $H_1$ and $H_2$ are the hyperplane classes of the projective spaces) 
or the surfaces $B_k$ obtained by blowing up $\mathbb{P}^2$ in $k=0,\ldots,8$ points with canonical 
class $K_{B_k}=3 H -\sum_{i=1}^k E_i$ (where $H$ is the hyperplane classes of the projective space and 
$E_i$ are the exceptional divisors with intersection numbers
$H^2=1=-E_i^2$ and $H\cdot E_i=E_i\cdot E_j=0$ for $i\neq j$). For
$S=\mathbb{P}^2$ the geometry $M$ has only one K\"ahler parameter. For
$S=\mathbb{P}^1\times \mathbb{P}^1$ we restrict to one parameter sublocus by considering the diagonal K\"ahler parameter $t=t_1=t_2$ in $\mathbb{P}^1\times \mathbb{P}^1$ and for $S=B_k$, 
$k=5,\ldots,8$ we set the K\"ahler parameters $t_i$ of the exceptional divisors classes $E_i$ to zero, see~\cite{Huang:2013yta}. This gives six local geometries which are summarized  in Table~\ref{localgeometries}.
 \begin{table}[h!]
{{ 
\begin{center}
	\begin{tabular}{|l|c|c|c|c|c|c|c|c|c|c|c|}
		\hline
	       {\rm Base} \ S & {\footnotesize{$a_1,a_2$}} & $1/\mu$  & $\kappa     $ & $\sigma$       & $ c_2\cdot J $  &$ s$ & $l$ &$h$ &$n^1_0$& $n^2_0$& $n^3_0$ \\ \hline
	         $\mathbb{P}^1\times \mathbb{P}^1$ & $\frac{1}{2},\frac{1}{2}$ & $2^4$     & $    1            $ &   $    0          $  & $       -2            $ & $    0        $  &$32\pi \, L_{-4}(2)$ &$  2$ &-4&-4&-12\\ [ 2 mm]
	        $\mathbb{P}^2 $ & $\frac{1}{3},\frac{2}{3}$ & $-3^3$     & $\frac{1}{3}$ & $\frac{1}{6}$  & $      -2        $ & $   \frac{1}{2}          $  & $  27 \sqrt{3}\pi \, L_{-3}(2)$&$3 $ &3&-6&27\\ [ 2 mm]
	        $B_5\ [D_5]      $& $\frac{1}{2},\frac{1}{2}$ & $-2^4$       &  $     4             $ & $      2         $  & $    -20             $  & $   \frac{1}{2}             $ &$ 32\pi \, L_{-4}(2)$ &$  1$ &16&-20&48 \\ [ 2 mm]
	       $B_6\ [E_6]       $& $\frac{1}{3},\frac{2}{3}$ & $-3^3$      &  $   3                 $ & $    \frac{3}{2}          $  & $  -18 $         &$ \frac{1}{2}    $      & $  27 \sqrt{3}\pi \, L_{-3}(2)$ & $ 1 $  &27&-54&243\\ [ 2 mm]
	       $B_7\ [E_7]       $& $\frac{1}{4},\frac{3}{4}$ & $-2^{6}$    & $   2                $ &   $   1           $  & $  -16               $  & $  \frac{1}{2}    $ & $ 32\sqrt{2}\pi\, L_{-8}(2) $& $ 1 $&56&-272& 3240\\ [ 2 mm]
	       $B_8\ [E_8]       $& $\frac{1}{6},\frac{5}{6}$ & $-2^4 3^3$& $   1               $ &  $ \frac{1}{2}              $  & $    -14            $   &  $  \frac{1}{2}    $ & $ 80\pi\, L_{-4}(2) $& $1$ &252&-9252&848628\\ [ 2 mm]		
		\hline
	 \end{tabular}	
\end{center}}}
\caption{Data for the local Calabi-Yau manifolds  $M= {\cal O}(-K_S)\rightarrow S$, that 
give rise to mirrors $W$ with one-parameter Picard-Fuchs differential equations.  
The Weyl groups of the indicated Lie algebras act on the homology of the $B_k$. This organizes  the BPS invariants such as $n^\beta_0$ 
in representations of these Weyl groups. The degree of the curve $C$ corresponding to $\beta$ is given by the intersection $C\cdot K_S=\chi(S) \beta$. 
The main observation is that $l$, which up to factor of $(2\pi i)^3$ is the imaginary part of the mirror map at the conifold, is given terms of values of
Dirichlet $L$-functions at $s=2$.}
 \label{localgeometries}
\end{table}

Following ~\cite{Katz:1996fh}\cite{Hori:2000kt} the non-compact mirror geometry $W$ can be obtained from a conic bundle
\be 
u v = H(X,Y,z) 
\ee
over $\mathbb{C}^*\times \mathbb{C}^*$, where the conic fiber degenerates to 
two lines over the punctured elliptic curve 
\be    
{\cal C}(z)\; =\;  \bigl\{(X,Y) \in\mathbb{C}^*\times \mathbb{C}^* | H(X,Y,z)=0 \bigr\}\, .
\ee 
The  holomorphic $(3,0)$-form $\Omega=\frac{{\rm d} H \wedge {\rm d} X \wedge {\rm d} Y}{H X Y}$ of $W$ gives the 
meromorphic one form $\lambda(z)=(2\pi i)^2\log(Y) \frac{{\rm d} X}{X}$ on ${\cal  C}(z)$, where $z$ denotes 
the one complex structure parameter of $W$.  If $S$ allows a toric description one can  directly apply a local version of Batyrev's construction~\cite{Katz:1996fh}\cite{Hori:2000kt} to 
obtain $H(X,Y,z)$. The general del Pezzo's $B_k$, $k=5,\ldots,8$ do not admit a such a toric 
description, but the one-parameter families can be obtained from a restricted toric representation as explained 
in~\cite{Huang:2013yta}, where also Weierstrass forms of the elliptic curves ${\cal C}$ are given.    

\subsection{The period matrix at the conifold points}
Around $z=0$ the $\hat \Gamma$-class formalism defines a preferred basis of periods $\Pi$
also for the local models. One can start with a compact elliptically 
fibered Calabi-Yau threefold $M_{\text{c}}$ over the base $S$ with one section. Let $t$ be the one complexified K\"ahler 
parameter of the  base under consideration. On $M_{\text{c}}$ one has one additional 
complexified K\"ahler parameter $t_{\text{e}}$ measuring the size of the elliptic fiber. The $\hat \Gamma$ class determines a basis of periods of $M_{\text{c}}$ and in the large volume limit $t_{\text{e}} \rightarrow \infty$ this gives three finite periods, corresponding to the D4--brane wrapping $S$, 
the D2--branes wrapping curves in $S$ and the D0--brane restricted to $S$. 
In terms of the topological invariants summarized in Table \ref{localgeometries}, the preferred bases for 
the models under consideration are given by  
\be 
\Pi\, = \, \left(\begin{array}{c}
                         \Pi_{D_4}\\
                         \Pi_{D_0}\\ 
                         \Pi_{D_2} \end{array} \right) \, = \,
                     (2\pi i)^3\begin{pmatrix}
                       \frac{c_2 \cdot J}{24} & \frac{\sigma}{2\pi i} & -\frac{\kappa}{(2\pi i)^2} \\
                       1 & 0 & 0 \\
                       0 & \frac{1}{2\pi i} & 0
                     \end{pmatrix} \Pi_0
\label{localPi}                                         
\ee 
where
\begin{align}
  \Pi_0(z) \, = \, \begin{pmatrix*}[r]
        1\\
        \log(z) + f_0(z) \\
        \frac{1}{2} \log^2(z) + f_0(z) \log(z) + f_1(z)
    \end{pmatrix*}
\end{align}
are solutions of the Picard-Fuchs equation with power series normalized by $f_0(z) = O(z)$ and $f_1(z) = O(z)$. Around the conifold $z=\mu$ we define a basis of solutions by
\be
\Pi_\mu (z)\; = \;\left( 
\begin{array}{c} 
1\\
\nu(\delta)\\
\nu(\delta) \log(\delta)+ O(\delta^2)
\end{array} \right)
\label{localperiodconifold}
\ee
with $\delta = 1-z/\mu$ and $\nu(\delta) = \delta+O(\delta^2)$. We now define the transition matrix $T_\mu$ by $\Pi = T_\mu \Pi_\mu$ (analytically continuing along the open interval $(0,\mu)$) and claim that
\be
T_\mu \; = \; \left(
\begin{array}{ccc}
0  & -4 i\pi^2 \frac{\sqrt{\kappa}}{h}  &0  \\
(2\pi i)^3 & 0& 0 \\
 l  &  2 \pi h \sqrt{\kappa}(1-\log|\mu|) &  - 2 \pi h\sqrt{\kappa}  \\
\end{array}
\right)
\label{Tmcloc}
\ee
with the topological invariants and the $L$-function values $l$ given in Table \ref{localgeometries}. 
Here $s = 0,\frac{1}{2}$ determines whether the instanton numbers $n_0^\beta$ (and more generally $n_g^\beta$) are alternating in sign.  The value $s=\frac{1}{2}$ can be understood as a half integer shift of the $B$-field in the limit.   
The integer $h$  takes values $1,2,3$. It is $2$ and $3$  for the cases of $\mathbb{P}^1 \times \mathbb{P}^1$ and $\mathbb{P}^2$ geometries 
and $1$ for all other. After substituting $z \rightarrow -z$ the mirror curves of $\mathbb{P}^1 \times \mathbb{P}^1$  and $\mathbb{P}^2$ are isogenous to the ones of $B_5$ and $B_6$, respectively, which results in the Picard-Fuchs equations being related by $z \rightarrow -z$. Therefore the analytic continuation 
matrices of these two pairs of geometries are very similar.

The relation~(\ref{Tmcloc}) can be proven by using the modularity of the Picard-Fuchs equations of the family of elliptic curves. This is discussed for example in~\cite{MR2500571} and in the following we exemplify this for the case of $B_5$, corresponding to the family of Legendre curves. On the open interval $(0,\mu)$ we define a basis of functions annihilated by $\theta^2-z/\mu(\theta+1/2)^2$ by
\begin{align}
  \omega_1(z) \, = \, \sum_{n=0}^\infty \binom{2n}{n}^2 z^n \qquad \text{and} \qquad \omega_2(z) \, = \, \log(z)\omega_1(z) + O(z) \, ,
\end{align}
which are essentially periods of the Legendre family of elliptic curves. These have only logarithmic divergences for $z \rightarrow \mu$ and we thus have
\begin{align}
  \Pi_0(z) \, = \, \int_\mu^z \left( \begin{array}{c}
                            0 \\
                            \omega_1(z') \\
                            \omega_2(z')
                          \end{array}\right) \, \frac{\dd z'}{z'} +
  \left( \begin{array}{c}
           1 \\
           c_1 \\
           c_2
         \end{array}\right)
\end{align}
with
\begin{align}
  c_1 \, &= \, \log(\mu) +\int_0^\mu \frac{\omega_1(z)-1}{z} \, \dd z \\
  c_2 \, &= \, \frac{1}{2}\log(\mu)^2 +\int_0^\mu \frac{\omega_2(z)-\log(z)}{z} \, \dd z \, .
\end{align}
The expansions of the functions $\omega_1$ and $\omega_2$ around $z=\mu$ are well known and have the form
\begin{align}
  \omega_1(z) \, &= \, -\frac{1}{\pi}\log(\mu \delta) - \frac{1}{2\pi}\delta - \frac{1}{4\pi}\log(\mu \delta)\delta + O(\log(\delta)\delta^2)\\
  \omega_2(z) \, &= \, -\pi  -\frac{\pi}{4}\delta + O(\delta^2) \, .
\end{align}
and so it follows that the transition matrix is given by
\begin{align}
  T_\mu \, &= \, (2\pi i)^3\begin{pmatrix}
                       \frac{c_2 \cdot J}{24} & \frac{\sigma}{2\pi i} & -\frac{\kappa}{(2\pi i)^2} \\
                       1 & 0 & 0 \\
                       0 & \frac{1}{2\pi i} & 0
                     \end{pmatrix}\begin{pmatrix}
    1 & 0 & 0 \\
    c_1 & (\log(\mu)-1)/\pi & 1/\pi \\
    c_2 & \pi & 0 
  \end{pmatrix}
\end{align}
and it only remains to compute the constants $c_1$ and $c_2$. To do this we introduce the Hauptmodul $t$ of $\Gamma_0(4)$ defined by
\begin{align}
  t(\tau) \, = \, \frac{\eta(\tau)^8\eta(4\tau)^{16}}{\eta(2\tau)^{24}} \, .
\end{align}
This maps the straight line from $0$ to $\infty$ to the straight line from $\mu$ to $0$ and hence we have
\begin{align}
  c_1 \, &= \, \log(\mu) -\int_0^\infty \frac{\omega_1( t(\tau))-1}{t(\tau)} t'(\tau) \, \dd \tau \\
         &= \, \log(\mu) - \int_0^\infty \left( 2\pi i-\frac{t'(\tau)}{t(\tau)} \right) \, \dd \tau - \int_0^\infty \left( \omega_1(t(\tau))\frac{t'(\tau)}{t(\tau)}-2\pi i \right) \, \dd \tau \\
         &= \, - \int_0^\infty \left( \omega_1(t(\tau))\frac{t'(\tau)}{t(\tau)}-2\pi i \right) \, \dd \tau \\
         &= \, 8\pi i \int_0^\infty (E_{3,-4}(\tau)+1/4) \, \dd \tau\, .
           \label{eq:Localc1}
\end{align}
with the Eisenstein series of weight 3 with Fourier expansion
\begin{align}
  E_{3,-4}(\tau) \, &= \, -\frac{1}{4} + \sum_{n=1}^\infty \left(\sum_{d|n}d^2\left(\frac{-4}{d}\right) \right) q^n \quad , \qquad q=e^{2\pi i \tau} \, .
                      \label{eq:Eisenstein}
\end{align}
In the same way one gets
\begin{align}
  c_2 \, = \, -16\pi^2 \int_0^\infty \tau(E_{3,-4}(\tau)+1/4) \, \dd \tau \, .
  \label{eq:Localc2}
\end{align}
From~\eqref{eq:Eisenstein} we find that
$\int_0^\infty (E_{3,-4}(ix)+1/4) x^{s-1} \, \dd x$  equals $(2 \pi)^{-s} \Gamma(s) \zeta(s) L_{-4}(s-2)$ (initially for $\text{Re }s>3$ and then by analytic continuation for $\text{Re }s>0$, since the Eisenstein series is small for $\tau \rightarrow 0$), so
\begin{align}
  c_1 \, = \, -4\, L_{-4}'(-1) \, = \, -\frac{8}{\pi} L_{-4}(2) \qquad \text{ and } \qquad c_2 \, = \, 4\zeta(2)L_{-4}(0) \, = \, \frac{\pi^2}{3} \, .
\end{align}

Numerically the values agree with the ones calculated in~\cite{Klemm:1999gm}. 
Shortly after these numbers were published (because of their significance for the growth of 
the  $|n^\beta_g|$)  the authors received an e-mail from Fernando Rodriguez Villegas  pointing 
out the relations to  $L$-function values (partially based on~\cite{MR1691309})~\footnote{A.K.\   wants to thank Fernando for pointing
out the connection to number theory, which he only appreciated with a long delay.}.       
Using the identity $L_{-3}(2)=\frac{4}{3 \sqrt{3}} {\rm Im}({\rm Li}_2(e^{\pi i/3}))$ and $L_{-4}(2)={\rm Im}({\rm Li}_2(i))$   
ones sees that the value for $t$ at the conifold  agrees for the local $\mathbb{P}^2$  with the value that was conjectured  
from the matrix model in \cite{Marino:2015ixa}. Similarly for $\mathbb{P}^1 \times \mathbb{P}^1$ it correspond to the value calculated 
for $m=1$, in (B.5)  in~\cite{Kashaev:2015wia}. Let us finally remark that the asymptotic growth of the absolute 
value of the instanton numbers $|n^\beta_g|$  is given also for the local case by (\ref{asymgrow}). 
But in this case $X^0(\mu)=(2\pi i)^3$. Hence the asymptotic growth of $|n^\beta_g|$ is exactly 
determined by the $L$-function values given in Table  \ref{localgeometries}.

\appendix

\section{Appendix: Modular forms and arithmetic algebraic 
geometry}

In the first two parts of this appendix we review the general theory of modular forms and their associated period polynomials, which leads to the definition of periods and quasiperiods of modular forms. In the second part we review the cohomological structure of smooth projective varieties, which for example gives rise to periods and zeta functions. We sketch how the different cohomology groups define motives and that also to certain modular forms one can attach motives.

\subsection{Cusp forms and periods}
\label{sec:modformsperiods}
In this section we define the periods associated with modular forms for discrete and cofinite subgroups $\Gamma$ of $\rm{SL}(2,\mathbb{R})$. For us the relevant examples are the level $N$ subgroups $\Gamma_0(N)\subseteq \rm{SL}(2,\mathbb{Z})$. We start the section by reviewing a few basic facts about these groups and the properties of modular forms. Then we describe how one can associate period polynomials with modular forms and construct these explicitly for the group $\Gamma_0^*(25)$ and weight~4.

\subsubsection{Review of holomorphic modular forms}
\label{sec:revi-holom-modul}

In this section we review some elementary facts about holomorphic
modular forms. For further details, see e.g.~\cite{Zagier123} or~\cite{CohenStromberg}. 

The group $\SLR2$ of real $2\times 2$
matrices of determinant 1 acts as usual on the complex
upper half plane $\fH = \{ \tau \in \mC | \Im \tau > 0 \}$ by $\tau
\mapsto g\tau = \frac{a\tau+b}{c\tau+d}$ for $g = \abcd \in \SLR2$ and this action also extends to $\fH \cup \mP^1(\mR)$. Elements in $\SLR2$ which have exactly one fixed point in $\mP^1(\mR)$ are called \emph{parabolic} elements and every parabolic element is conjugate to $\pm T$, where $T = \left(\substack{ 1\, 1\\0\, 1}\right)$. Now let $\Gamma$ be a discrete subgroup of $\SLR2$ that is cofinite, i.e.\ $\Gamma \backslash \fH$ has finite hyperbolic area. The fixed points in $\mP^1(\mR)$ with respect to parabolic elements of $\Gamma$ are called the \emph{cusps} of $\Gamma$ and we denote the union of $\fH$ and the set of cusps of $\Gamma$ by $\overline{\fH}$. The action of $\Gamma$ can be restricted to $\overline{\fH}$ and two cusps are said to be equivalent if they are in the same $\Gamma$ orbit. There are only finitely many equivalence classes of cusps. 

For any function $f:\fH \to \mC$, integer $k\in \mZ$, and  $g=\abcd\in
\SLR2$ one writes
\begin{equation}
  \label{slashgamma}
  (f|_k g)(\tau)\=(c\tau+ d)^{-k} f(g\tau)
\end{equation}
and calls $|_k$ the weight $k$ \emph{slash operator}. For any $k\in \mZ$ we define the vector space $M_k(\Gamma)$ of
(holomorphic) \emph{modular forms} by 
\begin{equation}
    M_k(\Gamma) \= \{ f : \fH \to \mC \mid
f|_k\gamma = f\; \forall\,\gamma \in \Gamma, f \text{ holomorphic on }\overline{\fH} \} \, ,
\end{equation}
where $f$ is said to be holomorphic (vanish) at a cusp fixed by $\pm g T g^{-1} \in \Gamma$ if $(f|_kg)(x+iy)$ is bounded (vanishes) for $y \rightarrow \infty$. A modular form $f \in M_k(\Gamma)$ is a \emph{cusp form} if it vanishes at all cusps. We denote
the subspace of cusp forms by $S_k(\Gamma) \subseteq M_k(\Gamma)$. The spaces $M_k(\Gamma)$ and hence $S_k(\Gamma)$ are
finite--dimensional and there are standard formulas for $\dim
M_k(\Gamma)$ and $\dim S_k(\Gamma)$.

Modular forms have Fourier expansions around each cusp, i.e.\ for a cusp fixed by $ \pm g T g^{-1} \in \Gamma$ one finds that $(f|_kg)(\tau+1) = (\pm 1)^k(f|_kg)(\tau)$ and hence there is an expansion
\begin{align}
    (f|_kg)(\tau) \= \sum_m a_{g,m} \, q^m \qquad \text{with} \qquad q \= e^{2\pi i \tau} \, ,
\end{align}
where, depending on $(\pm 1)^k$, the sum runs over positive integers or positive half integers. If $f$ is a cusp form we further have $a_{g,0} = 0$. If $T \in \Gamma$ we abbreviate $a_{1,m}$ by $a_m$ and then have
\begin{align}
  f(\tau) \= \sum_{m=0}^\infty a_m \, q^m \, .
  \label{eq:FourierInfinity}
\end{align}

\subsubsection{Hecke operators and Atkin-Lehner involutions}
\label{sec:hecke-operators}

From now on we take for $\Gamma$ the level $N$ subgroup
\begin{equation}
  \Gamma_0(N)\=\left.\left\{
    \left(\begin{mmatrix} a & b\\c & d\end{mmatrix}\right)
    \in \SLZ2 \;\right|\; c\equiv 0 \mod N\;\right\}  \qquad  (N\in \mathbb{N} )
  \label{eq:1}                                               
\end{equation}
and for each $n\in\mN$ with $(n,N)=1$ define the \emph{Hecke operator} $T_n$, acting on $M_k(\Gamma_0(N))$, as
follows. Let
\begin{equation}
  \label{eq:3}
  \cM_{n,N}\=\left.\left\{g=\left(\begin{mmatrix} a & b\\c & d\end{mmatrix}\right) \in \tM_2(\mZ) \;\right| \; \det(g) = n, c
    \equiv 0 \mod N \right\} \, ,
\end{equation}
where $\tM_2(\mZ)$ denotes the set of integral $2\times 2$ matrices. Note that this set is stabilized under left and right multiplication by any $\gamma \in \Gamma_0(N)$. For $f \in M_k(\Gamma_0(N))$ we then define 
\begin{equation}
  \label{eq:Hecke}
  f|_kT_n \=n^{k-1}\sum_{M\in \Gamma_0(N)\backslash \cM_{n,N}} f |_k M\ ,
\end{equation}
where the weight $k$ slash operator on the right is defined as in~(\ref{slashgamma}) even
though the matrices $M$ do not have determinant~1. 
The sum is over any set of representatives for the left action of
$\Gamma_0(N)$ on $\cM_{n,N}$, a convenient choice being 
\begin{equation}
  \cM_{n}^{[\infty]}\=\left.\left\{\left(\begin{mmatrix} a & b\\0 & d\end{mmatrix}\right) \in \tM_2(\mZ) \;\right| \; a d =n, \ 0\le b < d \right\}\ .
  \label{reprsforMnN}
\end{equation}
Note that the cardinality of this set equals
$\sigma_1(n)$, the sum of divisors of $n$. In particular, the sum
in~(\ref{eq:Hecke}) is finite and does not depend on the choice of representatives since $f$ is modular. It is easy to see that $f|_kT_n$ is again modular since the set $\Gamma_0(N)\backslash \cM_{n,N}$ is invariant under right multiplication by any $\gamma \in \Gamma_0(N)$. We further see that $T_n$ maps cusp forms to cusp forms. Since $T \in \Gamma_0(N)$ we have the Fourier expansion (\ref{eq:FourierInfinity}) and if one chooses the representatives as in (\ref{reprsforMnN})
one gets a formula for the action of $T_n$ on the Fourier expansion of $f$. For cusp forms this gives
\begin{equation}
  \label{Heckeexplicit}
  (f|_kT_n)(\tau)  \=\sum_{m=1}^\infty \sum_{r|(m,n)\atop r>0} r^{k-1}\, a_{mn/r^2}\, q^m\ .
\end{equation}
Using the fact that the $T_n$ for different $n$ commute which each
other, and that they are self--adjoint for a certain scalar product on
$S_k(\Gamma_0(N))$, one can choose a common basis of \emph{eigenforms} $f$ of $S_k(\Gamma_0(N))$ such
that 
\begin{equation}
  \label{eq:5}
  f|_k T_n\=\lambda_n f\qquad \forall\;n \in\mN, \; (n,N) \= 1 \ .
\end{equation}
From~(\ref{Heckeexplicit}) one then gets $a_n = \lambda_na_1$ for $(n,N)=1$. In particular, for $N=1$ any eigenform is (up to a multiplicative constant) uniquely determined by its Hecke eigenvalues. For $N>1$ this is not true in general but for so called \emph{newforms} $f\in S_k(\Gamma_0(N))$, which are eigenforms under all Hecke operators that are normalized by $a_1=1$ and that can not be written as $f(\tau)=\sum_i f_i(m_i\tau)$ for integers $m_i$ and modular forms $f_i$ of lower level, this is again true. We denote the algebra generated by the Hecke operators by $\mT$.

There is a further set of operators on $M_k(\Gamma_0(N))$ that are relevant
for us. For any exact divisor $Q$ of $N$, i.e. $Q|N$ and $(Q,N/Q) =1$,
any element in the set
\begin{equation}
  \label{eq:12}
  \cW_Q \= \frac{1}{\sqrt{Q}}
  \begin{pmatrix}
    Q\mZ & \mZ \\
    N\mZ & Q\mZ\\
  \end{pmatrix}
  \cap \SLR2
\end{equation}
normalizes $\Gamma_0(N)$ and the product of any two elements of $\cW_Q$ is in
$\Gamma_0(N)$. Hence, any $W_Q \in \cW_Q$ induces an
involution on $\Gamma_0(N)\backslash \overline{\fH}$ via the action of $W_Q$ on
$\overline{\fH}$. These involutions do not depend on the choice of $W_Q \in \cW_Q$ and are called the \emph{Atkin--Lehner involutions}. They generate a group
isomorphic to $(\mZ/2\mZ)^{\ell}$, where $\ell$ is the number of prime
factors of~$N$. The subgroup of $\SLR2$ obtained by adjoining
all Atkin--Lehner involutions to $\Gamma_0(N)$ is denoted by $\Gamma_0^*(N)$, i.e.
\begin{equation}
  \Gamma_0^*(N) \= \bigcup_{\substack{Q\mid
    N\\(Q,N/Q)=1}} W_Q\Gamma_0(N) \ .
  \label{eq:15}
\end{equation}
It normalizes $\Gamma_0(N)$ in $\SLR2$ and permutes the cusps of $\Gamma_0(N)$. 
Each Atkin--Lehner involution on $\Gamma_0(N) \backslash \overline{\fH}$ induces an involution (also called
Atkin--Lehner involution) on $M_k(\Gamma_0(N))$ by $f \mapsto f|_kW_Q$, which is again independent of the choice
of $W_Q$. These involutions commute with each other as well
as with the operators of~$\mT$ and define an eigenspace decomposition
$M_k(\Gamma_0(N))  = \bigoplus_{\epsilon}
M^{\epsilon}_k(\Gamma_0(N))$, where the sum ranges over the characters
of $(\mZ/2\mZ)^\ell$. The fact that the Atkin--Lehner involutions
commute with~$\mT$ implies that every newform automatically belongs
to one of these eigenspaces.

\subsubsection{Eichler integrals and period polynomials}  
\label{Periodpolynomials}

We consider the normalized derivative $D = \frac{1}{2\pi i}
\frac{\diff{}{}}{\diff{}{\tau}}$, where the factor $\frac{1}{2\pi i}$ is introduced so
that $D$ sends periodic functions with rational
Fourier coefficients to periodic functions with rational
Fourier coefficients. The operator $D$ does not preserve modularity. Instead,
we have the following elementary but not obvious proposition.
\begin{prop}
  (Bol's identity~\cite{Bol:1949ab}) Let $k \in \mN$ be an integer, $k
  \geq 2$. Then for any meromorphic function $f: \fH \to \mathbb{P}^1(\mathbb{C})$ we have
  \begin{equation}
    \label{bolsidentity}
    D^{k-1}(f|_{2-k} g)\=(D^{k-1} f)|_k g\qquad ( \forall\; g
    \in \SLR2) \ .
  \end{equation}
\end{prop}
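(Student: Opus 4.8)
The plan is to prove the identity with the ordinary derivative $\partial_\tau$ in place of $D$, and to reduce the statement for an arbitrary $g\in\SLR2$ to a single, group-independent differential-operator identity via a change of variables. Since $D^{k-1}=(2\pi i)^{-(k-1)}\partial_\tau^{k-1}$, the scalar $(2\pi i)^{-(k-1)}$ appears on both sides of \eqref{bolsidentity} and cancels, so it suffices to prove $\partial_\tau^{k-1}(f|_{2-k}g)=(\partial_\tau^{k-1}f)|_k g$. Both sides are meromorphic on $\fH$, so it is enough to verify the identity where $f$ is holomorphic, the general case following because two meromorphic functions agreeing on a dense open set coincide.

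Write $g=\abcd$, $w=g\tau$ and $j=c\tau+d$. The case $c=0$ is immediate: then $j=d=a^{-1}$ is constant and $g$ acts as the affine map $\tau\mapsto a^2\tau+ab$, whence $\partial_\tau^{k-1}\big(a^{2-k}f(a^2\tau+ab)\big)=a^{k}f^{(k-1)}(a^2\tau+ab)=d^{-k}(\partial_\tau^{k-1}f)(g\tau)$. For $c\neq0$ the key is the substitution $v:=(c\tau+d)^{-1}$. From $\det g=1$ one finds $a-cw=(c\tau+d)^{-1}=v$, so $w=(a-v)/c$, and from $v=(c\tau+d)^{-1}$ one gets $\partial_\tau=\tfrac{dv}{d\tau}\,\partial_v=-c\,v^2\partial_v$. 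Putting $\tilde f(v):=f(w)=f((a-v)/c)$, so that $\tilde f^{(m)}(v)=(-c)^{-m}f^{(m)}(w)$, and noting $f|_{2-k}g=(c\tau+d)^{k-2}f(w)=v^{2-k}\tilde f(v)$, the left-hand side becomes
\begin{equation*}
  \partial_\tau^{k-1}\big(v^{2-k}\tilde f(v)\big)\=(-c)^{k-1}\,(v^2\partial_v)^{k-1}\big(v^{2-k}\tilde f\big).
\end{equation*}
Thus everything collapses to the single operator identity
\begin{equation}
  (v^2\partial_v)^{n}\big(v^{1-n}F\big)\=v^{n+1}\,\partial_v^{n}F\qquad(n=k-1),
  \label{operatoridentity}
\end{equation}
because, granting \eqref{operatoridentity} with $F=\tilde f$ (note $2-k=1-n$), the left-hand side equals $(-c)^{k-1}v^{k}\tilde f^{(k-1)}(v)=v^{k}f^{(k-1)}(w)=(c\tau+d)^{-k}(\partial_\tau^{k-1}f)(g\tau)$, which is exactly $(\partial_\tau^{k-1}f)|_k g$; here the powers of $-c$ cancel against $\tilde f^{(k-1)}=(-c)^{-(k-1)}f^{(k-1)}$.

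The main (and essentially only nontrivial) step is therefore \eqref{operatoridentity}, which I would prove by induction on $n$. The case $n=0$ reads $vF=vF$. For the inductive step one computes $(v^2\partial_v)(v^{-n}F)=v^{1-n}(-nF+vF')$ and applies the inductive hypothesis to the function $-nF+vF'$, obtaining $v^{n+1}\partial_v^{n}(-nF+vF')$; since $\partial_v^{2}v=0$, Leibniz gives $\partial_v^{n}(vF')=v\,\partial_v^{n+1}F+n\,\partial_v^{n}F$, so $\partial_v^{n}(-nF+vF')=v\,\partial_v^{n+1}F$ and the expression becomes $v^{n+2}\partial_v^{n+1}F$, completing the induction. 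I expect the only delicate points in the write-up to be purely bookkeeping: the cancellation of the factors of $-c$ coming from $\partial_\tau=-c\,v^2\partial_v$ and from $\tilde f^{(m)}=(-c)^{-m}f^{(m)}$, and tracking the exponent shift $2-k=1-n$ that aligns the computation with \eqref{operatoridentity}.
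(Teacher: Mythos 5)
Your proof is correct. Note that the paper itself gives no proof of this proposition at all --- it simply labels the identity ``elementary but not obvious'' and cites Bol's original 1949 article --- so there is no argument of the paper's to compare against; your write-up would in fact be a self-contained addition. The reduction to the ordinary derivative, the dispatching of the $c=0$ case, and the substitution $v=(c\tau+d)^{-1}$ turning the claim into the operator identity $(v^2\partial_v)^{n}(v^{1-n}F)=v^{n+1}\partial_v^{n}F$ are all sound; I checked the induction (the key point being $\partial_v^{n}(vF')=v\,\partial_v^{n+1}F+n\,\partial_v^{n}F$, which indeed cancels the $-nF$ term) and the bookkeeping of the factors $(-c)^{\pm(k-1)}$, which cancel exactly as you say, leaving $v^k f^{(k-1)}(w)=(c\tau+d)^{-k}(\partial_\tau^{k-1}f)(g\tau)$ on both sides. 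The only stylistic remark: the reduction ``both sides are meromorphic, so verify where $f$ is holomorphic'' is fine, but you could equally observe that the entire computation is a pointwise chain-rule identity valid at every $\tau$ with $f$ holomorphic near $g\tau$, so the meromorphic identity follows immediately without invoking density separately.
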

If $f$ is modular of weight $k$ on some group $\Gamma$, then any holomorphic function
$\widetilde{f} : \fH \to \mC$ with the property that $D^{k-1}
\widetilde f = f$ is called an {\sl Eichler integral of $f$}. The
Eichler integral exists, but is well--defined only up to a degree $k-2$ polynomial $p
\in V_{k-2}(\mC)$, where $V_{k-2}(K)=\langle 1,\ldots,\tau^{k-2}\rangle_{K}$ for any field $K$.
For instance, we can take $\widetilde f$ to be $\widetilde{f}_{\tau_0}$, where
\begin{align}
  \widetilde{f}_{\tau_0}(\tau) \= \frac{(2\pi i)^{k-1}}{(k-2)!} \int_{\tau_0}^\tau(\tau-z)^{k-2}\, f(z) \, \dd z
  \label{ftau0}
\end{align}
for any $\tau_0 \in \fh$, or even $\tau_0 \in \overline \fh$ if $f$ is a cusp form. In particular, if $T \in \Gamma$, then we have
\begin{align}
  \widetilde{f}_\infty (\tau) \, = \,\sum_{m=1}^\infty \frac{a_m}{m^{k-1}}\, q^m \quad \text{if} \quad
  f (\tau) \, = \,\sum_{m=1}^\infty a_m\, q^m \; \in S_k(\Gamma)\, .
  \label{ftauinf}
\end{align}
For later purposes we observe that $\widetilde f_\infty$ is related to $\widetilde{f}_{\tau_0}$ for any $\tau_0 \in \fh$ by
\begin{align}
    \widetilde{f}_\infty(\tau)-\widetilde{f}_{\tau_0}(\tau) \= \frac{(2\pi i)^{k-1}}{(k-1)!} \int_{\tau_0}^{\tau_0-1}B_{k-1}(\tau-z)\, f(z) \, \dd z \, ,
    \label{eq:BernRel}
\end{align}
where $B_n$ is the $n$th Bernoulli polynomial. Indeed, from $B_n(x+1)=B_n(x)+nx^{n-1}$ and $f(z-1)=f(z)$ we find
that this equation does not depend on $\tau_0$ and since it is true for $\tau_0 = \infty$ it is true for all $\tau_0$.

For a fixed choice of Eichler integral $\Tilde{{f}}$ it follows from Bol's identity~(\ref{bolsidentity}) that 
\begin{equation}
  r_{f}(\gamma)\, := \, \widetilde f|_{2-k} (\gamma-1)(\tau) \, \in \, V_{k-2}(\mC) \qquad
  \forall\; \gamma  \in \Gamma
  \label{periodpolynom}
\end{equation}
i.e. $r_f(\gamma)$ is a polynomial of degree $k-2$, which is called a
{\sl period polynomial} of $f$ for $\gamma\in
\Gamma$. Here we extended the action of the slash operator to the group
algebra $\mC[\SLR2]$ in the obvious way (viz., $f|_k\sum g_i =\sum f|_kg_i
$, where we write $\sum g_i$ instead of the more correct~$\sum [g_i]$). The period polynomials measure the failure of modularity of the Eichler integral. An immediate consequence of the definition is that the period polynomials satisfy the cocycle condition 
\begin{equation}
    r_f(\gamma\gamma') \=
  r_f(\gamma)|_{2-k}\gamma' + r_f(\gamma') \, ,
\end{equation}
where we define an action of $\SLZ2$ on $V_{k-2}(\mC)$ by extending the slash operator~(\ref{slashgamma}) to complex polynomials
$p\in V_{k-2}(\mC)$ in the obvious way.

Since the Eichler integral $\widetilde f$ is unique only up to addition of polynomials $p \in V_{k-2}(\mathbb{C})$ it follows that
$r_f$ is unique only up to addition of maps of the form $\gamma \mapsto p|_{2-k}(\gamma - 1)$ for polynomials $p \in V_{k-2}(\mathbb{C})$. 
The dependence on $p$ is described in terms of group
cohomology. Let $K$ be any field so that $\Gamma \subset \text{SL}(2,K)$. We define the group of cocycles
\begin{equation}
  \label{eq:7}
  \tZ^1(\Gamma,V_{k-2}(K)) \= \{ r: \Gamma \to V_{k-2}(K) \mid r(\gamma\gamma') =
  r(\gamma)|_{2-k}\gamma' + r(\gamma') \;\forall\; \gamma,\gamma' \in \Gamma\}
\end{equation}
and the group of coboundaries by
\begin{equation}
  \label{eq:6}
  \tB^1(\Gamma,V_{k-2}(K)) \= \{ \Gamma \ni \gamma \mapsto p|_{2-k}(\gamma-1) \mid
   p \in V_{k-2}(K)\}  \ .
\end{equation}
Then, the (first) group cohomology group is defined as the quotient
\begin{equation}
  \label{eq:9}
  \tH^1(\Gamma, V_{k-2}(K)) \= \frac{\tZ^1(\Gamma, V_{k-2}(K))}{\tB^1(\Gamma,V_{k-2}(K))} \ .
\end{equation}
It follows from the definition~(\ref{periodpolynom}) that the freedom in the choice of the Eichler integral $\widetilde f$ results in a
coboundary. Therefore we can associate to~$f$ a unique cohomology class
$[r_f] \in \tH^1(\Gamma, V_{k-2}(\mathbb{C}))$. Furthermore,
we define the group of \emph{parabolic cocycles}
\begin{equation}
  \label{eq:10}
  \tZ^1_{\parab}(\Gamma, V_{k-2}(K))
  \=\{r \in \tZ^1(\Gamma, V_{k-2}(K)) \mid r(\gamma)\in V_{k-2}(K) |_{2-k}(\gamma-1) \  \forall\; \text{parabolic} \; \gamma\in \Gamma \} \ .
\end{equation}
Trivially, one has ${\tB^1 \subseteq
\tZ^1_{\rm{par}}\subseteq \tZ^1}$. Hence, one can define the parabolic cohomology
group 
\begin{equation}
  \label{eq:11}
  \tH^1_{\parab}(\Gamma, V_{k-2}(K)) \=
  \frac{\tZ^1_{\parab}(\Gamma, V_{k-2}(K))}{\tB^1(\Gamma,V_{k-2}(K))} \, \subseteq \, \tH^1(\Gamma, V_{k-2}(K)) \, ,
\end{equation}
where the codimension of the embedding is in general less or equal
then the number of non-equivalent cusps times the dimension of $V_{k-2}(K)$. We have the following proposition. 
\begin{prop}
  \label{ParabolicCohomology}  
  For any $f \in S_k(\Gamma)$ one has $r_f \in \tZ^1_{\parab}(\Gamma, V_{k-2}(\mC) )$.
\end{prop}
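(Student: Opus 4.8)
The plan is to prove parabolicity by choosing, for each parabolic element, an Eichler integral whose base point is the cusp that the element fixes, and then comparing it with whatever fixed Eichler integral was used to define $r_f$. Since $r_f\in\tZ^1(\Gamma,V_{k-2}(\mC))$ is already known from the cocycle relation (cf.\ \eqref{eq:7}), only the extra condition in \eqref{eq:10} remains, and it must be checked for a single globally chosen Eichler integral $\widetilde f$ (say $\widetilde f=\widetilde f_\infty$), which is what fixes the cocycle representative.

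First I would establish the base-point transformation law
\[
  \widetilde f_{\tau_0}\big|_{2-k}\gamma\=\widetilde f_{\gamma^{-1}\tau_0}\qquad(\gamma\in\Gamma,\ \tau_0\in\overline{\fH})\,,
\]
obtained by the substitution $z\mapsto\gamma z$ in the integral \eqref{ftau0}: using $f|_k\gamma=f$, the identity $\gamma\tau-\gamma z=(\tau-z)/\big((c\tau+d)(cz+d)\big)$ and $\dd(\gamma z)=(cz+d)^{-2}\dd z$, all powers of $(cz+d)$ cancel and the residual factor $(c\tau+d)^{2-k}$ is precisely the one absorbed by the slash operator. Given a parabolic $\gamma\in\Gamma$ fixing a cusp $s\in\mP^1(\mR)$, I would then take $\tau_0=s$; because $\gamma^{-1}s=s$, the transformation law immediately yields $\widetilde f_s\big|_{2-k}(\gamma-1)=0$.

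The deduction of the condition in \eqref{eq:10} is then short. The difference $p_s:=\widetilde f-\widetilde f_s$ of two Eichler integrals of the same $f$ is annihilated by $D^{k-1}$, so by Bol's identity \eqref{bolsidentity} it lies in $V_{k-2}(\mC)$. Hence
\[
  r_f(\gamma)\=\widetilde f\big|_{2-k}(\gamma-1)\=\widetilde f_s\big|_{2-k}(\gamma-1)+p_s\big|_{2-k}(\gamma-1)\=p_s\big|_{2-k}(\gamma-1)\,,
\]
which lies in $V_{k-2}(\mC)\big|_{2-k}(\gamma-1)$, exactly the parabolicity requirement. Running this over all parabolic $\gamma\in\Gamma$ (with $s=s(\gamma)$ and the corresponding polynomial $p_{s(\gamma)}$) gives $r_f\in\tZ^1_{\parab}(\Gamma,V_{k-2}(\mC))$.

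The main obstacle is entirely contained in the use of $\widetilde f_s$ with $s$ a cusp: I must verify that this integral converges and defines a genuine holomorphic Eichler integral with $D^{k-1}\widetilde f_s=f$. This is exactly where the cusp-form hypothesis $f\in S_k(\Gamma)$ (rather than merely $M_k(\Gamma)$) is indispensable, since a nonzero constant term of $f$ at $s$ would make the integral in \eqref{ftau0} diverge. I would reduce to the case $s=\infty$ by conjugating $\gamma$ and $f$ by a matrix sending $s$ to $\infty$, where the exponential decay of $f$ toward the cusp makes convergence and holomorphy transparent; this is the content already anticipated in the remark after \eqref{ftau0} that $\tau_0\in\overline{\fH}$ is admissible for cusp forms.
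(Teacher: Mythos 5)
Your proof is correct and follows essentially the same strategy as the paper's: both exhibit, for each parabolic $\gamma$ fixing a cusp $s$, an Eichler integral of $f$ that is invariant under $\gamma$ (the paper via the termwise-integrated Fourier expansion of $f|_k g$ at $s$, you via the line integral $\widetilde f_s$ based at $s$ --- for $s=\infty$ these are the same function), and then use Bol's identity to conclude that this invariant integral differs from the fixed $\widetilde f$ by a polynomial, so that $r_f(\gamma)\in V_{k-2}(\mC)|_{2-k}(\gamma-1)$. Your base-point equivariance $\widetilde f_{\tau_0}|_{2-k}\gamma=\widetilde f_{\gamma^{-1}\tau_0}$ and the convergence of $\widetilde f_s$ at the cusp (which is exactly where the cusp-form hypothesis is used, as in the paper where it enters through $a_{g,0}=0$) are both correct.
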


\begin{proof}
  Recall that $r_f$ is defined by \eqref{periodpolynom} for some fixed Eichler integral $\widetilde f$ of~$f$.
  We have to show that $r_f(\gamma)$ belongs to $V_{k-2}(\mathbb{C})|(\gamma-1)$ for any parabolic~$\gamma\in\Gamma$.
  We can write $\gamma = \pm gTg^{-1} \in \Gamma$ for some $g \in \SLR2$. Then we have a Fourier expansion
  \begin{align}
    (f|_kg)(\tau) \= \sum_m a_{g,m} \, q^m \, ,
  \end{align}
  where $a_{g,0}$ vanishes since $f$ is a cusp form. The function
  \begin{align}
    \biggl(\sum_m \frac{a_{g,m}}{m^{k-1}}q^m \biggr)\Big|_{2-k}g^{-1}
  \end{align}
  is then annihilated by $\gamma-1$, and using Bol's identity we find that it is an Eichler integral of $f$ and hence differs
  from~$\widetilde f$ by an element of $V_{k-2}(\mathbb{C})$. This implies the claim.
\end{proof}

The importance of the parabolic cohomology group
stems from a theorem due to Eichler. We define the space of $\overline{S_k(\Gamma)}$ of antiholomorphic cusp forms as the space of all functions $\overline{f}$ for $f \in S_k(\Gamma)$, where we define $\overline{f}(\tau) = \overline{f(\tau)}$.

\begin{thm}[Eichler-Shimura isomorphism]
  \label{Eichler}  
  The map $f \mapsto [r_f]$ and its complex
conjugate $\overline{f} \mapsto [r_{\overline{f}}] := [\overline{r_{f}}]$ (obtained by complex conjugating the coefficients) induce an isomorphism
\begin{equation}
  \label{eq:keyisomorphism}
  \tH^1_{\parab}(\Gamma, V_{k-2}(\mC) ) \, \cong \, S_k(\Gamma) \oplus
  \overline{S_k(\Gamma)} \ .
\end{equation}
\end{thm}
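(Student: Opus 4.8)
The plan is to exhibit the claimed isomorphism as the $\mC$-linear map
\begin{equation*}
  \Psi\colon S_k(\Gamma)\oplus\overline{S_k(\Gamma)}\longrightarrow \tH^1_{\parab}(\Gamma,V_{k-2}(\mC)),\qquad \Psi(f\oplus\overline g)\=[r_f]+[r_{\overline g}],
\end{equation*}
and to prove that it is well defined, injective, and surjective. First I would record well-definedness: by Bol's identity the cocycle $r_f$ lies in $\tZ^1(\Gamma,V_{k-2}(\mC))$, by Proposition~\ref{ParabolicCohomology} it is parabolic, and replacing the Eichler integral $\widetilde f$ by another choice alters $r_f$ by a coboundary, so $[r_f]\in\tH^1_{\parab}(\Gamma,V_{k-2}(\mC))$ is canonical. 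Conjugating the coefficients in \eqref{periodpolynom} and using that every $\gamma\in\Gamma$ is real shows that $\overline{r_g}$ is again a parabolic cocycle, so $[r_{\overline g}]:=[\overline{r_g}]$ is well defined; linearity of $\Psi$ is immediate from the linearity of the Eichler integral in $f$.

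For injectivity I would introduce a pairing on cohomology and invoke a Haberland-type identity. Equip $V_{k-2}(\mC)\cong\mathrm{Sym}^{k-2}(\mC^2)$ with the $\Gamma$-invariant pairing induced by the standard symplectic form on $\mC^2$. Since parabolic classes are exactly the images of the compactly supported cohomology of the open modular curve $\Gamma\backslash\fH$ in ordinary cohomology, the cup product together with this coefficient pairing descends to a pairing $\{\,\cdot\,,\,\cdot\,\}$ on $\tH^1_{\parab}(\Gamma,V_{k-2}(\mC))$. The key analytic input is that for $f,g\in S_k(\Gamma)$ one has $\{[r_f],[r_g]\}=\{[\overline{r_f}],[\overline{r_g}]\}=0$, while $\{[r_f],[\overline{r_g}]\}=c_k\,\langle f,g\rangle$ for the Petersson product $\langle\,\cdot\,,\,\cdot\,\rangle$ and a nonzero constant $c_k$. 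I would prove this by writing $\langle f,g\rangle$ as an integral of $f\,\overline g\,y^{k-2}$ over a fundamental domain $\cF$, expressing the integrand as an exact form built from the Eichler integral $\widetilde f$ and $\overline g$, and applying Stokes' theorem; the resulting boundary integral over $\partial\cF$ reorganizes, via the $\Gamma$-identifications of paired edges, into the cohomological pairing of the period cocycles, the contributions near the cusps being finite precisely because $f$ and $g$ are cusp forms. Granting this, injectivity is immediate: if $\Psi(f\oplus\overline g)=0$, then pairing against $[\overline{r_h}]$ for arbitrary $h\in S_k(\Gamma)$ yields $c_k\langle f,h\rangle=0$ and hence $f=0$ by nondegeneracy of the Petersson product, and pairing against $[r_h]$ yields $g=0$.

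Surjectivity I would obtain by a dimension count. The source has dimension $2\dim_{\mC}S_k(\Gamma)$, so it suffices to show $\dim\tH^1_{\parab}(\Gamma,V_{k-2}(\mC))=2\dim_{\mC}S_k(\Gamma)$. For this I would pass to the local system $\mV_{k-2}$ on the orbifold $\Gamma\backslash\fH$, use the long exact sequence relating $\tH^1_c$, $\tH^1$, and the cohomology of the cusps together with $\tH^1_{\parab}=\mathrm{Im}(\tH^1_c\to\tH^1)$, and evaluate the Euler characteristic of $\mV_{k-2}$ by Gauss--Bonnet for $\Gamma\backslash\fH$, taking account of the elliptic fixed points and of the parabolic truncation at each of the finitely many cusps. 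Comparing the outcome with the valence (Riemann--Roch) formula for $\dim S_k(\Gamma)$ gives the equality of dimensions, and an injection between finite-dimensional spaces of equal dimension is an isomorphism.

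The main obstacle is the injectivity step, namely establishing the Haberland identity through the Stokes-theorem computation on $\cF$ while controlling the behaviour at the cusps and keeping track of the several boundary edges and their $\Gamma$-identifications. Equivalently, if one prefers the purely cohomological route to surjectivity, the delicate point migrates to the Euler-characteristic computation of the parabolic cohomology of $\mV_{k-2}$ and to checking that the parabolic condition removes exactly the Eisenstein part, so that the dimension drops to $2\dim_{\mC} S_k(\Gamma)$ rather than the larger dimension of the full $\tH^1$.
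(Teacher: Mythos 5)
Your proposal is correct in outline, but it is worth noting that the paper does not actually prove this theorem: its ``proof'' consists of the two-line remark that Eichler established the statement (and the equality of dimensions) for even $k$ and that the general case is in Shimura's book. What you have sketched is, in substance, the classical argument that those references carry out: well-definedness from Bol's identity and Proposition~\ref{ParabolicCohomology}; injectivity from the Haberland/Petersson computation showing that the holomorphic and antiholomorphic images are isotropic for the cup-product pairing on $\tH^1_{\parab}(\Gamma,V_{k-2}(\mC))$ while $\{[r_f],[\overline{r_g}]\}=c_k\langle f,g\rangle$ with $c_k\neq 0$; and surjectivity from the Euler-characteristic count identifying $\dim \tH^1_{\parab}$ with twice the valence-formula dimension of $S_k(\Gamma)$. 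So your route is genuinely more informative than the paper's, at the cost of two nontrivial computations that you describe but do not execute: the Stokes-theorem derivation of the Haberland identity (where the cusp contributions must be shown to vanish, using that $f$ and $g$ are cuspidal, and the boundary edges of $\cF$ must be matched under the $\Gamma$-identifications), and the Euler-characteristic evaluation for the local system attached to $V_{k-2}$ with its corrections at elliptic points and cusps. Two points deserve explicit care if you flesh this out: for odd $k$ with $-1\in\Gamma$ both sides vanish and the argument degenerates trivially, and for $k=2$ the parabolic condition must be seen to cut out exactly the Eisenstein classes so that the dimension is $2g$ rather than $\dim\tH^1$; both are standard but are exactly where a casual dimension count can go wrong. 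As a proof strategy the proposal is sound and, unlike the paper, self-contained in principle.
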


\begin{proof}
  For even $k$ a first result of this type was given by Eichler in \cite{Eichler}, who in particular showed that the dimensions of both sides agree. For the complete proof for even and odd $k$ we refer to Shimura \cite{ShimuraBook}. 
\end{proof}

We now assume that $\varepsilon = \left(\substack{ -1\, 0\\\phantom{-}0\,1}\right)$ normalizes $\Gamma$. We then get an involution $r \mapsto r|_{2-k}\varepsilon$ on $\tZ^1(\Gamma,V_{k-2}(K))$, where we define the action of any normalizer $W \in \text{GL}(2,K)$ of $\Gamma$ on elements in $\tZ^1(\Gamma,V_{k-2}(K))$ by
\begin{align}
  (r|_{2-k}W)(\gamma) \= r(W\gamma W^{-1})|_{2-k}W\, .
\end{align}
Here we generalize that the slash operator acts on polynomials as defined in (\ref{slashgamma}) even when $\det W <0$. The eigenvalues of the involution are $\pm 1$ and we get an induced decomposition
\begin{align}
  \tH^1_{\parab}(\Gamma, V_{k-2}(K) ) \= \tH^1_{\parab}(\Gamma, V_{k-2}(K) )^+ \oplus \tH^1_{\parab}(\Gamma, V_{k-2}(K) )^-\, .
  \label{eq:HodgeDecompPerPol}
\end{align}
It is straightforward to check that, with respect to the Eichler-Shimura isomorphism, the involution $\varepsilon$ on $\tH^1_{\parab}(\Gamma, V_{k-2}(\mC) )$ corresponds to the involution on $S_k(\Gamma) \oplus\overline{S_k(\Gamma)}$ induced by ${f \mapsto (-1)^{k-1}f^*}$, where $f^*(\tau) = f(-\overline{\tau})$. In particular, the restriction of period polynomials to $\tH^1_{\parab}(\Gamma, V_{k-2}(K) )^\pm$ gives the isomorphisms
\begin{align}
  S_k(\Gamma) \cong \tH^1_{\parab}(\Gamma, V_{k-2}(\mC) )^{\pm} \, .
\end{align}

We now fix $\Gamma=\Gamma_0(N)$.
Since $S_k(\Gamma_0(N))$ admits an action by the Hecke algebra~$\mT$, the Eichler-Shimura isomorphism induces an action of $\mT$ on
$\tH^1_{\parab}(\Gamma_0(N), V_{k-2}(\mC) )$. This action can be
described as follows. For a map $r: \Gamma_0(N)
\to V_{k-2}(K)$ and for $n \in \mathbb{N}$ with $(n,N)=1$ we define a map $r|_{2-k}T_n : \Gamma_0(N) \to V_{k-2}(K)$ by
\begin{equation}
  \label{eq:actionheckeoncycles}
  (r|_{2-k}T_n)(\gamma) \= \sum_{i=1}^{\sigma_1(n)} r(\gamma_i)|_{2-k} M_{\pi_\gamma(i)} \, ,
\end{equation}
where $M_i$, $i=1,\dots,\sigma_1(n)$ are chosen representatives of
$\Gamma_0(N) \backslash \cM_{n,N}$ and the $\gamma_i \in \Gamma_0(N)$
are determined by the identity
\begin{equation}
  M_i \gamma \= \gamma_i M_{\pi_\gamma(i)} \ .
  \label{eq:defgamma_i}
\end{equation}
Here, $\pi_\gamma(i)$ denotes a permutation of the indices
$i=1,\dots,\sigma_1(n)$, whose dependence on $\gamma$ is uniquely
determined by~(\ref{eq:defgamma_i}). Using the cocycle property it is straightforward to show that this map can be restricted to $\tZ^1$ and $\tB^1$. Further, the map depends on the choice of representatives of $\Gamma_0(N) \backslash \cM_{n,N}$, but we have the following propositions. 

\begin{prop}
  \label{Heckeoprationonparaboliccocyclesdependence}  
  For any $r \in \tZ^1(\Gamma_0(N),V_{k-2}(K))$ the cohomology class $[r|_{2-k}T_n]$ does not depend on the chosen representatives of $\Gamma_0(N) \backslash \cM_{n,N}$.
\end{prop}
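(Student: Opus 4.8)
The plan is to show that passing from one system of coset representatives $\{M_i\}$ to another changes the cochain $r|_{2-k}T_n$ only by a coboundary, so that its class in $\tH^1_{\parab}(\Gamma_0(N),V_{k-2}(K))$ is unaffected. First I would observe that any two complete sets of representatives for $\Gamma_0(N)\backslash\cM_{n,N}$ are related, after a relabeling of the finite index set, by $M_i' = \delta_i M_i$ with $\delta_i \in \Gamma_0(N)$; since the sum in~(\ref{eq:actionheckeoncycles}) runs over all representatives it is insensitive to this relabeling, so it suffices to treat this case. Because the decomposition into left cosets is intrinsic, right multiplication by a fixed $\gamma \in \Gamma_0(N)$ permutes these cosets in a way independent of the chosen representatives; hence the permutation $\pi_\gamma$ of~(\ref{eq:defgamma_i}) is the same for both systems, and writing $M_i'\gamma = \gamma_i' M'_{\pi_\gamma(i)}$ and comparing with $M_i\gamma = \gamma_i M_{\pi_\gamma(i)}$ yields the explicit relation $\gamma_i' = \delta_i\,\gamma_i\,\delta_{\pi_\gamma(i)}^{-1}$.

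Next I would expand $r(\gamma_i')$ by repeated use of the cocycle identity $r(\alpha\beta) = r(\alpha)|_{2-k}\beta + r(\beta)$, together with the elementary consequences $r(1)=0$ and $r(\delta^{-1})|_{2-k}\delta = -r(\delta)$. Substituting into the new cochain $(r|_{2-k}T_n)'(\gamma) = \sum_i r(\gamma_i')|_{2-k}M'_{\pi_\gamma(i)}$ and using associativity of the slash action $\bigl(x|_{2-k}g\bigr)|_{2-k}h = x|_{2-k}(gh)$ to cancel the factors $\delta_{\pi_\gamma(i)}^{-1}$ and $\delta_{\pi_\gamma(i)}$, together with the key identity $\gamma_i M_{\pi_\gamma(i)} = M_i\gamma$ from~(\ref{eq:defgamma_i}), I expect everything to collapse to
\begin{equation*}
  (r|_{2-k}T_n)'(\gamma) \= (r|_{2-k}T_n)(\gamma) \;+\; p|_{2-k}(\gamma-1)\,,
  \qquad p \;:=\; \sum_{i=1}^{\sigma_1(n)} r(\delta_i)|_{2-k}M_i \,\in\, V_{k-2}(K)\,.
\end{equation*}
Here the contribution $\sum_i r(\delta_{\pi_\gamma(i)})|_{2-k}M_{\pi_\gamma(i)}$ coming from the $r(\delta_{\pi_\gamma(i)}^{-1})$ terms re-indexes to $\sum_i r(\delta_i)|_{2-k}M_i = p$ because $\pi_\gamma$ is a permutation, and this is precisely what forces $p$ to be independent of $\gamma$. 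Since slashing by $M_i$ preserves $V_{k-2}(K)$ even though $\det M_i = n \neq 1$ (the action being extended to integral matrices as in~(\ref{slashgamma})), $p$ is a genuine element of $V_{k-2}(K)$, and the difference is the coboundary $p|_{2-k}(\gamma-1)$, proving the claim.

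The computation is entirely formal, so the only real obstacle is the bookkeeping: keeping the permutation index $j = \pi_\gamma(i)$ straight through the cocycle expansions and verifying that the two $\delta$-dependent correction terms combine into the single $\gamma$-independent polynomial $p$ rather than into some $\gamma$-dependent expression. The one conceptual point that must be argued rather than computed is that $\pi_\gamma$ is independent of the representative system; this follows because the left cosets themselves, and hence their permutation under right translation by $\gamma$, are intrinsic to $\cM_{n,N}$.
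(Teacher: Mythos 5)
Your proposal is correct and follows essentially the same route as the paper: after relabeling so that $M_i'=\delta_i M_i$ with $\delta_i\in\Gamma_0(N)$, the paper likewise shows (citing the cocycle property) that the two cochains differ by the coboundary of $p=\sum_i r(\delta_i)|_{2-k}M_i$. Your write-up merely makes explicit the cocycle expansion, the identity $\gamma_i'=\delta_i\gamma_i\delta_{\pi_\gamma(i)}^{-1}$, and the re-indexing under $\pi_\gamma$ that the paper leaves to the reader.
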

 
\begin{proof}
Let $r|_{2-k}T_n'$ be defined with respect to a second choice $M_i'$, $i=1,...,\sigma_1(n)$ of representatives of $\Gamma_0(N) \backslash \cM_{n,N}$. We order these so that $M_i'=\gamma'_i M_i$ for uniquely determined $\gamma'_i \in \Gamma_0(N)$. By using the cocycle property one finds that for all $\gamma \in \Gamma_0(N)$
\begin{equation}
    (r|_{2-k}T_n' - r|_{2-k}T_n)(\gamma) \= \Big( \sum_{i=1}^{\sigma_1(n)} r(\gamma'_i)|_{2-k}M_i \Big)|_{2-k} (\gamma - 1)
\end{equation}
and thus $[r|_{2-k}T_n']=[r|_{2-k}T_n]$.
\end{proof}

\begin{prop}
  \label{Heckeoprationonparaboliccocycles}  
  For $f\in S_k(\Gamma_0(N))$ we have
  \begin{equation}
    r_{f|_kT_n} \= r_f|_{2-k}T_n \, ,
    \label{eq:Heckeequivariance}
  \end{equation}
where the same set of representatives of $\Gamma_0(N) \backslash \cM_{n,N}$ has been chosen on both sides and the Eichler integral on the left side has been chosen as $\widetilde{f|_kT_n} = n^{k-1}\widetilde{f}|_{2-k}T_n$. 
\end{prop}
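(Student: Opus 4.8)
The plan is to prove~(\ref{eq:Heckeequivariance}) by a direct computation with Eichler integrals, so that the statement reduces entirely to the coset combinatorics recorded in~(\ref{eq:defgamma_i}). First I would fix one common set of representatives $M_1,\dots,M_{\sigma_1(n)}$ of $\Gamma_0(N)\backslash\cM_{n,N}$ to be used on both sides, and take as Eichler integral of $f|_kT_n$ the one prescribed in the statement, which spells out as $\widetilde g:=\sum_{i=1}^{\sigma_1(n)}\widetilde f|_{2-k}M_i$. Before using it I would check that it really is an Eichler integral: applying Bol's identity~(\ref{bolsidentity}) to a matrix $M_i$ of determinant $n$ — writing $M_i=\sqrt n\,M_i'$ with $M_i'\in\SLR2$ and collecting the weight $2-k$ and weight $k$ automorphy factors — yields $D^{k-1}(\widetilde f|_{2-k}M_i)=n^{k-1}\,f|_kM_i$, whence $D^{k-1}\widetilde g=n^{k-1}\sum_i f|_kM_i=f|_kT_n$ by~(\ref{eq:Hecke}). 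This determinant scaling is precisely what the factor $n^{k-1}$ in the prescribed normalization absorbs.

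With $\widetilde g$ in hand I would compute the associated cocycle from its definition~(\ref{periodpolynom}), using linearity of the slash operator and its compatibility with products (valid for any matrices of positive determinant, since the automorphy factor $c\tau+d$ obeys the cocycle relation independently of the determinant):
\[
r_{f|_kT_n}(\gamma)\;=\;\widetilde g|_{2-k}(\gamma-1)\;=\;\sum_{i=1}^{\sigma_1(n)}\widetilde f|_{2-k}\bigl(M_i\gamma-M_i\bigr)\qquad(\gamma\in\Gamma_0(N)).
\]
The decisive input is now the relation $M_i\gamma=\gamma_i M_{\pi_\gamma(i)}$ of~(\ref{eq:defgamma_i}), which rewrites the first term of each summand as $(\widetilde f|_{2-k}\gamma_i)|_{2-k}M_{\pi_\gamma(i)}$. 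Since $\gamma_i\in\Gamma_0(N)$, the defining equation~(\ref{periodpolynom}) gives $\widetilde f|_{2-k}\gamma_i=\widetilde f+r_f(\gamma_i)$, so each summand splits into a non-cocyclic piece $\widetilde f|_{2-k}M_{\pi_\gamma(i)}-\widetilde f|_{2-k}M_i$ and the genuinely cocyclic piece $r_f(\gamma_i)|_{2-k}M_{\pi_\gamma(i)}$.

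The heart of the argument is then the cancellation of the non-cocyclic pieces: because right multiplication by $\gamma$ permutes the left cosets $\Gamma_0(N)\backslash\cM_{n,N}$ bijectively, the induced index map $i\mapsto\pi_\gamma(i)$ is a permutation of $\{1,\dots,\sigma_1(n)\}$, so that $\sum_i\widetilde f|_{2-k}M_{\pi_\gamma(i)}=\sum_i\widetilde f|_{2-k}M_i$ and the two sums of $\widetilde f$-terms annihilate each other. What remains is exactly $\sum_i r_f(\gamma_i)|_{2-k}M_{\pi_\gamma(i)}$, which is by definition~(\ref{eq:actionheckeoncycles}) equal to $(r_f|_{2-k}T_n)(\gamma)$, and this is the claimed identity as an equality of cocycles for the fixed common choice of representatives (its class being representative-independent by Proposition~\ref{Heckeoprationonparaboliccocyclesdependence}). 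I expect the only genuinely delicate point to be the bookkeeping of two separate normalizations — the determinant-$n$ scaling in Bol's identity, which forces the $n^{k-1}$ in $\widetilde g$, and the convention-free cocycle operator~(\ref{eq:actionheckeoncycles}) — so that the powers of $n$ on the two sides of~(\ref{eq:Heckeequivariance}) match; this is also why the same representatives must be used throughout, and why the cancellation step relies so squarely on $\pi_\gamma$ being a genuine permutation rather than merely a map.
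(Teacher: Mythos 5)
Your proposal is correct and follows essentially the same route as the paper's proof: verify via Bol's identity (with the determinant-$n$ scaling) that $\sum_i\widetilde f|_{2-k}M_i$ is a valid Eichler integral of $f|_kT_n$, then expand $\widetilde g|_{2-k}(\gamma-1)$, substitute $M_i\gamma=\gamma_iM_{\pi_\gamma(i)}$, and let the permutation property of $\pi_\gamma$ cancel the non-cocyclic terms. You merely make explicit two points the paper leaves implicit, namely the cancellation $\sum_i\widetilde f|_{2-k}M_{\pi_\gamma(i)}=\sum_i\widetilde f|_{2-k}M_i$ and the power of $n$ arising from Bol's identity applied to determinant-$n$ matrices.
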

 
\begin{proof}
Using Bol's identity (\ref{bolsidentity}) we find that 
\begin{equation}
  \label{eq:29}
    D^{k-1}(n^{k-1}\widetilde{f}|_{2-k}T_n) \= (D^{k-1}\widetilde{f})|_kT_n \= f|_kT_n
\end{equation}
and thus our choice of Eichler integral is indeed valid. We then get
\begin{equation}
  \label{eq:14}
  \begin{aligned}
    r_{f|_kT_n}(\gamma) & \= \left. \widetilde{f|_kT_n}\right|_{2-k}(\gamma-1) \\
    & \= n^{k-1}\left. \widetilde{f}|_{2-k}T_n\right|_{2-k}(\gamma-1)\\
    & \=\sum_{i=1}^{\sigma_1(n)}
    \widetilde{f}|_{2-k}(M_i\gamma-M_i) \\ & \= \sum_{i=1}^{\sigma_1(n)}
    \widetilde{f}|_{2-k}(\gamma_iM_{\sigma_\gamma(i)}-M_i) \\ & \=
    \sum_{i=1}^{\sigma_1(n)} r_f(\gamma_i)|_{2-k}M_{\sigma_\gamma(i)}
  \end{aligned}
\end{equation}
\end{proof} 

We conclude that the action of $\mathbb{T}$ defined by \eqref{eq:actionheckeoncycles} induces a well defined action of Hecke operators on $\tH^1(\Gamma_0(N),V_{k-2}(K))$ which does not depend on the chosen representatives of $\Gamma_0(N) \backslash \cM_{n,N}$ and is compatible with the isomorphism \eqref{eq:keyisomorphism} for $K=\mathbb{C}$. Completely analogously we can define the action of Atkin--Lehner operators $W_Q$ on $\tZ^1(\Gamma_0(N),V_{k-2}(K))$ (for suitable $K$) by $r \mapsto r|_{2-k}W_Q$. This gives a well-defined action
on $\tH^1(\Gamma_0(N),V_{k-2}(K))$ which does not depend on the chosen element of $\cW_Q$ and is
compatible with the isomorphism \eqref{eq:keyisomorphism} for $K=\mathbb{C}$.

We conclude this introduction to period polynomials with an important proposition
about the period polynomials associated with newforms. 
 
\begin{prop}
  \label{PeriodPolynomialRationality}  
  Let $f \in S_k(\Gamma_0(N))$ be a newform and let $\mathbb{Q}(f)$ be the number field generated by its Hecke eigenvalues. Then the Eichler integral can be chosen such that
  \begin{equation}
      r_f \, \in \, \omega_f^+\tZ^1_{\rm{par}}(\Gamma_0(N),
  V_{k-2}(\mathbb{Q}(f)))^+\oplus\omega_f^-\tZ^1_{\rm{par}}(\Gamma_0(N),
  V_{k-2}(\mathbb{Q}(f)))^-
  \label{eq:PerPolRat}
  \end{equation}
  for some $\omega_f^\pm \in \mathbb{C}$. If $\mathbb{Q}(f)$ is totally real one has $\omega_f^+ \in \mathbb{R}$ and $\omega_f^- \in i\mathbb{R}$.
\end{prop}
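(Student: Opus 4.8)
The plan is to transport the entire statement into the parabolic cohomology group $\tH^1_{\parab}(\Gamma_0(N),V_{k-2})$, where both the Hecke algebra $\mT$ and the involution $\varepsilon$ are defined over $\mathbb{Q}$, and then combine multiplicity one for newforms with a base-change argument. \emph{First} I would fix the rational structure. Since the coefficient module $V_{k-2}$ is defined over $\mathbb{Q}$ and the spaces $\tZ^1_{\parab}$ and $\tB^1$ are cut out, inside the finite-dimensional space of maps $\Gamma_0(N)\to V_{k-2}$, by $\mathbb{Q}$-linear conditions, there is a canonical $\mathbb{Q}$-structure satisfying $\tH^1_{\parab}(\Gamma_0(N),V_{k-2}(\mathbb{C}))=\tH^1_{\parab}(\Gamma_0(N),V_{k-2}(\mathbb{Q}))\otimes_{\mathbb{Q}}\mathbb{C}$, and likewise over any intermediate field. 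The operators $T_n$, defined through \eqref{eq:actionheckeoncycles} with integral representatives $M_i$, and the involution $\varepsilon$ preserve $V_{k-2}(\mathbb{Q})$, hence act on $\tH^1_{\parab}(\Gamma_0(N),V_{k-2}(\mathbb{Q}(f)))$, commute with one another, and are compatible with base change.

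\emph{Next} I would isolate the eigenspace attached to $f$. By the Hecke-equivariance of Proposition~\ref{Heckeoprationonparaboliccocycles}, the Eichler--Shimura isomorphism \eqref{eq:keyisomorphism} is $\mT$-equivariant, so $[r_f]$ and $[\overline{r_f}]$ are eigenclasses with eigenvalues $a_n$ and $\overline{a_n}$ for $(n,N)=1$. Because $f$ is a newform, strong multiplicity one shows that the only classes in $S_k(\Gamma_0(N))\oplus\overline{S_k(\Gamma_0(N))}$ whose prime-to-$N$ Hecke eigenvalues coincide with those of $f$ are the linear combinations of (the images of) $f$ and $\overline{f}$; thus the joint complex eigenspace $M_{\mathbb{C}}=\bigcap_{(n,N)=1}\ker(T_n-a_n)$ equals $\mathbb{C}[r_f]\oplus\mathbb{C}[\overline{r_f}]$ and has dimension $2$. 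Let $M\subset\tH^1_{\parab}(\Gamma_0(N),V_{k-2}(\mathbb{Q}(f)))$ be the simultaneous eigenspace for the same system over $\mathbb{Q}(f)$; since $a_n\in\mathbb{Q}(f)$ and intersections of kernels of linear maps commute with the flat extension $\mathbb{Q}(f)\hookrightarrow\mathbb{C}$, one has $M\otimes_{\mathbb{Q}(f)}\mathbb{C}=M_{\mathbb{C}}$ and hence $\dim_{\mathbb{Q}(f)}M=2$. The involution $\varepsilon$ splits $M=M^+\oplus M^-$ as in \eqref{eq:HodgeDecompPerPol}, each summand one-dimensional over $\mathbb{Q}(f)$, and I choose cocycle representatives $r^\pm\in\tZ^1_{\parab}(\Gamma_0(N),V_{k-2}(\mathbb{Q}(f)))^\pm$ of generators.

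\emph{Then} I produce the constants. Projecting to the $\varepsilon$-eigenspaces gives $[r_f]^\pm\in M^\pm\otimes\mathbb{C}$, so $[r_f]^\pm=\omega_f^\pm[r^\pm]$ for unique $\omega_f^\pm\in\mathbb{C}$. Consequently $r_f-\omega_f^+r^+-\omega_f^-r^-$ is a coboundary of the form $\gamma\mapsto p|_{2-k}(\gamma-1)$ with $p\in V_{k-2}(\mathbb{C})$; replacing the Eichler integral $\widetilde f$ by $\widetilde f-p$, which is exactly the freedom in its definition (see \eqref{periodpolynom}), makes this difference vanish and yields the cocycle identity $r_f=\omega_f^+r^++\omega_f^-r^-$, i.e.\ \eqref{eq:PerPolRat}.

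\emph{Finally} the reality refinement. When $\mathbb{Q}(f)$ is totally real one has $a_n\in\mathbb{R}$, so $f^*=\overline{f}$ and $\overline{f}$ shares the eigensystem of $f$, while $\overline{r^\pm}=r^\pm$ because $r^\pm$ has coefficients in $\mathbb{Q}(f)\subset\mathbb{R}$. Complex conjugation of polynomial coefficients commutes with $\varepsilon$ and sends $r_f$ to $\overline{r_f}$, and the Eichler--Shimura dictionary of Theorem~\ref{Eichler} identifies the $\varepsilon$-action with $f\mapsto(-1)^{k-1}f^*$, giving $[r_f]|_{2-k}\varepsilon=(-1)^{k-1}[\overline{r_f}]$. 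Writing the left-hand side as $\omega_f^+[r^+]-\omega_f^-[r^-]$ and the right-hand side as $(-1)^{k-1}\bigl(\overline{\omega_f^+}[r^+]+\overline{\omega_f^-}[r^-]\bigr)$ and matching the coefficients of the independent classes $[r^+]$ and $[r^-]$ forces each of $\omega_f^+,\omega_f^-$ to be real or purely imaginary; the weight-four sign $(-1)^{k-1}=-1$ together with the convention fixing the labels on the $\varepsilon$-eigenspaces then yields $\omega_f^+\in\mathbb{R}$ and $\omega_f^-\in i\mathbb{R}$. I expect the \textbf{main obstacle} to be the rationality step: extracting a $\mathbb{Q}(f)$-rational eigenspace of the correct dimension requires combining multiplicity one (a statement about complex cusp forms) with the $\mathbb{Q}(f)$-structure on parabolic cohomology and with the flatness of base change for these finite-dimensional cohomology groups; the reality assertion is then only a careful bookkeeping of the conjugation signs and the eigenspace labelling.
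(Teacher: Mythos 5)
Your proof follows essentially the same route as the paper's: pass to the $\mathbb{Q}$-structure on parabolic cohomology, use the fact that a newform is determined by its prime-to-$N$ Hecke eigenvalues (which lie in $\mathbb{Q}(f)$) to cut out two one-dimensional $\mathbb{Q}(f)$-rational $\varepsilon$-eigenlines, and deduce the reality statement from $f^*=\overline{f}$. Your write-up usefully supplies two details the paper leaves implicit, namely the flat base-change argument showing that the rational joint eigenspace has dimension two, and the replacement $\widetilde f\mapsto\widetilde f-p$ that upgrades the identity of cohomology classes to the cocycle-level statement \eqref{eq:PerPolRat} that is actually asserted. One caveat on your final step: matching coefficients in $[r_f]|_{2-k}\varepsilon=(-1)^{k-1}[\overline{r_f}]$ literally gives $\omega_f^+=(-1)^{k-1}\overline{\omega_f^+}$, which for the relevant even weights would make $\omega_f^+$ purely imaginary and $\omega_f^-$ real, i.e.\ the opposite of the claim; this cannot be repaired by ``fixing the labels on the $\varepsilon$-eigenspaces,'' since those labels are the actual eigenvalues of $\varepsilon$. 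The resolution is to track the sign in the identification of $\varepsilon$ with $f\mapsto(-1)^{k-1}f^*$ more carefully: a direct computation with $\widetilde f_\infty$ (using $\widetilde f_\infty(-\overline\tau)=\overline{\widetilde f_\infty(\tau)}$ for real Fourier coefficients) shows that $r_f|_{2-k}\varepsilon=\overline{r_f}$ holds at the cocycle level with no extra sign, and matching coefficients in that identity yields $\omega_f^+\in\mathbb{R}$ and $\omega_f^-\in i\mathbb{R}$ as claimed.
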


\begin{proof}
  First note that $\tH^1_{\rm{par}}(\Gamma_0(N),
  V_{k-2}(\mC)) \cong \tH^1_{\rm{par}}(\Gamma_0(N),
  V_{k-2}(\mQ)) \otimes_{\mQ} \mC $ and that we have a well-defined action of the Hecke algebra $\mathbb{T}$ and of the involution $\varepsilon$ on $\tH^1_{\rm{par}}(\Gamma_0(N),
  V_{k-2}(\mQ))$. Since $f$ is uniquely determined by its Hecke eigenvalues which lie in $\mathbb{Q}(f)$ we can define two 1-dimensional eigenspaces $V^{\pm} \subseteq  \tH^1_{\rm{par}}(\Gamma_0(N),
  V_{k-2}(K))^{\pm}$ with the same eigenvalues as $f\pm (-1)^{k-1}f^*$. Then the first statement directly follows. If $\mathbb{Q}(f)$ is totally real we have $f^*=\overline{f}$ and then the second statement also follows. 
\end{proof}

We call the numbers $\omega_f^{\pm}$, which are unique only up to multiplication by $\mathbb{Q}(f)$, the \emph{periods of $f$}.
Proposition~\ref{PeriodPolynomialRationality} was first proved (for $\Gamma = \SLZ2$) by Manin \cite{ManinPeriod} in a stronger form,
namely that the period polynomials $r_f$ defined by choosing $\widetilde f = \widetilde f_\infty$ as in~\eqref{ftauinf} satisfies~\eqref{eq:PerPolRat},
and we will use this in the sequel. 

\subsubsection[Computation of $H^1_{\text{par}}(\Gamma_0^*(25),V_2(\mathbb{Q}))$]{Computation of $\boldsymbol{H^1_{\text{par}}(\Gamma_0^*(25),V_2(}\protect\fakebold{\mathbb{Q}}\boldsymbol{))}$}
\label{sec:Ford-circles}

In the following we want to compute a basis for $H^1_{\mathrm{par}}(\Gamma_0^*(25),V_2(\mathbb{Q}))$ and simultaneously diagonalize the action of the involution $\varepsilon$ and the Hecke algebra $\mathbb{T}$. We start by explaining how one can obtain a set of generators of $\Gamma_0^*(25)$ and their relations.

To obtain generators of discrete cofinite subgroups $\Gamma \subseteq \text{SL}(2,\mathbb{R})$ we construct a fundamental domain $\cF$ as $\cF_{\tau{}_0}$ plus parts of its boundary, where
\begin{equation}
  \label{eq:generalF} \cF_{\tau{}_0} \; =\; \Gamma_{\tau_0} \ssm \left.\left\{ \tau{} \in \fH \;
\right| \; d(\tau,\tau_0) < d(\gamma\tau, \tau_0) \;\forall\;\gamma
\in \Gamma\right\} \ .
\end{equation}
Here $\tau_0\in \overline{\fH}$ is arbitrary point, $d$ is the hyperbolic
distance function and $\Gamma_{\tau_0}$ is the stabilizer of
$\tau_0$.
If $\overline{\fH}$ contains $\infty$, choosing $\tau_0=\infty$ and the Ford circles $|c \tau +
d|= 1$ as boundaries is particularly
convenient. Then~(\ref{eq:generalF}) evaluates to
\begin{equation}
  \label{eq:Finfinity} \cF_{\infty} \; =\; \Gamma_{\infty} \ssm \left.\left\{ \tau{} \in \fH \;
\right| \; |c\tau+d| > 1 \;\forall\;\abcd \in \Gamma_\infty
\backslash \Gamma \right\} \ .
\end{equation} For subgroups of $\SLR2$ containing $T =
\left(\substack{ 1\, 1\\0\, 1}\right)$ we define $\fH_s$ to be the
strip of width $1$ in the upper half plane $\fH_s=\left\{\tau \in \fH
\mid -\frac{1}{2} < \Re \tau < \frac{1}{2}\right\}$. For instance, for $\Gamma = \Gamma_0(N)$ one can then express~(\ref{eq:Finfinity}) as
\begin{equation}
  \label{eq:Fundamentalregion} \cF \; =\; \cF_\infty \;=\; \fH_s \; \ssm \!
\coprod_{\substack{ c=1 \\ c \, \equiv \,  0 \mod N}}^\infty \coprod_{\substack{-c \leq d \leq c\\ (c,d) = 1}} \left\{
\left| \tau + \tfrac{d}{c} \right| \leq \tfrac{1}{c} \right\} \, ,
\end{equation}
while for $\Gamma_0^*(N)$ with $N$ a prime power we instead take the product over all elements that are in $W_N
\Gamma_0(N)$ (or $\Gamma_0(N) W_N$), i.e.\ over elements of the form
$\left(\substack{\hat a \sqrt{N} \;\; \hat b/\sqrt{N}\\ \hat c
    \sqrt{N} \;\;\; \hat d \sqrt{N}}\right)$ with $\hat a ,\hat b,\hat
c,\hat d\in \mathbb{Z}$ and $N \hat a \hat d -\hat b\hat c=1$, hence
the divisibility condition becomes $(\hat c , N \hat d)=1$. The union
over $c$ leads to Ford circles with rapidly decreasing radii, which
can be shown to not bound the fundamental domain further for
$c$ sufficiently large. Topologically $\cF$ is a
polygon bounded by segments of the Ford circles as edges. If a Ford circle has a fixed point of order 2 we regard it as two edges split by the fixed point. In this way the polygon has an even number of edges which are identified in pairs. As generators of $\Gamma$ one can choose the elements identifying the edges. The relations between these generators are obtained from considering the finite orbits of the vertices of $\mathcal{F}$ under the action of $\Gamma$. If the vertices in one orbit are cusps, one gets no relation, and if they are elliptic points of order $n$, one gets a product of elements which is of order $n$ (in $\Gamma / \{\pm 1\}$). As an example, consider the standard fundamental domain of $\SLZ2$ with the vertices $P_0 =\infty, P_1=e^{2\pi i/3}, P_2=i, P_3=e^{\pi i/3}$. The edges $P_0P_1$ and $P_3P_0$ are identified by $T = \left(\substack{ 1\, 1\\0\, 1}\right)$ and the edges $P_1P_2$ and $P_2P_3$ are identified by $S = \left(\substack{ 0\, -1\\1\, \phantom{-}0}\right)$. The elliptic fixed point $P_2$ of order 2 gives the relation $S^2=-1$ and the elliptic fixed point $P_1$ of order 3 gives the relation $(ST)^3=-1$. We now turn to the more complicated case of~$\Gamma_0^*(25)$. 

\begin{figure}[h!] 
  \centering
  \begin{tikzpicture}[scale=12]
    \draw (-1/2,1/2)--(-1/2,1/10);
    \draw (1/2,1/2)--(1/2,1/10);
    \draw (-1/5,0) arc (180:0:1/5);
    \draw (-1/2,1/10) arc (90:0:1/10);
    \draw (1/2,1/10) arc (90:180:1/10);
    \draw (-2/5,0) arc (180:36.87:1/15);
    \draw (2/5,0) arc (0:143.13:1/15);
    \draw (-1/5,0) arc (0:126.87:1/20);
    \draw (1/5,0) arc (180:53.13:1/20);

    \draw[fill=black] (-1/2,1/10) circle (1/250) node[below] {$P_1$};
    \draw[fill=black] (-2/5,0) circle (1/250) node[below] {$P_2$};
    \draw[fill=black] (-7/25,1/25) circle (1/250) node[below] {$P_3$};
    \draw[fill=black] (-1/5,0) circle (1/250) node[below] {$P_4$};
    \draw[fill=black] (0,1/5) circle (1/250) node[below] {$P_5$};
    \draw[fill=black] (1/5,0) circle (1/250) node[below] {$P_6$};
    \draw[fill=black] (7/25,1/25) circle (1/250) node[below] {$P_7$};
    \draw[fill=black] (2/5,0) circle (1/250) node[below] {$P_8$};
    \draw[fill=black] (1/2,1/10) circle (1/250) node[below] {$P_9$};
    \draw[fill=black] (0,1/2) circle (1/250) node[below] {$P_0$};

    \draw (-1/2+1/50,1/10+1/5) node {$T$};
    \draw (-1/2+1/15,1/10) node {$A$};
    \draw (-2/5+1/15,1/12) node {$B$};
    \draw (-1/5-1/25,1/15) node {$C$};
    \draw (-1/10,1/5) node {$W$};
    \draw (2/10,1/5) node {$W^{-1}=-W$};
    \draw (1/2-1/35,1/10+1/5) node {$T^{-1}$};
    \draw (1/2-1/15,1/10) node {$A^{-1}$};
    \draw (2/5-1/15,1/12) node {$B^{-1}$};
    \draw (1/5+1/25,1/15) node {$C^{-1}$};
  \end{tikzpicture}
  \caption{A fundamental domain $\cF$ of $\Gamma^*_0(25)$ with three
    inequivalent parabolic vertices~$P_0,P_2,P_4$ and three inequivalent elliptic vertices $P_1,P_3,P_5$ of order two.} 
  \label{fig:fundamentalregion} 
\end{figure}
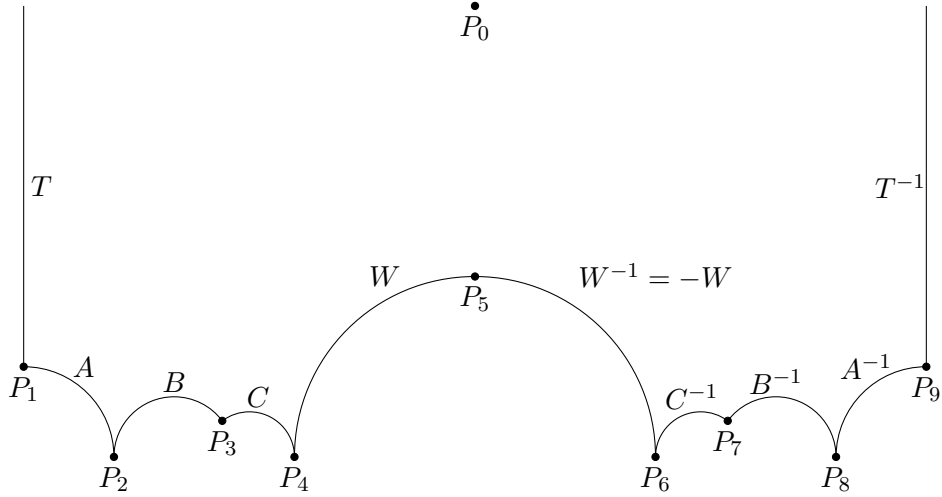 \noindent

For $\Gamma_0^*(25)$ we find the fundamental domain $\cF$ shown in Figure~\ref{fig:fundamentalregion}. From the Ford circles bounding $\cF$ one sees that one can choose the 
generators of $\Gamma^*_0(25)$ as  
\be
T=\left(\begin{array}{cc} 1&1\\0 & 1\end{array}\right), \ 
A=\left(\begin{array}{cc} 5&\frac{12}{5}\\10 & 5\end{array}\right),\ 
B=\left(\begin{array}{cc} 5&\frac{8}{5}\\15 & 5\end{array}\right),\ 
C=\left(\begin{array}{cc} 5&\frac{6}{5}\\ 20 & 5\end{array}\right),\ 
W=\left(\begin{array}{cc} 0&-\frac{1}{5}\\5 & 0\end{array}\right) \, .  
\label{generators}
\ee
The relations between these generators are again obtained from considering
the finite orbits of the vertices of $\cF$ under the action of $\Gamma^*_0(25)$. For example, $A$ maps $P_1$ to $P_9$ and $T^{-1}$ maps $P_9$ back to to $P_1$. From similar considerations one concludes that one has three elliptic elements of order two (in $\Gamma^*_0(25)/\{\pm 1\}$)  
\be 
T^{-1}A, \ B^{-1}C,\ W, \qquad {\rm fixing} \quad
P_1=-\frac{1}{2}+\frac{i}{10},  \ P_3=-\frac{7}{25}+\frac{i}{25}, \
P_5=\frac{i}{5} \, ,
\label{ell}
\ee
respectively. Analogously, we find the three inequivalent parabolic elements
\be
T, \ A^{-1}B, \ C^{-1}W, \qquad {\rm fixing} \qquad  P_0=\infty, \ P_2=-\frac{2}{5}, \ P_4=-\frac{1}{5} \, ,
\label{par} 
\ee
respectively.

     
Now, we set $k=4$ and give an explicit description of
$\tH_{\text{par}}^1(\Gamma_0^*(25), V_2(\mathbb{Q}))$. For a cocycle $r \in Z_{\text{par}}^1(\Gamma_0^*(25), V_2(\mathbb{Q}))$ we write the five period polynomials
corresponding to the generators $\gamma$ from~\eqref{generators} as
\begin{equation}
  r(\gamma) \; = \; a_\gamma^{(2)} \tau^2+ a_\gamma^{(1)} \tau+a_\gamma^{(0)}
  \label{eq:16}
\end{equation}
so that in total we have $15$ rational coefficients $a_\gamma^{(k)}$, $\gamma \in \{T,A,B,C,W  \}$, $k=0,1,2$. The existence of
the elliptic elements \ref{ell} and the parabolic elements \ref{par} imposes six
relations among them. For example, the cocycle relation yields
$r(T^{-1} A)=r(A) - r(T)|_{-2}T^{-1}A$, but since $T^{-1}A$ is of
order two one gets
\begin{equation}
  0\; =\; r(T^{-1} A)|_{-2}T^{-1}A + r(T^{-1}A)\ ,
  \label{eq:19}
\end{equation}
yielding one relation between the coefficients of $r(A)$ and
$r(T)$. For the representatives we can furthermore choose e.g. $r(T)=0$ -- two nontrivial
conditions -- as well as $r(W)=a_W^{(1)}\tau$ -- one nontrivial condition --, which
leaves six independent parabolic cocycles, which we choose as in
Table~\ref{tab:paracocycl}.
\begin{table}[h!]
{{ 
\begin{center}
	\begin{tabular}{|c|ccccc|}
		\hline
		 & $T$ & $W$ & $A$& $B$ & $C$  \\ \hline
	         $r_1$& 0 &  0 & $25 \tau^2+25 \tau +6$  &  0& 0  \\ [ 2 mm]
	       $r_2$& 0 &0 &   $150 \tau^2-39$ & $125 \tau +35$ & 0  \\ [ 2 mm]
	        $r_3$& 0   &0  &$-100 \tau^2+26$ & $125 \tau^2-10$ &0  \\ [ 2 mm] 
	        $r_4$&   0 & 0 &$100 \tau ^2-26$  & $ -10$ & $125 \tau +25$ \\ [ 2 mm]
	          $r_5$& 0 & 0 & $-50 \tau ^2+13$ &$5$  & $125 \tau ^2-5$   \\ [ 2 mm]
	            $r_6$& 0 & $25\tau$  &$50 \tau ^2-13$  & $ -5$ & $ -5$ \\
		\hline
	 \end{tabular}	
\end{center}}}
\caption{Representatives for a basis of $\tH_{\text{par}}^1(\Gamma_0^*(25), V_2(\mathbb{Q}))$.}
 \label{tab:paracocycl}
\end{table}

We now work out the action of the Hecke operator $T_2$ as defined in (\ref{eq:actionheckeoncycles}) in detail. We choose $\sigma_1(2) = 3$ representatives for
$\Gamma_0(25)\backslash \cM_{2,25}$ as
\begin{equation}
  \label{eq:20}
  \begin{aligned}
    M_1 &\;= \;
    \begin{pmatrix}
      2&0\\0 & 1
    \end{pmatrix},
    & M_2 &\;= \;
    \begin{pmatrix}
      1&0\\0 & 2
    \end{pmatrix},
    & M_3 &\;= \;
    \begin{pmatrix}
      1&1\\0 & 2
    \end{pmatrix}\, .
  \end{aligned}
\end{equation}
In order to apply~(\ref{eq:actionheckeoncycles}) we need to determine
for $\gamma\in \Gamma_0^*(25)$ the permutations $\pi_\gamma$ as
well  as the $\gamma_i$ for $i=1,2,3$ as defined in ~(\ref{eq:defgamma_i}) and express the latter as a word
in terms of the chosen generators of $\Gamma_0^*(25)$, see
Table~\ref{tab:decomp}.
\begin{table}[h!]
{{ 
\begin{center}
	\begin{tabular}{|c|c|c|c|c|c|}
		\hline
		$\gamma$ &$T$ & $A$ & $B$ & $C$& $W$ \\ \hline
	       	$M_1 \gamma$ & $T^2 M_1$& $TWT M_1$       & $T B^{-1} M_3$ & $A M_1$                        & $W M_2 $  \\ [ 2 mm]
	        $M_2 \gamma$ & $ M_3$      & $C M_2$            & $WC^{-1}W M_2 $    & $WB^{-1}W M_2 $                & $W M_1$     \\ [ 2 mm]
	        $M_3 \gamma$ & $T M_2$    & $T C^{-1} M_3$ &$TB^{-1}T M_1$         & $TB^{-1} A B^{-1}M_3$ &$TA^{-1} M_3$     \\ [ 2 mm] 	      
		\hline
	 \end{tabular}	
\end{center}}}
\caption{$\gamma_i$ and $\pi_\gamma(i)$ for $\gamma\in \{T,A,B,C,W\}$ as defined by $M_i\gamma=\gamma_i M_{\pi_\gamma(i)}$, where 
we decomposed $\gamma_i$ as a word in terms of the chosen generators of $\Gamma_0^*(25)$.} 
 \label{tab:decomp}
\end{table}

Given this information we can compute the action of $T_2$ on the basis of $\tH_{\text{par}}^1(\Gamma_0^*(25), V_2(\mathbb{Q}))$ defined in Table \ref{tab:paracocycl} by repeatedly using the cocycle property. By~(\ref{eq:Heckeequivariance}), this action must have the
same eigenvalues as the Hecke operator $T_2$ acting on
$S_4(\Gamma^*_0(25))$ and by the Eichler-Shimura isomorphism they must appear
with multiplicity two. The calculation gives 
\begin{align}
    \left( \begin{array}{c}
        {[}r_1{]}|_{-2}\, T_2  \\
         \vdots \\
        {[}r_6{]}|_{-2}\, T_2
    \end{array}\right) \; = \; \left(
\begin{array}{cccccc}
 -4 & -\frac{6}{5} & -\frac{9}{5} & -\frac{4}{5} & -\frac{8}{5} & 0 \\ [2 mm]
 -24 & \frac{184}{5} & \frac{246}{5} & \frac{336}{5} & \frac{672}{5} & -12 \\ [2 mm]
 16 & -\frac{106}{5} & -\frac{139}{5} & -\frac{204}{5} & -\frac{408}{5} & 8 \\ [2 mm]
 104 & \frac{366}{5} & \frac{539}{5} & \frac{384}{5} & \frac{798}{5} & -8 \\ [2 mm]
 -42 & -\frac{178}{5} & -\frac{262}{5} & -\frac{192}{5} & -\frac{399}{5} & 4 \\ [2 mm]
 74 & \frac{122}{5} & \frac{182}{5} & \frac{96}{5} & \frac{199}{5} & 0 
\end{array}
\right) \left( \begin{array}{c}
        {[}r_1{]} \\
         \vdots \\
         {[}r_6{]}
    \end{array}\right)
\end{align}
and so the eigenvalues are
$\{1,1,4,4,-4,-4\}$. Analogously, we compute the matrix associated with the involution $\varepsilon$. For any generator $\gamma \in \{ T,A,B,C,W\}$ one has~$\epsilon \gamma \epsilon = \gamma^{-1}$ and one then obtains
\begin{align}
    \left( \begin{array}{c}
        {[}r_1{]}|_{-2}\, \epsilon  \\
         \vdots \\
         {[}r_6{]}|_{-2}\, \epsilon
    \end{array}\right) \; = \; 
    \left(
\begin{array}{cccccc}
 1 & 0 & 0 & 0 & 0 & 0 \\
 12 & 7 & 12 & 0 & 0 & 0 \\
 -8 & -4 & -7 & 0 & 0 & 0 \\
 8 & 12 & 18 & 9 & 20 & 0 \\
 -4 & -6 & -9 & -4 & -9 & 0 \\
 4 & 6 & 9 & 8 & 16 & -1 \\
\end{array}
\right)\left( \begin{array}{c}
        {[}r_1{]} \\
         \vdots \\
         {[}r_6{]}
    \end{array}\right) \ .
\end{align}
We can now choose an eigenbasis of
$\tH^1_{\parab}(\Gamma_0^*(N),V_2(\mathbb{Q}))$ with respect to the action of
$T_2$ and $\varepsilon$. A possible choice is given in Table~\ref{tab:paracocycleigen}. The lower index of $r_\lambda^\pm$ indicates the eigenvalue $\lambda$
of the associated cohomology class with respect to $T_2$ and the upper index denotes the eigenvalue $\pm 1$ of the associated cohomology class with respect to the involution $\varepsilon$. 

 \begin{table}[h!]
{{ 
\begin{center}
    $\begin{array}{|l|ccccc|}
    		\hline
		r^\pm_\lambda & T & W& A& B & C  \\ \hline
r^+_{1}& 0 & 0 & 400 \tau ^2+400 \tau +96 & 525 \tau ^2+350 \tau +56 & 1000 \tau ^2+500 \tau +60 \\
r^-_{1}& 0 & -8 \tau  & 12 \tau +6 & 125 \tau ^2+82 \tau +14 & 250 \tau \
^2+132 \tau +18 \\
r^+_{4}&  0 & 0 & 25 \tau ^2+25 \tau +6 & 75 \tau ^2+50 \tau +8 & 100 \tau \
^2+50 \tau +6 \\
r^-_{4}& 0 & -5 \tau  & 0 & 50 \tau ^2+40 \tau +8 & 175 \tau ^2+90 \tau +12 \
\\
r^+_{-4}& 0 & 0 & 75 \tau ^2+75 \tau +18 & 75 \tau ^2+50 \tau +8 & 0 \\
r^-_{-4}& 0 & -\tau  & 4 \tau +2 & 50 \tau ^2+34 \tau +6 & 75 \tau ^2+34 \tau \
+4 \\
 \hline
\end{array}$
\end{center}}}
\caption{Representatives for an eigenbasis of $\tH_{\text{par}}^1(\Gamma_0^*(25), V_2(\mathbb{Q}))$ w.r.t. $T_2$ and $\varepsilon$.}
 \label{tab:paracocycleigen}
\end{table}

\subsection{Meromorphic cusp forms and quasiperiods}
\label{quasiperiods}

In the previous section we have seen that there is an isomorphism
\begin{align}
  \tH^1_{\rm par}(\Gamma,V_{k-2}(\mathbb{C})) \, \cong \, S_k(\Gamma) \oplus \overline{S_k(\Gamma)} \, .
  \label{eq:EichlerShimura}
\end{align}
For the case $k=2$ this corresponds to the usual Hodge decomposition $\tH^1  =  \tH^{1,0} \oplus \tH^{0,1}$ for complex curves and the complex conjugation makes this decomposition non-algebraic. In this case the algebraic version of $\tH^1$ can be realized by holomorphic differentials (differentials of the first kind) and meromorphic differentials with vanishing residues (differentials of the second kind). The integration of these forms gives a well defined pairing with the homology and taking the quotient by derivatives of meromorphic forms one obtains a space that is isomorphic to $\tH^1$ and defined algebraically. Instead of the Hodge decomposition we then have a filtration into classes that can be represented by differentials of the first and second kind, respectively. In this section we discuss the algebraic analogue of the isomorphism (\ref{eq:EichlerShimura}) by considering meromorphic modular forms. This will allow us to define quasiperiods as the periods of certain meromorphic modular forms. The theory of meromorphic cusp forms and their associated period polynomials was first introduced by Eichler \cite{Eichler} and later independently rediscovered by Brown \cite{BrownHain} and one of the authors in the context of \cite{ZagierGolyshev}.

\subsubsection{Meromorphic cusp forms and their period polynomials}
\label{sec:merom-cusp-forms}
We want to extend the period map $r : S_k(\Gamma) \rightarrow \tH^1_{\text{par}}(\Gamma,V_{k-2}(\mathbb{C}))$ to the space of \emph{meromorphic modular forms}
\begin{align}
  M_k^{\mero}(\Gamma)  \, = \, \{F:\overline{\fH}\rightarrow \mP^1(\mathbb{C}) \mid F \ {\rm meromorphic}  \text{ and }   F|_{k}\gamma = F \  \forall \gamma\in \Gamma \} \, .
\end{align}
However, to have an Eichler integral, we need to restrict to forms that are $(k-1)$-st derivatives. By simple connectivity it is enough to require that they are \emph{locally} $(k-1)$-st derivatives and we thus define
\begin{align}
  S_k^{\mero}(\Gamma) \, = \, \{ F \in M_k^{\mero}(\Gamma) \mid F \text{ is locally a $(k-1)$-st derivative} \} \, .
\end{align}
Concretely, this means that for each $\tau_0 \in \fH$ the coefficients of $(\tau-\tau_0)^m$ in the Laurent expansion around $\tau_0$ vanish for $m=-1,...,-(k-1)$ and that for each cusp the constant coefficient in the Fourier expansion vanishes. For any $F \in S_k^{\mero}(\Gamma)$ one can then choose an Eichler integral $\widetilde{F}$, i.e.\ a meromorphic modular form such that $D^{k-1}\widetilde{F}=F$,  and compute the period polynomials $r_F(\gamma) = \widetilde{F}|_{2-k}(\gamma-1)(\tau)$ for $\gamma \in \Gamma$. These are polynomials by Bol's identity and as in the case of holomorphic cusp forms one finds that they are parabolic cocycles and induce a well defined class $[r_F] \in \tH^1_{\text{par}}(\Gamma,V_{k-2}(\mathbb{C}))$ which does not depend on the choice of Eichler integral. Bol's identity also implies that $D^{k-1}M_{2-k}^{\mero}(\Gamma) \subseteq S_k^{\mero}(\Gamma)$ and of course the classes in $\tH^1(\Gamma,V_{k-2}(\mathbb{C}))$ associated with elements in $D^{k-1}M_{2-k}^{\mero}(\Gamma)$ are trivial. This motivates introducing the quotient
\begin{align}
  \mS_k(\Gamma) \, = \, S_k^{\mero}(\Gamma) / (D^{k-1}M_{2-k}^{\mero}(\Gamma)) \, .
  \label{eq:QuotientMeromorphicForms}
\end{align}
Note that the Riemann-Roch theorem implies that one can choose the representatives to have poles only in an arbitrary non-zero subset of $\overline{\fH}$ closed under the action of $\Gamma$, for instance the set of all cusps (if there are cusps) or the set of cusps equivalent to $\infty$ (if $\infty$ is a cusp). For suitable $\Gamma$ we therefore have canonical isomorphisms
\begin{align}
  \mS_k^{[\infty]}(\Gamma) \; \cong \; \mS_k^{!}(\Gamma)\; \cong\; \mS_k(\Gamma) \, ,
\end{align}
where the first two spaces are defined as in (\ref{eq:QuotientMeromorphicForms}) but restricting to forms with possible poles only at $[\infty]$ or only at the cusps, respectively.

In the following we explain that the period map gives an isomorphism between $\mS_k(\Gamma)$ and $\tH^1_{\text{par}}(\Gamma,V_{k-2}(\mathbb{C}))$. We start by defining a useful pairing.
\begin{prop}[Eichler pairing]
    \label{thm:pairing}
    There is a pairing $\{\;,\;\}: S_k^{\mathrm{mero}}(\Gamma)\times
S_k^{\mathrm{mero}}(\Gamma)\rightarrow \mC$ defined by
\begin{equation}
  \label{eq:25}
   \{F,G\} \, = \, (2\pi i)^k \sum_{\tau\in \Gamma \backslash \overline{\fH}}
   {\Res}_\tau (\widetilde F G \, \dd \tau) \, .
 \end{equation}
 This pairing is $(-1)^{k+1}$--symmetric and descends to $\mS_k(\Gamma)\times
\mS_k(\Gamma)$. 
\end{prop}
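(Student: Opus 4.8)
The plan is to regard $\Psi_{F,G} := \widetilde F\,G\,\dd\tau$ as a meromorphic $1$-form on $\fH$ whose poles lie in the $\Gamma$-invariant discrete set $P\subseteq\overline{\fH}$ of poles of $F$ and $G$ (the Eichler integral $\widetilde F$ being meromorphic with poles only where $F=D^{k-1}\widetilde F$ has them), and to reduce all three assertions to purely local residue computations. Since $\widetilde F|_{2-k}\gamma = \widetilde F + r_F(\gamma)$ with $r_F(\gamma)\in V_{k-2}(\mC)$, while $G|_k\gamma = G$ and $\dd\tau$ transforms with weight $2$, a direct computation gives $\gamma^*\Psi_{F,G} = \Psi_{F,G} + r_F(\gamma)\,G\,\dd\tau$ for all $\gamma\in\Gamma$. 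Thus $\Psi_{F,G}$ is invariant up to the ``coboundary'' forms $p\,G\,\dd\tau$ with $p\in V_{k-2}(\mC)$, and everything follows once the residues of such forms are controlled.

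The technical heart is the claim that $\Res_{\tau_0}(p\,G\,\dd\tau)=0$ for every $p\in V_{k-2}(\mC)$ and every $\tau_0\in\overline{\fH}$. At an interior point this is immediate from the definition of $S_k^{\mero}(\Gamma)$: the hypothesis that $G$ is locally a $(k-1)$-st derivative forces the Laurent coefficients of $G$ at $(\tau-\tau_0)^{-1},\dots,(\tau-\tau_0)^{-(k-1)}$ to vanish, so that for $p=\sum_{i=0}^{k-2}c_i(\tau-\tau_0)^i$ the residue $\sum_i c_i\,[(\tau-\tau_0)^{-1-i}]G$ is a sum of these vanishing coefficients. At a cusp the analogous vanishing is forced by the requirement that the constant Fourier coefficient of $G$ vanish. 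Granting the claim, two things follow at once. First, the residues of $\Psi_{F,G}$ are genuinely $\Gamma$-invariant, since $\Res_{\gamma\tau_0}\Psi_{F,G} = \Res_{\tau_0}\Psi_{F,G} + \Res_{\tau_0}(r_F(\gamma)G\,\dd\tau) = \Res_{\tau_0}\Psi_{F,G}$; hence the finite sum $\sum_{\tau\in\Gamma\backslash\overline{\fH}}\Res_\tau\Psi_{F,G}$ is well defined. Second, replacing $\widetilde F$ by another Eichler integral $\widetilde F+p$ (the only possible change on the simply connected $\fH$, where $\ker D^{k-1}=V_{k-2}(\mC)$) alters the sum by $\sum_\tau\Res_\tau(p\,G\,\dd\tau)=0$, so $\{F,G\}$ is independent of the chosen Eichler integral.

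For the $(-1)^{k+1}$-symmetry I would integrate by parts. Writing $\partial=\dd/\dd\tau=2\pi i\,D$ and using $F=D^{k-1}\widetilde F$, $G=D^{k-1}\widetilde G$, the Leibniz rule gives
\begin{equation*}
  \widetilde F\,\partial^{k-1}\widetilde G \= (-1)^{k-1}(\partial^{k-1}\widetilde F)\,\widetilde G + \partial\!\Bigl(\sum_{j=0}^{k-2}(-1)^j(\partial^j\widetilde F)(\partial^{k-2-j}\widetilde G)\Bigr)\,.
\end{equation*}
The last summand is an exact form, whose residue vanishes at every point of $\overline{\fH}$ (a derivative has no $(\tau-\tau_0)^{-1}$, resp.\ no $\dd q/q$, term); dividing by $(2\pi i)^{k-1}$ yields $\Res_\tau\Psi_{F,G} = (-1)^{k-1}\Res_\tau(F\,\widetilde G\,\dd\tau) = (-1)^{k-1}\Res_\tau\Psi_{G,F}$, and summing gives $\{F,G\}=(-1)^{k+1}\{G,F\}$. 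Finally, to see that the pairing descends to $\mS_k(\Gamma)$, suppose $F=D^{k-1}h$ with $h\in M_{2-k}^{\mero}(\Gamma)$; choosing $\widetilde F=h$, which is genuinely modular so that $r_F\equiv0$, the form $\Psi_{F,G}=h\,G\,\dd\tau$ is exactly $\Gamma$-invariant (weight $2-k+k=2$) and descends to a meromorphic $1$-form on the compact curve $\Gamma\backslash\overline{\fH}$, so the residue theorem gives $\{F,G\}=0$. The symmetry just proved then disposes of the case $G\in D^{k-1}M_{2-k}^{\mero}(\Gamma)$ as well.

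The step I expect to be most delicate is the behaviour at the cusps: making precise the residue of $\Psi_{F,G}$ there (conjugating the cusp to $\infty$ and using the distinguished $q$-expansion Eichler integral, so that no spurious $\log q$ terms enter), verifying the local vanishing $\Res(p\,G\,\dd\tau)=0$ in that setting, and checking that exact forms still have vanishing cusp residues. A secondary point is the bookkeeping at elliptic fixed points, where one must confirm that for a $\Gamma$-invariant form the upstairs residue $\Res_{\tau_0}$ matches the residue downstairs on $\Gamma\backslash\overline{\fH}$, so that the residue theorem applies with exactly the normalization used in the definition of $\{\,\cdot\,,\cdot\,\}$.
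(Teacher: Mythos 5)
Your proof is correct and follows essentially the same route as the paper's: well-definedness via the vanishing of all residues of $p\,G\,\dd\tau$ for $p\in V_{k-2}(\mC)$, the $(-1)^{k+1}$-symmetry from $\widetilde F G-(-1)^{k+1}F\widetilde G$ being a derivative, and descent by choosing $\widetilde F\in M_{2-k}^{\mero}(\Gamma)$ and applying the residue theorem on the compact curve. You merely spell out the Leibniz identity and the local Laurent-coefficient computation explicitly, and you correctly flag the cusp residues as the one point needing care, which the paper's proof also leaves implicit.
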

\begin{proof}
  First note that the right-hand side of~\eqref{eq:25} makes sense because the sum is finite (only finitely many orbits have poles) and the individual residues do not depend on the choice of $\tau$ in the $\Gamma$-orbit (the difference of the residues at $\tau$ and $\gamma \tau$ is $r_F(\gamma)G \, \dd \tau$ which cannot have any residues since $G$ is a $(k-1)$-st derivative and $r_F(\gamma)$ is a polynomial of degree at most $k-2$). Similarly, the pairing does not depend on the choice of Eichler integral since $\widetilde{F}$ is unique up to a polynomial $p$ of degree $k-2$ and $p\, G\, \dd \tau$ again has no residues. The $(-1)^{k+1}$--symmetry follows since $\widetilde{F}G-(-1)^{k+1}F\widetilde{G}$ is a derivative. Because of this symmetry it just remains to prove that $\{F,G\}$ vanishes for any $F \in D^{k-1}M_{2-k}^{\mero}(\Gamma)$. This is clear since one can choose $\widetilde{F}$ to be in $M_{2-k}^{\mero}(\Gamma)$ and then $\widetilde{F}G \, \dd \tau$ is a well defined meromorphic differential on the compact curve $\Gamma \backslash \overline{\fH}$ and hence the sum of its residues vanishes. 
\end{proof}

\begin{thm}[Eichler]
    \label{thm:sequence}
    The natural map $S_k(\Gamma) \rightarrow \mS_k(\Gamma)$ induced by inclusion and the map $F \mapsto \{F, \ \}$ give the short exact sequence 
\begin{equation}
  0 \longrightarrow S_k(\Gamma) \longrightarrow    \mS_k (\Gamma)\xrightarrow[]{\{ \;, \; \}} S_k(\Gamma)\spcheck\longrightarrow 0 \, ,
 \end{equation}
where $S_k(\Gamma)\spcheck$ denotes the dual space of $S_k(\Gamma)$. 
\end{thm}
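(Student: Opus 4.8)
The plan is to verify three things about the maps in the sequence — writing $\iota\colon S_k(\Gamma)\to\mS_k(\Gamma)$ for the map induced by inclusion and $P\colon\mS_k(\Gamma)\to S_k(\Gamma)\spcheck$, $[F]\mapsto\{F,\,\cdot\,\}$, for the pairing map — namely that $\iota$ is injective, that $P$ is surjective, and that $\operatorname{im}\iota=\ker P$, and then to read off exactness. The whole argument will be pinned to the dimension equality $\dim_{\mC}\tH^1_{\parab}(\Gamma,V_{k-2}(\mC))=2\dim_{\mC}S_k(\Gamma)$ furnished by the Eichler--Shimura isomorphism (Theorem~\ref{Eichler}).

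Two inputs are essentially formal and I would dispatch them first. (i) The period map $r\colon\mS_k(\Gamma)\to\tH^1_{\parab}(\Gamma,V_{k-2}(\mC))$, $[F]\mapsto[r_F]$, is injective: if $[r_F]=0$ then $\widetilde F-p$ is $\Gamma$-invariant of weight $2-k$ for some $p\in V_{k-2}(\mC)$, and the vanishing of the relevant Laurent and Fourier coefficients built into the definition of $S_k^{\mero}(\Gamma)$ ensures $\widetilde F-p\in M_{2-k}^{\mero}(\Gamma)$; since $D^{k-1}p=0$ this exhibits $F=D^{k-1}(\widetilde F-p)\in D^{k-1}M_{2-k}^{\mero}(\Gamma)$, so $[F]=0$. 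In particular $\dim\mS_k(\Gamma)\le 2\dim S_k(\Gamma)$. (ii) For holomorphic $F,G\in S_k(\Gamma)$ I take $\widetilde F=\widetilde f_\infty$ as in~\eqref{ftauinf}; then $\widetilde F G\,\dd\tau$ is holomorphic on $\fH$ and has $q$-expansion starting in positive degree at every cusp, so all residues vanish and $\{F,G\}=0$. This gives $P\circ\iota=0$, while injectivity of $\iota$ follows because $r\circ\iota$ is the Eichler--Shimura map, which is injective by Theorem~\ref{Eichler}.

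The surjectivity of $P$ is the heart of the matter, and the step I expect to be the main obstacle. Here I would use the canonical isomorphism $\mS_k^{[\infty]}(\Gamma)\cong\mS_k(\Gamma)$ to represent each class by a meromorphic cusp form $F$ whose poles lie only at the cusp $\infty$, say $F=\sum_{m\ge -M,\,m\neq 0}a_m q^m$ (the constant term vanishing by the defining condition of $S_k^{\mero}(\Gamma)$). For $G=\sum_{m\ge 1}b_m q^m\in S_k(\Gamma)$ the Eichler integral is $\widetilde F=\sum_{m\neq 0}a_m m^{1-k}q^m$, only the residue at $\infty$ survives, and a direct computation via Proposition~\ref{thm:pairing} gives $\{F,G\}=(2\pi i)^{k-1}\sum_{m\ge 1}a_{-m}(-m)^{1-k}b_m$. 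Thus $P$ carries the principal part $(a_{-1},\dots,a_{-M})$ to a linear combination of the coefficient functionals $G\mapsto b_m$, and these separate points of the finite-dimensional space $S_k(\Gamma)$ by the $q$-expansion principle, hence span $S_k(\Gamma)\spcheck$. It therefore suffices to realize, for every $M$, an arbitrary prescribed principal part of pole order $\le M$ at $\infty$ by some weight-$k$ meromorphic form that is holomorphic elsewhere and a cusp form at the remaining cusps. This is a Riemann--Roch / Mittag--Leffler existence statement on the compact curve $\Gamma\backslash\overline{\fH}$, whose only obstruction — by the residue theorem for the appropriate line bundle — is pairing against the finitely many holomorphic cusp forms; keeping track of this obstruction uniformly in $M$, and checking the side conditions defining $S_k^{\mero}(\Gamma)$, is the delicate part of the proof.

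With these in place the sequence is exact by a dimension count. Surjectivity of $P$ gives $\dim\mS_k(\Gamma)=\dim\ker P+\dim S_k(\Gamma)$, and since $\operatorname{im}\iota\subseteq\ker P$ with $\dim\operatorname{im}\iota=\dim S_k(\Gamma)$ (by injectivity of $\iota$) this yields $\dim\mS_k(\Gamma)\ge 2\dim S_k(\Gamma)$; combined with the reverse inequality from (i) we get equality, whence $\dim\ker P=\dim S_k(\Gamma)=\dim\operatorname{im}\iota$ and therefore $\operatorname{im}\iota=\ker P$. As a bonus the squeeze forces $r$ to be an isomorphism $\mS_k(\Gamma)\cong\tH^1_{\parab}(\Gamma,V_{k-2}(\mC))$, which is exactly the period-map identification announced right after the theorem.
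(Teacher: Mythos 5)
Your skeleton is sound and, in one respect, tighter than the paper's own argument. The paper checks that $\iota$ is injective and that $P\circ\iota=0$, and then simply cites Eichler for the two hard facts (surjectivity of $P$ and $\ker P=\operatorname{im}\iota$). You instead prove injectivity of the period map $r:\mS_k(\Gamma)\to \tH^1_{\parab}(\Gamma,V_{k-2}(\mC))$ directly (your argument that $[r_F]=0$ forces $\widetilde F-p\in M^{\mero}_{2-k}(\Gamma)$ and hence $F\in D^{k-1}M^{\mero}_{2-k}(\Gamma)$ is correct), extract the bound $\dim\mS_k(\Gamma)\le 2\dim S_k(\Gamma)$, and then deduce exactness in the middle from surjectivity of $P$ by a dimension count, getting the isomorphism with parabolic cohomology as a by-product rather than as a separate corollary. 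That reorganization is valid and genuinely different from the paper. Your verification that $P\circ\iota=0$ (vanishing of cusp residues because both $q$-expansions start in positive degree) is also more careful than the paper's one-line remark.

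The problem is that everything now hangs on surjectivity of $P$, which is exactly the content the paper outsources to Eichler, and your sketch of it has a genuine gap plus a misdirection. First, you assert that the only obstruction to realizing an arbitrary principal part at $\infty$ is ``pairing against the finitely many holomorphic cusp forms.'' If that pairing is the residue pairing against $S_k(\Gamma)$ itself, your claim would say precisely that every realizable $F$ satisfies $\{F,G\}=0$ for all $G\in S_k(\Gamma)$ --- i.e.\ that $P$ vanishes identically on your representatives, the opposite of surjectivity. The Serre-duality obstruction for the weight-$k$ Mittag--Leffler problem lives in holomorphic forms of the \emph{dual} weight $2-k$ (which do vanish for $k>2$, and for $k=2$ reduce to the constants, whose obstruction --- the total residue of $F\,\dd\tau$ --- is automatically satisfied by the defining conditions of $S_k^{\mero}$), not in $S_k(\Gamma)$. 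Second, even granting that the realizable principal parts form a subspace of bounded codimension in $\mC^M$, it does not follow formally that their images under $(c_m)\mapsto\bigl(G\mapsto\sum_m c_m(-m)^{1-k}b_m(G)\bigr)$ span $S_k(\Gamma)\spcheck$: a subspace of small codimension and the kernel of this evaluation map need not together fill out $\mC^M$. You must actually pin down the obstruction space, show it vanishes (or is transverse to the kernel), and handle the side conditions at the other cusps, e.g.\ by subtracting Eisenstein series. Until that is done, the heart of the theorem remains unproven in your write-up, just as it is merely cited in the paper's.
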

\begin{proof}
  The first (non-trivial) map is injective since the period polynomial of a holomorphic cusp form determines the form uniquely. The composite of the first two maps is trivial since holomorphic functions don't have poles. Eichler \cite{Eichler} shows that the kernel of the second map is exactly the image of the first map and that the second map is surjective. 
\end{proof}

This theorem implies that $\mS_k(\Gamma)$ is (non-canonically) isomorphic to $S_k(\Gamma) \oplus S_k(\Gamma)\spcheck$. Hence the domain and the codomain of the period map $r : \mS_k(\Gamma) \rightarrow \tH^1_{\rm par}(\Gamma,V_{k-2}(\mathbb{C}))$ have the same dimension and since the map is injective it gives an isomorphism
\begin{align}
  \mS_k(\Gamma) \, \cong \, \tH^1_{\rm par}(\Gamma,V_{k-2}(\mathbb{C})) \, .
  \label{eq:EichlerShimuraMero}
\end{align}

We now restrict to $\Gamma = \Gamma_0(N)$ to introduce Hecke operators for meromorphic modular forms. For holomorphic modular forms we defined these in (\ref{eq:Hecke}) and we use the same definition for meromorphic modular forms. By Bol's identity it follows that they also descend to $\mS_k(\Gamma_0(N))$ and we have the following proposition.

\begin{prop}
    The pairing $\{\; , \; \}$ is equivariant with respect to the Hecke operators. 
\end{prop}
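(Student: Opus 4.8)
The plan is to prove the self--adjointness statement $\{F|_kT_n,G\}=\{F,G|_kT_n\}$ for all $n$ with $(n,N)=1$ and all $F,G\in S_k^{\mero}(\Gamma_0(N))$; since both sides descend to $\mS_k(\Gamma_0(N))$ this is exactly the assertion that $T_n$ is equivariant for the Eichler pairing. First I would fix a compatible Eichler integral for $F|_kT_n$. Choosing representatives $M_1,\dots,M_{\sigma_1(n)}$ for $\Gamma_0(N)\backslash\cM_{n,N}$, the determinant--$n$ form of Bol's identity $D^{k-1}(\widetilde F|_{2-k}M_i)=n^{k-1}(F|_kM_i)$ gives $D^{k-1}\big(\sum_i\widetilde F|_{2-k}M_i\big)=F|_kT_n$, so I may take $\widetilde{F|_kT_n}=\sum_i\widetilde F|_{2-k}M_i$, exactly as in the proof of Proposition~\ref{Heckeoprationonparaboliccocycles}. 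Substituting this into the defining residue sum~\eqref{eq:25} is the starting point.

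The computational heart is a change of variables using the adjugate involution $M\mapsto M'=nM^{-1}$ on $\cM_{n,N}$ (which preserves $\cM_{n,N}$ and sends left cosets to right cosets). For a single $M=M_i=\abcd$ with adjugate $M'$, writing $w=M\tau$ and using $c\tau+d=n/(a-cw)$ together with $\dd\tau=n^{-1}(c\tau+d)^2\,\dd w$, a direct manipulation of the weight $2-k$ and weight $k$ slash operators yields the key local identity $(\widetilde F|_{2-k}M)(\tau)\,G(\tau)\,\dd\tau=n^{k-1}\,\widetilde F(w)\,(G|_kM')(w)\,\dd w$. Since residues of meromorphic $1$-forms are invariant under the biholomorphism $\tau\mapsto M\tau$ of $\overline{\fH}$, this moves each summand $\widetilde F|_{2-k}M_i$ paired against $G$ to $\widetilde F$ paired against $G|_kM_i'$, which is precisely the shape of the summands appearing in $\{F,G|_kT_n\}$.

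The delicate point, and the main obstacle, is that the individual forms $(\widetilde F|_{2-k}M_i)\,G\,\dd\tau$ are not $\Gamma_0(N)$--invariant, so the term--by--term residue sums over $\Gamma_0(N)$--orbits are not separately meaningful and the left--coset reps $M_i$ do not directly match the right--coset reps $M_i'$ the computation produces. I would resolve this by organizing the argument through the Hecke correspondence rather than term by term: with $g=\bigl(\begin{smallmatrix}1&0\\0&n\end{smallmatrix}\bigr)$ and $\Delta=\Gamma_0(N)\cap g^{-1}\Gamma_0(N)g$, the two degeneracy maps $\pi_1,\pi_2:\Delta\backslash\overline{\fH}\to\Gamma_0(N)\backslash\overline{\fH}$ realize $\widetilde{F|_kT_n}=\pi_{1*}\pi_2^*\widetilde F$ in weight $2-k$ and $(G|_kT_n)\,\dd\tau=\pi_{1*}\pi_2^*(G\,\dd\tau)$ in weight $k$, the $n^{k-1}$ factors being absorbed consistently into the two push--pulls. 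The projection formula for the residue pairing then gives $\{F|_kT_n,G\}=(2\pi i)^k\sum_{\Delta\backslash\overline{\fH}}\Res(\pi_2^*\widetilde F\cdot\pi_1^*(G\,\dd\tau))$ and likewise $\{F,G|_kT_n\}=(2\pi i)^k\sum_{\Delta\backslash\overline{\fH}}\Res(\pi_1^*\widetilde F\cdot\pi_2^*(G\,\dd\tau))$. These two expressions are exchanged by the automorphism $\iota$ of $\Delta\backslash\overline{\fH}$ with $\pi_1\circ\iota=\pi_2$ and $\pi_2\circ\iota=\pi_1$, whose existence is exactly the double--coset symmetry $\Gamma_0(N)g\Gamma_0(N)=\Gamma_0(N)g'\Gamma_0(N)$ valid for $(n,N)=1$; since $\iota$ preserves the residue sum, the two sides coincide. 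The local adjugate computation of the previous paragraph is what underlies the projection formula here, so the two viewpoints are really the same argument, the correspondence language serving only to make the orbit bookkeeping and the $n^{k-1}$ normalizations rigorous. The same substitution argument, now with $\det W_Q=1$ so that no determinant factor intervenes, also yields equivariance for the Atkin--Lehner involutions.
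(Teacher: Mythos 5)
Your proof is correct, but it reaches the statement by a genuinely different route than the paper. The paper's own proof first uses the freedom (noted after \eqref{eq:QuotientMeromorphicForms}, via Riemann--Roch) to take representatives $F,G$ with poles only at cusps equivalent to $\infty$ together with the tautological Eichler integrals $\widetilde F=\sum a_mq^m/m^{k-1}$, $\widetilde G=\sum b_mq^m/m^{k-1}$; the pairing then collapses to the single residue at $q=0$, i.e.\ to $(2\pi i)^{k-1}\sum_{m}\sum_{r\mid(m,n)}\frac{a_{-m}}{(-m)^{k-1}}r^{k-1}b_{mn/r^2}$ by \eqref{Heckeexplicit}, and self--adjointness is just the reindexing $m'=-mn/r^2$, $r'=n/r$ combined with the $(-1)^{k+1}$--symmetry of the pairing. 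You instead work with the residue--sum definition at arbitrary pole locations: the determinant--$n$ form of Bol's identity, the adjugate change of variables $(\widetilde F|_{2-k}M)\,G\,\dd\tau=n^{k-1}\,\widetilde F\,(G|_kM')\,\dd w$ (which checks out, using $c\tau+d=n/(a-cw)$ and $\dd\tau=n^{-1}(c\tau+d)^2\dd w$), and the passage to the correspondence $\Delta\backslash\overline{\fH}$ with its swapping involution. You correctly isolated the one delicate point --- the individual summands are not $\Gamma_0(N)$--invariant and the adjugate turns left--coset representatives into right--coset representatives --- and the correspondence formalism with the projection formula for residues is a legitimate way to close that gap. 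What your approach buys: no reduction to poles at $[\infty]$, no appeal to the explicit Fourier formula, and the Atkin--Lehner equivariance comes along by the same substitution with no coset bookkeeping at all. What it costs is the extra machinery (double--coset symmetry, pushforward of residues along the degeneracy maps) where the paper gets away with a two--line manipulation of a double sum.
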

\begin{proof}
  Without loss of generality we can restrict to meromorphic modular forms $F$ and $G$ that only have poles at cusps equivalent to $\infty$. In terms of the Fourier coefficients $a_m$ and $b_m$ of $F$ and $G$, respectively, we choose the Eichler integrals
  \begin{align}
    \widetilde{F}(\tau) = \sum_{m \neq 0 \atop m\gg -\infty}\frac{a_m}{m^{k-1}} q^m \qquad \text{and} \qquad \widetilde{G}(\tau) = \sum_{m \neq 0 \atop m\gg -\infty}\frac{b_m}{m^{k-1}} q^m \, .
  \end{align}
  This gives
  \begin{equation}
    \begin{aligned}
      \{F,G|_{k}T_n \} &= (2\pi i)^{k-1}\sum_{m \neq 0 \atop -\infty \ll m\ll \infty} \sum_{r | (m,n) \atop r>0} \frac{a_{-m}}{(-m)^{k-1}} r^{k-1}b_{mn/r^2} \\
      &= (-1)^{k-1}(2\pi i)^{k-1}\sum_{m' \neq 0 \atop -\infty \ll m'\ll \infty} \sum_{r' | (m',n) \atop r'>0} r'^{k-1}a_{m'n/r'^2} \frac{b_{-m'}}{(-m')^{k-1}} \\
      &=(-1)^{k-1} \{G,F|_{k}T_n \} \\
      &=\{F|_{k}T_n ,G\} \, ,
    \end{aligned}
  \end{equation}
  where $m'=-mn/r^2$ and $r'=n/r$. 
\end{proof}

Just as in the case of holomorphic cusp forms, the period map $r : \mS_k(\Gamma) \rightarrow \tH^1_{\rm par}(\Gamma,V_{k-2}(\mathbb{C}))$ is compatible with the action of the Hecke operators. The proposition above further shows that the (non-canonical) isomorphism between $\mS_k(\Gamma)$ and $S_k(\Gamma) \oplus S_k(\Gamma)\spcheck$ is also compatible with the action of the Hecke operators.

The above considerations show that associated with any newform $f \in S_k(\Gamma_0(N))$ we have a 2-dimensional subspace of $\mathbb{S}_k(\Gamma_0(N))$ with the same Hecke eigenvalues. Let \hbox{$F\in S^{\text{mer}}_k(\Gamma_0(N))$} be such that $[f]$ and $[F]$ generate this subspace. We can choose $F$ to have poles only at cusps equivalent to $\infty$ and Fourier coefficients in $\mathbb{Q}(f)$, and then call~$F$ (or~$[F]$) a \emph{meromorphic partner} of $f$. In Proposition \ref{PeriodPolynomialRationality}  we showed that
\begin{align}
  [r_f] \, = \, \omega^+_f [r^+] + \omega^-_f [r^-]
\end{align}
for $r^\pm \in \tZ_{\text{par}}^1(\Gamma_0(N),V_{k-2}(\mathbb{Q}(f)))$ and used this to define the periods $\omega^\pm_f$, which are unique up to multiplication by $\mathbb{Q}(f)$. Completely analogously we have
\begin{align}
  [r_F] \, = \, \eta^+_F [r^+] + \eta^-_F [r^-]
\end{align}
for the same $r^\pm$, which defines the \emph{quasiperiods} $\eta^\pm_F$. Note that these only depend on the class of $F$. We finish this section by giving a quadratic relation fulfilled by the periods and quasiperiods. 

\begin{prop}[``Legendre Relation'']
  Let $f$ be a newform with meromorphic partner $F$. Then the associated periods and quasiperiods satisfy
  \begin{align}
    (\omega^+_f\eta^-_F-\omega^-_f\eta^+_F) \, \in \, (2\pi i)^{k-1}\mathbb{Q}(f) \, .
  \end{align}
\end{prop}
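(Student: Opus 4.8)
The plan is to deduce the relation from the compatibility of periods and quasiperiods with the natural algebraic pairing on parabolic cohomology. Recall from Proposition~\ref{PeriodPolynomialRationality} and the definition of the quasiperiods that, for a fixed eigenbasis $r^+\in \tZ^1_{\parab}(\Gamma_0(N),V_{k-2}(\mathbb{Q}(f)))^+$, $r^-\in \tZ^1_{\parab}(\Gamma_0(N),V_{k-2}(\mathbb{Q}(f)))^-$ of the $f$-isotypic part, one has $[r_f]=\omega_f^+[r^+]+\omega_f^-[r^-]$ and $[r_F]=\eta_F^+[r^+]+\eta_F^-[r^-]$. I would introduce the cup-product pairing $\langle\cdot,\cdot\rangle_{\mathrm{coh}}$ on $\tH^1_{\parab}(\Gamma_0(N),V_{k-2}(-))$ built from the $\mathrm{SL}_2$-invariant pairing on $V_{k-2}$ together with Poincaré duality. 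It is defined over $\mathbb{Q}(f)$ on $\mathbb{Q}(f)$-rational classes, it is self-adjoint for the Hecke algebra $\mathbb{T}$ (inherited from the Hecke-equivariance of $\{\,,\,\}$ proved before Theorem~\ref{thm:sequence}), and it satisfies $\langle x|_{2-k}\varepsilon,\,y|_{2-k}\varepsilon\rangle_{\mathrm{coh}}=(-1)^{k-1}\langle x,y\rangle_{\mathrm{coh}}$, the sign combining the factor $(\det\varepsilon)^{k-2}$ from the coefficient pairing with the orientation reversal $\operatorname{sgn}(\det\varepsilon)=-1$ of the cup product.

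In all relevant cases $k$ is even (forms on $\Gamma_0(N)$ with trivial character have even weight), so this sign is $-1$ and the two eigenspaces are isotropic, $\langle r^+,r^+\rangle_{\mathrm{coh}}=\langle r^-,r^-\rangle_{\mathrm{coh}}=0$. Expanding bilinearly and using the $(-1)^{k+1}$-antisymmetry gives
\begin{equation*}
  \langle r_f,r_F\rangle_{\mathrm{coh}}\=(\omega_f^+\eta_F^--\omega_f^-\eta_F^+)\,\langle r^+,r^-\rangle_{\mathrm{coh}}\, .
\end{equation*}
Here $q_0:=\langle r^+,r^-\rangle_{\mathrm{coh}}\in\mathbb{Q}(f)$, and it is nonzero: Hecke self-adjointness makes distinct eigensystems orthogonal, so the two-dimensional $f$-isotypic block is nondegenerate, and with vanishing diagonal this forces $q_0\neq0$. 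Thus the statement reduces to proving $\langle r_f,r_F\rangle_{\mathrm{coh}}\in(2\pi i)^{k-1}\mathbb{Q}(f)$, after which I divide by $q_0\in\mathbb{Q}(f)^\times$.

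To control $\langle r_f,r_F\rangle_{\mathrm{coh}}$ I would relate it to the Eichler pairing $\{f,F\}$ of Proposition~\ref{thm:pairing}. By a Haberland-type computation — apply Stokes' theorem to $\widetilde f\,F\,\dd\tau$ over a fundamental domain $\cF$, fold the boundary by the edge identifications (under $\gamma$ one has $\widetilde f\,F\,\dd\tau\mapsto(\widetilde f|_{2-k}\gamma)F\,\dd\tau$, contributing $-\int_e r_f(\gamma_e)F\,\dd\tau$), and integrate by parts $k-1$ times using $F=D^{k-1}\widetilde F$ to convert the edge integrals into the coefficient pairings of $r_f$ and $r_F$ — one obtains $\{f,F\}=c\,\langle r_f,r_F\rangle_{\mathrm{coh}}$ for an explicit nonzero $c\in\mathbb{Q}$. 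Choosing a meromorphic partner $F$ with poles only at cusps equivalent to $\infty$ and Fourier coefficients $b_m\in\mathbb{Q}(f)$, and taking the $\infty$-normalized Eichler integral $\widetilde f_\infty=\sum_{m\ge1}a_m m^{1-k}q^m$ (which has $\mathbb{Q}(f)$-coefficients), the residue formula collapses to the single cusp $\infty$:
\begin{equation*}
  \{f,F\}\=(2\pi i)^{k-1}\,\bigl[\,\widetilde f_\infty\,F\,\bigr]_{q^0}\=(2\pi i)^{k-1}\sum_{m\ge1}\frac{a_m\,b_{-m}}{m^{k-1}}\ \in\ (2\pi i)^{k-1}\,\mathbb{Q}(f)\, ,
\end{equation*}
a finite sum with coefficients in $\mathbb{Q}(f)$. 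Combining the two displays yields $\langle r_f,r_F\rangle_{\mathrm{coh}}=\{f,F\}/c\in(2\pi i)^{k-1}\mathbb{Q}(f)$, and dividing by $q_0$ gives the claim.

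The main obstacle is the Haberland identity $\{f,F\}=c\,\langle r_f,r_F\rangle_{\mathrm{coh}}$ with $c\in\mathbb{Q}^\times$: the careful bookkeeping of the boundary and vertex contributions in the Stokes/integration-by-parts step, and in particular the verification that the proportionality constant carries no hidden power of $2\pi i$. The single factor $(2\pi i)^{1-k}$ produced by writing $F=D^{k-1}\widetilde F$ is precisely what pins the final answer to $(2\pi i)^{k-1}\mathbb{Q}(f)$ rather than $\mathbb{Q}(f)$; everything else (rationality, nondegeneracy and Hecke self-adjointness of the cup product, the $\varepsilon$-sign, and the residue computation) is routine.
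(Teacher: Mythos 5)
Your proposal is correct and follows essentially the same route as the paper: both sides reduce the claim to the identity $\{f,F\}=(\omega_f^+\eta_F^--\omega_f^-\eta_F^+)\langle r^+,r^-\rangle$ via a Haberland-type Stokes/folding computation relating the Eichler pairing to a $\mathbb{Q}(f)$-rational pairing on $\tH^1_{\parab}$, combined with the residue-at-$\infty$ evaluation $\{f,F\}\in(2\pi i)^{k-1}\mathbb{Q}(f)$. The one step you flag as the main obstacle (the explicit boundary bookkeeping showing the proportionality constant is a nonzero rational with no hidden power of $2\pi i$) is precisely the computation the paper carries out in detail for level $1$, deferring higher levels to Shapiro's lemma; your additional argument for $\langle r^+,r^-\rangle\neq 0$ via Hecke self-adjointness and nondegeneracy is a point the paper leaves implicit.
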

\begin{proof}
  First note that clearly $\{f,F\} \in (2\pi i)^{k-1}\mathbb{Q}(f)$. The idea now is to relate the pairing $\{ \cdot, \cdot \}$ to a pairing on $\tH^1_{\text{par}}(\Gamma_0(N),V_{k-2}(\mathbb{C}))$. We give an explicit proof for level 1 which goes along the lines of similar calculations in~\cite{Haberland} and \cite{KohnenZagier}.
$\text{SL}(2,\mathbb{Z})$ is generated by $T$ and $S=\left(\substack{ 0\, -1\\1\,\phantom{-}0}\right)$ satisfying $S^2 = (ST)^3 = -1$ and a standard (non-strict) fundamental domain is given by $\mathcal{F} = \{\tau \in \overline{\fH} \ | \ |\text{Re } \tau|\leq \frac{1}{2} \}\ssm \{\tau \in \overline{\fH} \ | \ |\tau|<\frac{1}{2} \}$. In the following we abuse the notation and denote by $F$ also a representative of $F$ without poles on the boundary of $\mathcal{F}$. We then have
  \begin{align}
    \{ f,F \} \, =\,  & (2\pi i)^{k-1} \int_{\partial \mathcal{F}} \widetilde{f}F \, \dd \tau \\
    \, = \, & (2\pi i)^{k-1} \int_{\frac{i\sqrt{3}-1}{2}}^\infty (\widetilde{f}|_{2-k}(T-1))F \, \dd \tau \\
               &+(2\pi i)^{k-1}\int_i^{\frac{i\sqrt{3}-1}{2}}(\widetilde{f}|_{2-k}(S-1))F \, \dd \tau \, .
  \end{align}
  For $\tau_0 = \frac{i\sqrt{3}+1}{2}$ we have $T^{-1}\tau_0 = S^{-1}\tau_0$ and with the choice $\widetilde{f} = \widetilde{f}_{\tau_0}$ this gives $r_{f,\tau_0}(S) = r_{f,\tau_0}(T)$ and thus
  \begin{align}
    \{ f,F \} \, = \, & (2\pi i)^{k-1} \int_i^\infty r_{f,\tau_0}(T)F \, \dd \tau \, .
  \end{align}
  From $S^2=-1$ we further get $r_{f,\tau_0}(S)|_{2-k}S = -r_{f,\tau_0}(S)$ and so
  \begin{align}
    \{ f,F \} \, &= \, \frac{1}{2}(2\pi i)^{k-1} \int_0^\infty r_{f,\tau_0}(T)F \, \dd \tau \\
    \, &= \, -\frac{(k-2)!}{2}\sum_{i=0}^{k-2} (-1)^i{k-2 \choose i}^{-1} r_{f,\tau_0}(T)_i r_{F,\infty}(S)_{k-2-i} \\
    \, &=: \, -\frac{(k-2)!}{2} <r_{f,\tau_0}(T),r_{F,\infty}(S)> \, .
  \end{align}
  Here $p_i$ denotes the coefficient of $\tau^i$ for $p \in V_{k-2}(\mathbb{C})$ and it is straightforward to show that the defined pairing $< \cdot , \cdot > : V_{k-2}(\mathbb{C}) \times V_{k-2}(\mathbb{C}) \rightarrow \mathbb{C}$ is $\SLZ2$ invariant. We now want to replace $r_{f,\tau_0}$ by $r_{f,\infty}$. Using the $T$ invariance of $\widetilde{f}_{\infty}$, the $\SLZ2$ invariance of the pairing $< \cdot , \cdot >$ and the cocycle relations associated with the identities $S^2 = (ST)^3 = -1$ gives
  \begin{align}
    <r_{f,\tau_0}(T),r_{F,\infty}(S)> \, =& \, <(\widetilde{f}_{\tau_0}-\widetilde{f}_{\infty})|_{2-k}(T-1),r_{F,\infty}(S)> \\
    =& \, <\widetilde{f}_{\tau_0}-\widetilde{f}_{\infty},r_{F,\infty}(S)|_{2-k}(T^{-1}-1)> \\
    =& \, -<\widetilde{f}_{\tau_0}-\widetilde{f}_{\infty},r_{F,\infty}(S)|_{2-k}(ST^{-1}+1)> \\
    =& \, -<\widetilde{f}_{\tau_0}-\widetilde{f}_{\infty},r_{F,\infty}(S)|_{2-k}((TS)^2+1)> \\
    =& \, -\frac{1}{3}<\widetilde{f}_{\tau_0}-\widetilde{f}_{\infty},r_{F,\infty}(S)|_{2-k}((TS)^2+1-2TS)> \\
    =& \, -\frac{1}{3}<\widetilde{f}_{\tau_0}-\widetilde{f}_{\infty},r_{F,\infty}(S)|_{2-k}(TST-T)(S-T^{-1})> \\
    =& \, \frac{1}{3}<(\widetilde{f}_{\tau_0}-\widetilde{f}_{\infty})|_{2-k}(T-S),r_{F,\infty}(S)|_{2-k}(TST-T)> \\
    =& \, \frac{1}{3}<r_{f,\infty}(S),r_{F,\infty}(S)|_{2-k}(ST^{-1}S-T)> \\
    =& \, \frac{1}{3}<r_{f,\infty}(S)|_{2-k}(T-T^{-1}),r_{F,\infty}(S)> \, .
  \end{align}
  We note that any coboundary which vanishes on $T$ comes from a constant polynomial and hence this expression is invariant under shifting the parabolic cocycles by such coboundaries. In particular, we can define a pairing $< \cdot , \cdot >$ on $\tH^1_{\text{par}}(\SLZ2,V_{k-2}(K))$ by
  \begin{align}
    <[r_1],[r_2]> \, = \, -\frac{(k-2)!}{6} <r_1(S)|_{2-k}(T-T^{-1}),r_2(S)> \, ,
  \end{align}
  where $r_1,r_2$ must be chosen such that $r_1(T)=r_2(T)=0$. We see that this pairing is $\varepsilon$ invariant and we conclude that
  \begin{align}
    \{ f,F \} \, = \, (\omega^+_f\eta^-_F-\omega^-_f\eta^+_F) \underbrace{<r^+,r^->}_{\in \, \mathbb{Q}(f)} \, \in \, (2\pi i)^{k-1}\mathbb{Q}(f)
  \end{align}
  which finishes the proof for $\mathrm{SL}(2,\mathbb{Z})$. The proof for higher levels can be done by using Shapiro's lemma~\cite{PasolPeriodPolynomials} or in a way similar to the calculations in~\cite{ZagierModPar}.
\end{proof}

\subsubsection{Computation of $\protect\fakebold{\mS_4}\boldsymbol{(\Gamma_0(25))}$}
\label{sec:exampl-space-ms_4g}

In this subsection we explain the
  explicit computation of a Hecke eigenbasis of
  $\mS_4(\Gamma_0(N))$. We do this in detail for the case of
  $\Gamma_0(25)$, the general case being similar. To this end, we first discuss the
  construction of weakly holomorphic modular forms with a given pole order at
  $\infty$, use these to construct a basis of $\mathbb{S}_4(\Gamma_0(25))$ and
  diagonalize the action of the Hecke algebra. Since newforms are also eigenforms under Atkin--Lehner involutions and since there are no old forms of level 25 and weight 4 this also allows us to write down a Hecke eigenbasis of~$\mathbb{S}_4(\Gamma^*_0(25))$.

  For any $\Gamma$, one can give $X(\Gamma) = \Gamma \backslash \overline{\fH}$ the structure of a Riemann surface, and although non-zero $F\in M_k^{\mero}(\Gamma)$ with $k \neq 0$ are not well-defined on $X(\Gamma)$, one can still define a vanishing order $\text{ord}_\tau(F)$ at any $\tau \in X(\Gamma)$. E.g.\ if $T \in \Gamma$ and $\Gamma_\infty = <T>$, then the vanishing order at the cusp $\infty$ is given by the lowest exponent in the Fourier expansion around $\infty$. The Eichler--Selberg trace formula or the Riemann--Roch formula imply
that the total order of vanishing of any non-zero $g \in M_k(\Gamma)$ is given by
$\kappa_\Gamma k$, where $\kappa_{\Gamma} =
\frac{1}{4\pi}\Vol(\Gamma\backslash \fH)$ in terms of the hyperbolic volume. Since $[\SLZ2:\Gamma_0(N)] = N
\prod_{p|N}(1+\frac{1}{p})$, we have $\kappa_{\Gamma_0(N)} = \frac{N}{12} 
\prod_{p|N}(1+\frac{1}{p})$ and $\kappa_{\Gamma^*_0(N)} =
\frac{1}{2^e}\kappa_{\Gamma_0(N)}$, where $e$ is the number of prime
factors of $N$. Restricting to the case $T \in \Gamma$ we now set $M_k^{[\infty,M]}(\Gamma) = \{ f \in M_k^{[\infty]}(\Gamma)
\mid \ord_\infty f \geq M \}$, where $M_k^{[\infty]}(\Gamma)$ consists of meromorphic modular forms with poles only at cusps equivalent to $\infty$, and denote by $S_k^{[\infty,M]}(\Gamma)$ the subspace with vanishing residues. Let $h \in M_a(\Gamma)$ have the maximal order of vanishing $A=\kappa_\Gamma
a$ at $\infty$ (which exists for $a$ large enough). For $\ell$ large enough we have the short exact sequence
\begin{equation}
  0 \longrightarrow  M_{2-k}^{[\infty,-\ell A+1]}(\Gamma) \xrightarrow[]{D^{k-1}} S_k^{[\infty,-\ell A+1]}(\Gamma) \longrightarrow \mS_k(\Gamma) \longrightarrow 0 \, .
  \label{eq:53}
\end{equation}
The multiplication by $h^\ell$ gives an isomorphism between $M_{2-k}^{[\infty,-\ell A+1]}(\Gamma)$ and a subspace of $M_{2-k+\ell A}(\Gamma)$ (of codimension at most 1) and also an isomorphism between $S_{2-k}^{[\infty,-\ell A+1]}(\Gamma)$ and a subspace of $S_{k+\ell A}(\Gamma)$ (of codimension at most 1). Hence the construction of $\mS_k(\Gamma)$ can be reduced to linear algebra in these finite dimensional spaces. We now specialize to the case $\Gamma = \Gamma_0(N)$. As explained in~\cite{Rouse}, the form $h$ necessarily can be realized as an eta quotient,
\begin{equation}
  h(\tau) \; = \; \prod_{m\mid N} \eta(m\tau)^{r_m}, \qquad r_m \in
  \mZ \ ,
  \label{eq:54}
\end{equation}
and we have the following expressions for the weight $k$ and for the vanishing order at
$\infty$, respectively:
\begin{equation}
  \begin{aligned}
    k &\; = \; \frac{1}{2}\sum_{m\mid N} r_m, & \text{ord}_\infty(h) &\; = \;
      \frac{1}{24}\sum_{m\mid N} mr_m \ .
    \end{aligned}
  \label{eq:55}
\end{equation}
For cusps of the form $a/c \in \mP^1(\mQ)$ with $\gcd(a,c)= 1$, $c | N$ and $c > 0$, the order $\text{ord}_{a/c}(h)$ evaluates to
\begin{equation}
  \label{eq:57}
  \text{ord}_{a/c}(h) \; = \; \frac{N}{\gcd(c,N/c)c} \frac{1}{24} \sum_{m \mid N} \frac{\gcd(m,c)^2r_m}{m} \, .
\end{equation}

We now consider the case $N=25$. Since we are interested in eigenforms, we will consider both $\Gamma_0(25)$ and its extension $\Gamma_0^*(25)$ by $W = \left(\substack{0\ -1/5 \\ 5 \ \phantom{-1}0\phantom{/}}\right)$. A fundamental domain for the latter was constructed in Section~\ref{sec:Ford-circles}, and to get a fundamental domain of $\Gamma_0(25)$ we can take the union of that domain and (any $\Gamma_0(25)$ translation of) its image under $W$. One finds that there are six inequivalent cusps at $\infty,0,\frac{1}{5},
\frac{2}{5},\frac{3}{5},\frac{4}{5}$, and two inequivalent elliptic
fixed points of order 2 at $P_{\pm}=\frac{1}{25}(i\pm 7)$ (fixed
by $\left(\substack{7\;\;\,-2\\25\;-7}\right)$).
Since the genus of $X_0(25)$ is zero, we can construct the weakly holomorphic modular forms from a Hauptmodul $\phi$ of $\Gamma_0(25)$, i.e. a generator of the field of
meromorphic modular functions $M^{\mero}_0(\Gamma_0(25)) = \mC(\phi)$. We take
$\phi$ to be
\begin{equation}
\phi(\tau)\; = \;\frac{\eta(\tau)}{\eta(25 \tau)}=\frac{1}{q}-1-q+q^4+q^6-q^{11}+\cdots \ .
\label{Hauptmodulgamma025}
\end{equation}
The function $\phi$ has a single pole at $\infty$ and vanishes at $0$ to first
order. We also need the unique normalized form $h\in M_4(\Gamma_0(25))$ with the maximal vanishing order $\frac{4}{12}[\SLZ2:\Gamma_0(25)] = 10$ at $\infty$. This is given by the eta product
\begin{equation} 
  h(\tau)\; = \;\frac{\eta(25 \tau)^{10}}{\eta(5 \tau)^2}=q^{10}+2 q^{15}+5
  q^{20}+\cdots \ .
  \label{eq:h}
\end{equation}
For the construction of the meromorphic modular forms we also
introduce $\delta \in S_4(\Gamma_0(25))$ defined by 
\begin{equation} 
\delta(\tau)\; = \;\eta(5 \tau)^4 \eta( 25 \tau)^4=q^5-4 q^{10}+2 q^{15}+8
q^{20}-5 q^{25}+\cdots \ ,
  \label{eq:17}
\end{equation}
and the Eisenstein series $e \in M_2(\Gamma_0(25))$ defined by 
\begin{equation}
 e(\tau)\; = \;\frac{\eta(25\tau)^5}{\eta(5\tau)} \sqrt{\phi(\tau)^2+2 \phi(\tau)+
   5}=\frac{1}{5} \sum_{\substack{ a,b>0\\ a+b\equiv 0 \mod 5\\ a\not \equiv 0\mod 5}}a q^{a b} \; = \;   q^4 + q^6 + \cdots \ .
 \end{equation}
By~(\ref{eq:57}), $\delta$ vanishes to order $5,1,1,1,1,1$ at the six
cusps and so does not have any other zeros. $e$ vanishes with order 4 at $\infty$ and with order $1/2$ at the elliptic fixed points $P_\pm$ and so does not have any other zeros. Hence
$M_2^{[\infty]}(\Gamma_0(25))=e \mathbb{C}[\phi]$ and  moreover
 \begin{equation} 
 M^{[\infty]}_{-2}(\Gamma_0(25))\; = \;\frac{e}{h}\mathbb{C}[\phi]\ .
\end{equation}
It follows that non-zero elements in $M^{[\infty]}_{-2}(\Gamma_0(25))$ have vanishing order at most $-6$ at $\infty$ and since $D$ does 
not change the order at $\infty$, we can construct representatives of a basis of
$S_4^{[\infty]}(\Gamma_0(25))/ D^3 M^{[\infty]}_{-2}(\Gamma_0(25))$ with vanishing order
$-5\leq m \leq 5$ and $m \neq 0$ at $\infty$. It follows that possible representatives are given by forms $F_i = \delta \, p_i(\phi)$, where $p_1,...,p_{10}$ are linearly independent polynomials of degree at most 10 chosen such that the forms $F_i$ have no constant coefficients.

Using~(\ref{Heckeexplicit}) to compute the action of the Hecke operator $T_2$ on the constructed basis one finds that $T_2$ has the eigenvalue $-4$ with multiplicity 4 and the eigenvalues $-1$, $1$ and $4$ with multiplicity 2. To define a Hecke eigenbasis, we further split these by considering the Atkin-Lehner involution $W$. To do this we construct representatives that are eigenforms under this involution. To this end we note that $W$ acts as 
\begin{equation} 
h|_4 {W}\; = \;\frac{1}{5^5} h \phi^{10},\quad  \delta|_4 {W}
\; = \;\frac{1}{5^2} \delta \phi^4 ,\quad \phi|_0 {W}\; = \; \frac{5}{\phi}
,\quad e|_2 {W}\; = \;-\frac{1}{5^2} e \phi^4 \ .
\label{Hecketrans25}
\end{equation} 
It is straightforward to construct basis elements with definite eigenvalue under 
the action of $W$. In particular, the invariant combination 
 \begin{equation}
  \phi_+ \; = \; \phi + \frac{5}{\phi} \; = \;
  q^{-1}-1+4q+5q^{2}+10q^{3}+\cdots \ 
\end{equation}
is a Hauptmodul of $\Gamma^*_0(25)$ and the unique normalized form $h_+ \in M_4(\Gamma^*_0(25))$ with the maximal vanishing order $5$ at $\infty$ is given by
\begin{equation} 
  h_+ \; = \; h \phi^5 \; = \;
  q^{5}-5q^{6}+5q^{7}+10q^{8}-15q^{9} +\cdots \ .
\end{equation}
Similarly, the unique normalized form $e_+ \in M_2(\Gamma^*_0(25))$ with vanishing order 1 at $\infty$ is given by 
\begin{equation} 
  e_+ \; = \; e\, (\phi^3-5\phi) = q-3q^{2}-4q^{3}+7q^{4}+12q^{6}+\cdots \ . 
\end{equation}
This is unique since any non-zero element of $M_2(\Gamma^*_0(25))$ must vanish at least to order $1/2$ at the three inequivalent elliptic fixed points from Figure~\ref{fig:fundamentalregion}. This shows that ${M_2^{[\infty]}(\Gamma^*_0(25)) = e_+\mathbb{C}[\phi_+]}$ and 
\begin{equation}     
M_{-2}^{[\infty]}(\Gamma^*_0(25))\; = \; \frac{e_+}{h_+}\mathbb{C}[\phi_+] \ . 
\end{equation}
To complete the analysis we also introduce the form $h_- \in M_4(\Gamma_0(25))$ defined by
 \begin{equation}
    h_- \; = \; h \, (\phi^6 - 5\phi^4)\; = \; q^4 - 6 q^5 + 4 q^6 + 30 q^7 - 40 q^8 - 38 q^9+\cdots \ , 
\end{equation}
which is anti-invariant under $W$ and vanishes to order 4 at $\infty$. This is the unique normalized form with these properties since any anti-invariant form from $M_4(\Gamma_0(25))$ also has
to vanish at the fixed point $i/5$ of $W$. Using the forms $\delta_\pm \in S_4(\Gamma_0(25))$ defined by
\begin{equation} 
\begin{aligned}
  \delta_+&\; = \;\delta \phi^2\; = \; q^3 - 2 q^4 - q^5 + 2 q^6 + q^7 - 2 q^8+\ldots \\
  \delta_-&\; = \; \delta \, (\phi^3-5\phi)\; = \;q^2 - 3 q^3 - 5 q^4 + 10 q^5 + 5 q^6 - 4 q^7+\cdots 
  \end{aligned}
\end{equation}
we can now construct a basis of invariant forms $F_{+,i} = \delta_+ \, p_{+,i}(\phi_+)$, where $p_{+,1},...,p_{+,6}$ are linearly independent polynomials of degree at most 6 with the property that the forms $F_{+,i}$ do not have constant coefficients. A basis of anti-invariant forms is given by $F_{-,i} = \delta_- \, p_{-,i}(\phi_+)$, where $p_{-,1},...,p_{-,4}$ are linearly independent polynomials of degree at most 4 with the property that the forms $F_{-,i}$ do not have constant coefficients. We diagonalize the action of the Hecke operator $T_2$ on this basis and conclude that representatives for a Hecke eigenbasis of $\mS_4(\Gamma_0(25))$ are given by    
\begin{equation}
  \begin{aligned}
    f_{+,-4}&\; = \; \delta_+ \, (\phi_+^2-10) \; = \; q - 4q^2 + 2q^3 + 8q^4 + 20q^5 - 8q^6 + 6q^7 - 23q^9+\cdots\\
    F_{+,-4}&\; = \;\frac{\delta_+}{27}(27\phi_+^6+240\phi_+^5+320\phi_+^4-2580\phi_+^3-9385\phi_+^2-9900\phi_+-1900)\\
    &\; = \; q^{-3} + \frac{8}{9}q^{-2} - \frac{10}{27}q^{-1} + \frac{1100}{27}q^2 + \frac{6586}{27}q^3 + \frac{31760}{27}q^4 + \frac{40475}{9}q^5+\cdots \\
    f_{+,1}&\; = \;\delta_+ \, (\phi_+^2+ 5 \phi_+ + 10) \; = \; q + q^2 + 7q^3 - 7q^4 + 7q^6 + 6q^7 - 15q^8 + 22q^9+\cdots\\
    F_{+,1}&\; = \;\frac{\delta_+ }{27}(27\phi_+^6+220\phi_+^5+190\phi_+^4-2580\phi_+^3-7975\phi_+^2-7275\phi_+-1250)\\
    &\; = \; q^{-3} + \frac{4}{27}q^{-2} + \frac{665}{27}q^2 + \frac{5141}{27}q^3 + \frac{8875}{9}q^4 + \frac{34375}{9}q^5+\cdots\\
    f_{+,4}&\; = \;\delta_+ \, (\phi_+^2+ 8 \phi_+ + 10) \; = \; q + 4q^2 - 2q^3 + 8q^4 - 8q^6 - 6q^7 - 23q^9+\cdots\\
    F_{+,4}&\; = \;\frac{\delta_+}{27}(27\phi_+^6+208\phi_+^5+100\phi_+^4-2652\phi_+^3-7141\phi_+^2-5148\phi_++340)\\
    &\; = \; q^{-3} - \frac{8}{27}q^{-2} - \frac{2}{9}q^{-1} + \frac{404}{27}q^2 + \frac{4274}{27}q^3 + \frac{2384}{3}q^4 + \frac{29375}{9}q^5+\cdots\\
    f_{-,-4}&\; = \;\delta_- \phi_+ \; = \; q - 4q^2 + 2q^3 + 8q^4 - 30q^5 - 8q^6 + 6q^7 - 23q^9+\cdots\\
    F_{-,-4}&\; = \; \frac{\delta_-}{8}(8\phi_+^4+57\phi_+^3+110\phi_+^2+11\phi_+-100) \\
    &\; = \; q^{-2} + \frac{1}{8}q^{-1} - \frac{73}{4}q^2 - \frac{331}{4}q^3 - 347q^4 - \frac{9355}{8}q^5+\cdots\\
    f_{-,-1}&\; = \;\delta_- \, (\phi_++3) \; = \; q - q^2 - 7q^3 - 7q^4 + 7q^6 - 6q^7 + 15q^8 + 22q^9+\cdots\\
    F_{-,-1}&\; = \;\frac{\delta_-}{4}(4\phi_+^4+30\phi_+^3+64\phi_+^2+22\phi_+-29) \\
    &\; = \; q^{-2} + \frac{1}{2}q^{-1} - \frac{73}{4}q^2 - \frac{421}{4}q^3 - \frac{1607}{4}q^4 - 1250q^5+\cdots\, ,
  \end{aligned}
\end{equation}
where now $f_{\pm,\lambda}$ stands for the newform with $W$-eigenvalue $\pm 1$ and $T_2$-eigenvalue $\lambda$ and $F_{\pm,\lambda}$ stands for an associated meromorphic partner. The latter are chosen such that the leading coefficient in the Fourier expansion is 1 and the coefficient of $q$ vanishes. The maximal denominators in these expansions can be read off from the integer by which we divide $\delta_\pm$. Explicitly the action of the Hecke operator $T_2$ on the meromorphic representatives is given by
\begin{equation} 
  \begin{aligned}
    F_{+,1}|_4 (T_2 - 1)&\; = \;D^3 \ \frac{e_+}{h_+} \left( -\frac{1}{27}\phi_+^2+\frac{1}{9} \right)\\
    F_{+,4}|_4 (T_2-4)&\; = \;D^3 \ \frac{e_+}{h_+} \left( -\frac{1}{27}\phi_+^2+\frac{11}{54} \right)\\
    F_{+,-4}|_4 (T_2+4)&\; = \;D^3 \ \frac{e_+}{h_+} \left( -\frac{1}{27}\phi_+^2+\frac{7}{27} \right)\\
    F_{-,-1}|_{4} (T_2+1)&\; = \;D^3 \ \frac{e_+}{h_-} \left( -\frac{1}{8}\phi_++\frac{1}{4} \right) \\
    F_{-,-4}|_{4} (T_2+4)&\; = \;D^3 \ \frac{e_+}{h_-} \left( -\frac{1}{8}\phi_++\frac{1}{4} \right) \, .
  \end{aligned}
\end{equation}

\subsection{Zeta functions and the motivic point of view}      
\label{sec:section-5}

There are different cohomology groups one can associate with smooth projective varieties defined over $\mathbb{Q}$ (or more generally any number field). These can be used to define periods and zeta functions, the latter being related to the number of points over finite fields. In the following we briefly discuss these objects and sketch the idea of motives, which capture the cohomological structure of varieties. As the most important example for this paper, we explain that there are motives attached to Hecke eigenforms.

\subsubsection{Hodge theory and periods}
\label{sec:periods}

Let $X$ be a smooth projective variety of dimension $d$ defined over $\mathbb{Q}$. Viewing $X$ as a complex manifold, we have for each integer $r$ between $0$ and $2d$ the $r$th homology group $H_r(X(\mathbb{C}),\mathbb{Z})$ whose elements are represented by closed $r$-dimensional chains modulo boundaries of $(r+1)$-dimensional chains. The dimension of this space is the $r$th Betti number $b_r(X)$. Considering the cochain complex we also get the associated cohomology groups $H^r(X(\mathbb{C}),\mathbb{Z})$. By de Rham's theorem, we can represent elements of $H^r(X(\mathbb{C}),\mathbb{Z})$ by elements of the de Rham cohomology group $H_{\text{dR}}^r(X(\mathbb{C}),\mathbb{C})$ whose elements are represented by closed $r$-forms modulo exact $r$-forms. More concretely, by Stokes's theorem, the integration of differential forms over chains gives a well defined pairing 
\begin{align}
    \int : \ \  H_r(X(\mathbb{C}),\mathbb{Z}) \otimes_{\mathbb{Z}} H_{\text{dR}}^r(X(\mathbb{C}),\mathbb{C}) \rightarrow \mathbb{C}
\end{align}
and by de Rham's theorem this pairing is non-degenerate. This induces an isomorphism \begin{align}
    H_{\text{dR}}^r(X(\mathbb{C}),\mathbb{C}) \; \xrightarrow{\sim} \; H^r(X(\mathbb{C}),\mathbb{Z}) \otimes_{\mathbb{Z}} \mathbb{C} \, .
\end{align}

The complex structure of $X(\mathbb{C})$ further allows us, by a theorem of Hodge, to decompose $H_{\text{dR}}^r(X(\mathbb{C}),\mathbb{C})$ into subspaces whose elements can be represented by forms of Hodge type $(p,q)$ with $p+q=r$. This gives the Hodge decomposition
\begin{align}
    H_{\text{dR}}^r(X(\mathbb{C}),\mathbb{C}) \; = \; \sum_{p+q=r} H^{p,q}(X(\mathbb{C})) \, .
\end{align}

Up to now nothing required $X$ to be defined over $\mathbb{Q}$. This changes now as we want to to use the pairing of homology and cohomology to define periods. To do this, we replace the complex vector spaces $H_{\text{dR}}^r(X(\mathbb{C}),\mathbb{C})$ by the algebraic de Rham cohomology groups $H^r_{\text{dR}}(X)$ which are vector spaces over $\mathbb{Q}$. These were defined by Grothendieck \cite{grothendieck1966rham} as the hypercohomology groups of a certain algebraic de Rham complex. In particular, Grothendieck proves that there is a natural isomorphism 
\begin{align}
    H^r_{\text{dR}}(X) \otimes_{\mathbb{Q}}\mathbb{C} \; \cong \; H_{\text{dR}}^r(X(\mathbb{C}),\mathbb{C}) \, ,
\end{align}
called the comparison isomorphism. For the algebraic de Rham cohomology groups, we do not have a Hodge decomposition but only a Hodge filtration 
\begin{align}
    F^rH^r_{\mathrm{dR}}(X) \; \subseteq \; F^{r-1}H^r_{\mathrm{dR}}(X) \; \subseteq \; \cdots \; \subseteq \; F^0H^r_{\mathrm{dR}}(X) \; = \; H^r_{\mathrm{dR}}(X)
\end{align}
which, with respect to the comparison isomorphism, is compatible with the Hodge filtration of $H_{\text{dR}}^r(X(\mathbb{C}),\mathbb{C})$ induced by the Hodge decomposition, i.e.
\begin{align}
    F^kH^r_{\mathrm{dR}}(X) \otimes_{\mathbb{Q}} \mathbb{C} \; \cong \; \bigoplus_{p \geq k} H^{p,r-p}(X(\mathbb{C})) \, . 
\end{align}
For example, if $X$ is an elliptic curve defined over $\mathbb{Q}$, a basis of $H^1_{\mathrm{dR}}(X)$ is given by a differential $\omega$ of the first kind and a differential $\eta$ of the second kind, both defined over $\mathbb{Q}$. While $\omega$ has Hodge type $(1,0)$, $\eta$ will be a mix of the Hodge types $(1,0)$ and $(0,1)$ and is canonically defined only up to multiplication by a non-zero rational number and addition of a rational multiple of $\omega$. 

Using the comparison isomorphism, we can now define the non-degenerate pairing 
\begin{align}
    \int: \ \ H_r(X(\mathbb{C}),\mathbb{Z}) \otimes_{\mathbb{Z}}H^r_{\text{dR}}(X)  \; \rightarrow \; \mathbb{C} \, .
\end{align}
By choosing a basis for $H_r(X(\mathbb{C}),\mathbb{Z})$ and $H^r_{\text{dR}}(X)$ this gives rise to a complex $b_r(X) \times b_r(X)$ matrix called the \emph{period matrix}. The period matrix is unique up to multiplication by a unimodular integer matrix from the left and multiplication by an invertible rational matrix from the right. The Hodge filtration of the algebraic de Rham cohomology groups further induces a filtration of the periods which allows to restrict possible matrices multiplied from the right to lower triangular matrices.

\subsubsection{Reduction modulo primes and zeta functions}
\label{sec:local-global-zeta}

Let $X$ as before be a smooth projective variety of dimension $d$ defined over $\mathbb{Q}$. Since $X$ is given as a subspace of some projective space by equations
with rational coefficients, we can reduce these defining equations
(after multiplication by an integer to clear the denominators) modulo
any prime $p$, leading to a variety $X_p := X/\mF_p$ defined over
$\mF_p$. We restrict to the case that this variety is smooth, which happens for all but finitely many $p$,
called the \emph{good primes}. The remaining primes are called bad primes. For any $n
\geq 1$ we consider the number $\# X_p(\mF_{p^n})$ of solutions of
the defining equations with the variables taking their values in the
field $\mF_{p^n}$. The local zeta function of $X_p$ is a generating
function of these numbers
\begin{equation}
  \label{eq:22}
  Z(X_p,T) \; = \; \exp\left( \sum_{n=1}^\infty \# X_p(\mF_{p^n})
    \frac{T^n}{n} \right) \, .
\end{equation}
A deep theorem says that $Z(X_p,T)$ is not just a power series but a
rational function in $T$ with integral coefficients. Moreover, Weil
conjectured that this rational function has the form
\begin{equation}
  \label{eq:24}
  Z(X_p,T) \; = \; \prod_{r=0}^{2d} P_r(X_p,T)^{(-1)^{r+1}}
\end{equation}
where $P_r(X_p,T)$ is a polynomial of degree $b_r(X)$ with integral coefficients and with all roots of absolute value $p^{-r/2}$ (``local Riemann hypothesis'') and satisfies the functional equation
\begin{align}
    P_{2d-r}( X_p,1/p^d T ) \; = \; \pm P_r(X_p,T)/(p^{d/2}T)^{b_r(X)} \, .
\end{align}
He further conjectured that it should be possible to prove this by finding an appropriate cohomology theory for the variety $X_p$ defined over $\mathbb{F}_p$. This was later realized through the work of Grothendieck, Artin and others by introducing, in general for any smooth projective variety $V$ defined over any field $K$, the $\ell $-adic cohomology group $H^r(\overline{V},\mathbb{Q}_\ell)$ for any prime $\ell \neq \text{char } K$. Here, $\overline{V}$ stands for the variety $V$ regarded as a variety over the algebraic closure $\overline{K}$. In particular, the Galois group $\text{Gal}(\overline{K}/K)$ naturally acts on $\overline{V}$ and this action induces an action on $H^r(\overline{V},\mathbb{Q}_\ell)$. In the case $V=X_p$ and $K=\mathbb{F}_p$, the Galois group is topologically generated by the Frobenius automorphism $\text{Fr}_p : x \mapsto x^p$ and the fixed points of the $n$th power of $\text{Fr}_p$ on $X_p(\overline{\mF_p})$ are precisely the points defined over $\mathbb{F}_{p^n}$. This can be used to relate $\# X_p(\mF_{p^n})$ to the traces of the Frobenius automorphism, since, as proven by Grothendieck, the Lefschetz trace formula can be applied also to the $\ell$-adic cohomology groups, and one obtains 
\begin{align}
   \# X_p(\mF_{p^n}) \; = \; \sum_{r=0}^{2d} (-1)^r \tr ((\text{Fr}_p^*)^n \, | \, H^r(\overline{X_p},\mathbb{Q}_\ell) ) \, . 
\end{align}
A direct consequence is that the local zeta function has the form
\begin{equation}
  Z(X_p,T) \; = \; \prod_{r=0}^{2d} \det (1-T\text{Fr}_p^*\, | \, H^r(\overline{X_p},\mathbb{Q}_\ell))^{(-1)^{r+1}} \, .
\end{equation}
In particular, the product on the right is independent of the chosen prime $\ell$. Because of the local Riemann hypothesis, which was proven by Deligne, the same holds for each factor, giving the desired polynomial $P_r(X_p,T) \in \mathbb{Z}[T]$.

The considerations above apply to any smooth projective variety defined over $\mathbb{F}_p$ and not only to the reduction $X_p$ of a variety $X$ defined over $\mathbb{Q}$. However, using that we have a global variety $X$ defined over $\mathbb{Q}$ allows us to define the $\ell$-adic cohomology group $H^r(\overline{X},\mathbb{Q}_l)$ for all primes $l$. An important fact is that there is a natural comparison isomorphism
\begin{align}
    H^r(\overline{X},\mathbb{Q}_l) \; \cong \; H^r(X(\mathbb{C}),\mathbb{Z}) \otimes_{\mathbb{Z}} \mathbb{Q}_\ell   \, .
\end{align}
In particular, this implies that $P_r(X_p,T)$ is a polynomial of degree $b_r(X)$. Another important theorem is that for all good primes $p \neq l$ there is a natural isomorphism
\begin{align}
    H^r(\overline{X},\mathbb{Q}_l) \; \cong \; H^r(\overline{X_p},\mathbb{Q}_l) \, . 
\end{align}
The Frobenius automorphism $\text{Fr}_p$ then corresponds to a well-defined conjugacy class in the action of $\text{Gal}(\overline{\mathbb{Q}}/\mathbb{Q})$ on $H^r(\overline{X},\mathbb{Q}_l)$, which we also denote by $\text{Fr}_p$, and we have
\begin{align}
    P_r(X_p,T) \; = \; \det (1-T\text{Fr}_p^*\, | \, H^r(\overline{X_p},\mathbb{Q}_\ell)) \; = \; \det (1-T\text{Fr}_p^*\, | \, H^r(\overline{X},\mathbb{Q}_\ell)) \, .
\end{align}
If $p$ is a bad prime one can still associate a conjugacy class to $\text{Fr}_p$ but this is only well defined up to elements in an inertia subgroup $I_p$. For these primes one defines
\begin{align}
    P_r(X_p,T) \; = \; \det (1-T\text{Fr}_p^*\, | \, H^r(\overline{X},\mathbb{Q}_\ell)^{I_p}) \, ,
\end{align}
whose degree in $T$ is at most $b_r(X)$. 

The fact that all local zeta functions come from the same variety $X$ allows us to define the Hasse-Weil zeta function
\begin{align}
    \zeta(X/\mathbb{Q},s) \; = \; \prod_p Z(X/\mathbb{F}_p,p^{-s}) \qquad (\text{Re }s \gg 0) 
\end{align}
which may also be written as an alternating product of the $L$-functions
\begin{align}
    L_r(X/\mathbb{Q},s) \;= \;\prod_p P_r(X/\mathbb{F}_p,p^{-s})^{-1} \qquad (\text{Re }s \gg 0) \, .
\end{align}
One of the most important conjectures in modern arithmetic algebraic geometry is that each $L_r$ has remarkable analytic properties. For example, it is expected that $L_r$ can be analytically continued to a meromorphic function on the complex plane which has a functional equation with respect to the symmetry $s \mapsto r+1-s$. For a few varieties, these properties can be proven but in almost all cases they are conjectural. For a more detailed treatment, we refer to \cite{HulsbergenConjectures}.

We finish with some remarks regarding the computation of the local zeta function. We have seen that the local zeta function can be obtained by either counting the number of points over finite fields or studying $\ell$-adic cohomology groups. In practice these methods quickly become infeasible for complicated varieties and large primes $p$. However, there are also $p$-adic cohomology theories which allow a more efficient computation. A good review explaining how these can be used to compute the local zeta function is \cite{Kedlaya}. Given a family of varieties one may further use the periods to compute the local zeta function very efficiently. This was first considered by Dwork and for one-parameter Calabi-Yau threefolds this is explained for example in \cite{Candelas:2021tqt}.   

\subsubsection{The motivic point of view}
\label{sec:motivic-point-view}

The idea of motives was proposed by Grothendieck to capture the cohomological structure of varieties. We want to briefly explain this idea without going much into detail. For more details we refer to \cite{AndreMotives}. We start by explaining geometric motives. Let $X$ be a smooth projective variety of some dimension $n$. For simplicity, we assume that $X$ is defined over $\mathbb{Q}$ (more generally one could consider any number field). In \ref{sec:periods} we recalled that for every integer $0 \leq r \leq 2n$ we can associated different cohomology groups with $X$:
\begin{itemize}
\item[-] by considering the complex points on $X$ we obtain a topological space $X(\mathbb{C})$ which gives rise to the Betti cohomology group $H^r(X(\mathbb{C}),\mathbb{Z})$,
\item[-] using the structure of $X$ as a variety defined over $\mathbb{Q}$ we obtain the algebraic de Rham cohomology group $H^r_{\text{dR}}(X)$ with the usual Hodge filtration,
\item[-] letting $\overline{X}$ be the variety $X$ regarded as a variety over $\overline{\mathbb{Q}}$ one obtains for any prime $\ell$ the $\ell$-adic cohomology group $H^r(\overline{X},\mathbb{Q}_\ell)$ upon which $\text{Gal}(\overline{\mathbb{Q}}/\mathbb{Q})$ acts. 
\end{itemize}
We also saw that these are not unrelated, e.g.\ there are comparison isomorphisms between $H^r(X(\mathbb{C}),\mathbb{Z}) \otimes_{\mathbb{Z}} \mathbb{C}$ and $H^r_{\text{dR}}(X) \otimes_{\mathbb{Q}} \mathbb{C}$  and between $H^r(X(\mathbb{C}),\mathbb{Q}) \otimes_{\mathbb{Q}} \mathbb{Q}_\ell$ and $H^r(\overline{X},\mathbb{Q}_\ell)$.

The simplest example of a geometric motive is the vector space $V = H^r(X(\mathbb{C}),\mathbb{Q})$ together with the Hodge decomposition on $V \otimes_{\mathbb{Q}} \mathbb{C}$ and the action of $\text{Gal}(\mathbb{\overline{Q}}/\mathbb{Q})$ on $V \otimes_{\mathbb{Q}}\mathbb{Q}_\ell$ for primes $\ell$. More generally, consider an algebraic cycle $\gamma \in Z^n(X \times X)$ defined over $\mathbb{Q}$ (an example of a correspondence). This induces an element in $H^{2n}(X \times X)$ and using the Künneth isomorphism and Poincar\'{e} duality this gives elements $\sigma_r \in \text{End}(H^r(X(\mathbb{C}),\mathbb{Q}),H^r(X(\mathbb{C}),\mathbb{Q}))$ for any $0 \leq r \leq 2n$. The same can be done for the algebraic de Rham cohomology groups and the $\ell$-adic cohomology groups. If some $\sigma_r$ is a projector we now say that the kernel (and hence also the image) of $\sigma_r$ is a geometric motive. This subspace is automatically compatible with the Hodge decomposition and the action of $\text{Gal}(\mathbb{\overline{Q}}/\mathbb{Q})$. The {\it weight} of such a motive is defined to be~$r$ and can be read off from the motive itself by the fact that the eigenvalues of $\text{Fr}_p^*$ have absolute value~$p^{r/2}$.

Conjecturally, any linear subspace $V\subseteq H^r(X(\mathbb{C}),\mathbb{Q})$ that is compatible with the Hodge decomposition and the action of $\text{Gal}(\mathbb{\overline{Q}}/\mathbb{Q})$ defines a geometric motive, i.e.\ is cut out by some correspondence. Even stronger, Hodge-like conjectures and Tate-like conjectures would imply that a linear subspace $V \subseteq H^r(X(\mathbb{C}),\mathbb{Q})$ is already a geometric motive if it is compatible with the Hodge decomposition \emph{or} with the Galois action. This can be summarized in the following diagram:

\begin{figure}[h]
  \centering
  \begin{tikzpicture}
    \draw (0,3) node [align=center] {$V$ is cut out by\\a correspondence};
    \draw [->] (-0.6,2.4) -- (-3,0) node [below, align=center] {$V$ is compatible with\\the Hodge decomposition};
    \draw [->] (0.6,2.4) -- (3,0) node [below, align=center] {$V$ is compatible with\\the Galois action};
    \draw [->] (-3.4,0) arc (215:95:2);
    \draw [->] (3.4,0) arc (-35:85:2);
    \draw (-5,2) node [align=center] {Hodge-like\\conjectures};
    \draw (5,2) node [align=center] {Tate-like\\conjectures};
  \end{tikzpicture}
\end{figure}

More generally, a motive can be thought of as a suitable collection of vector spaces (equipped with a Hodge decomposition and an 
action of $\text{Gal}(\overline{\mathbb{Q}}/\mathbb{Q})$ with additional compatibilities). It should always be representable 
as a geometric motive contained in a cohomology group of some variety, but the choice of the geometric realization is not 
part of the definition of the motive. Examples of motives that do not refer to specific varieties are hypergeometric motives, 
for which we refer to the survey article by Roberts and Villegas \cite{roberts2021hypergeometric}, and motives associated with modular forms, which we now describe.

\subsubsection{The motives attached to Hecke eigenforms}
\label{sec:HeckeMotives}

In this final subsection we explain that there are motives attached to arbitrary newforms, the weight of the motive being
one less than that of the modular form. The point we want to stress is that the motive $V_f$ attached
to a newform~$f$ is an intrinsically defined object, independent of any specific geometric realization: it always has a
geometric realization, as a consequence of the Eichler-Shimura theory if $k=2$ and of the work of Deligne if~$k>2$, 
as explained below, but in general it can have others. The situation of relevance to this paper is that of the motives attached 
to newforms of weight~4 and 2 occurring in the 3rd cohomology group of some Calabi-Yau threefolds (fibers over conifold points and attractor points of hypergeometric families), but there are many other examples in the literature. For instance, it was shown by Ron Livn\'e in the 1980s that the $L_7$-factor of the Hasse-Weil zeta function of the 7-dimensional 
variety $\{(x_1:\cdots:x_{10})\in\mathbb{P}^9 \mid \sum_i x_i = \sum_i x_i^3=0\}$ splits as a product
of a number of Riemann zeta functions and the $L$-function of the unique newform of
level~10 and weight~4. For more discussion and other examples we refer to \cite{Zagierexposition} (pp. 150--151), \cite{YuiSurvey}, and \cite{Meyer}.

\paragraph{Geometric realization of $\boldsymbol{V_f}$ for all newforms}
The simplest situation arises for modular forms of weight~2 and some level $N$. In this case one can consider the modular curve ${X_0(N) = \Gamma_0(N) \backslash \overline{\fH}}$ and there is a canonical isomorphism
\begin{equation}
  \begin{aligned}
    \mathbb{S}_2(\Gamma_0(N)) \; &\xrightarrow{\sim} \; H^1_{\text{dR}}(X_0(N),\mathbb{C}) \\
    [F] \; &\mapsto \; [2\pi i F \, \dd \tau] \, .
  \end{aligned}
\end{equation}
In fact, $X_0(N)$ can be given the structure of a smooth projective variety defined over $\mathbb{Q}$ and if one restricts to classes that can be represented by forms in $S_2^{[\infty]}(\Gamma_0(N))$ with rational Fourier coefficients this gives an isomorphism with $H^1_{\text{dR}}(X_0(N))$. Hence we have a natural motive $V = H^1(X_0(N),\mathbb{Q})$ we can consider. For any divisor $N'$ of $N$ there are $[\Gamma_0(N'):\Gamma_0(N)]$ correspondences on $X_0(N) \times X_0(N)$ which give a splitting $V = V^{\text{new}} \oplus V^{\text{old}}$ corresponding to the splitting into old forms and new forms. There are Hecke correspondences which further split $V^{\text{new}}$ so that attached to any newform $f$ with rational Hecke eigenvalues (the case of more general coefficients is similar) we obtain a 2-dimensional geometric motive $V_f$. From the work of Eichler and Shimura it follows that for all primes $p \nmid N$ and $\ell \neq p$
\begin{align*}
  \det (1-\text{Fr}_p^* T | V_f \otimes_{\mathbb{Q}} \mathbb{Q}_\ell) = 1-a_pT+pT^2
\end{align*}
where $a_p$ is the eigenvalue of $f$ under $T_p$. We conclude that attached to $f$ there is a geometric motive $V_f$ so that the periods of $V_f$ are the periods of qusiperiods and $f$ and the traces of the Frobenius operators are just the eigenvalues of the Hecke operators.

For newforms $f\in S_k(\Gamma_0(N))$ of weight~$k>2$, Deligne~\cite{Deligne} showed that the Hecke eigenvalues coincide with the eigenvalues of the Frobenius operators in the $(k-1)$-st cohomology 
group of an appropriate {\it Kuga-Sato variety}, defined as a suitable compactification of a fiber bundle over~$\Gamma_0(N) \backslash \fH$ whose 
fiber over a point $\tau$ is the $(k-2)$-nd Cartesian product of the level~$N$ elliptic curve $E_\tau$. Scholl~\cite{Scholl} used this construction to associate a motive with $V_f$ with $f$. If $f$ has rational Hecke eigenvalues the attached motive $V_f$ is again 2-dimensional and the periods of $V_f$ are given by the periods and quasiperiods of $f$.

We remark that newforms of weight $1$ (defined either for suitable subgroups of $\Gamma_0(N)$ or with a character in the slash operator) are also motivic. Geometrically these motives are not very interesting since the relevant varieties are 0-dimensional. As an example we consider the newform of level $23$ defined by $f(\tau) = \eta(\tau)\eta(23\tau)$. This is associated with the variety defined by $x^3-x-1$ and this manifests in the number of roots of this polynomial over the finite field $\mathbb{F}_p$ for primes $p\neq 23$ being $a_p+1$ where $a_p$ is the eigenvalue of the Hecke operator $T_p$. This example was given by Blij in~\cite{Blij}.

\paragraph{Correspondences between different geometric realizations}
Conjecturally, two differenet geometric realizations of motives must be related by a correspondence. We give one example in Section~\ref{sec:Correspondence} where we construct a correspondence between a conifold fiber of a hypergeometric family of Calabi-Yau threefolds and a Kuga-Sato variety associated with the unique newform $f \in S_4(\Gamma_0(8))$. The Tate conjecture would imply that there must be a correspondence already if two Galois representations coincide. While the construction of correspondences can be difficult, theorems of Faltings and Serre allow to establish the equality of two Galois representations by comparing finitely many Frobenius traces. E.g.\ for the conifold fiber of the quintic this was used by Schoen~\cite{Schoen} to prove the equality of the associated Galois representation with that of the relevant newform of level~25 and weight~4. 

\section{Appendix: Computational Results}
\label{sec:Computations}
In the main part of this paper we considered 16 newforms of weight 4 (associated with 14 conifold points and 2 attractor points) and 2 newforms of weight 2 (associated with 2 attractor points). The Atkin-Lehner eigenvalues and beginning of the $q$-expansions of these forms can be found in Table~\ref{tab:holFormsConifold} (for the modular forms associated with conifold points) and Table~\ref{tab:holFormsAttractor} (for the modular forms associated with attractor points). In the following we explain how we computed the periods and quasiperiods associated with these forms.

\begin{table}[h!]
  \footnotesize
  \centering 
  \begin{tabular}{|c|rrr|c|}
    \hline
    $N$ & $W_{2^\cdot}$ & $W_{3^\cdot}$ & $W_{5^\cdot}$ & beginning of $q$-expansion\\ [1mm] \hline
    8 & $1\phantom{-}$ & & & $q - 4q^3 - 2q^5 + 24q^7 - 11q^9 - 44q^{11} + \cdots$ \\ 
    9 & & $1\phantom{-}$ & & $q - 8q^4 + 20q^7 - 70q^{13} + 64q^{16} + 56q^{19} + \cdots$ \\ 
    16 & $1\phantom{-}$ & & & $q + 4q^3 - 2q^5 - 24q^7 - 11q^9 + 44q^{11}+ \cdots$ \\ 
    25 & & & $1\phantom{-}$ & $q + q^2 + 7q^3 - 7q^4 + 7q^6 + 6q^7 + \cdots$ \\
    27 & & $-1\phantom{-}$ & & $q - 3q^2 + q^4 - 15q^5 - 25q^7 + 21q^8 + \cdots$ \\ 
    32 & $-1\phantom{-}$ & & & $q - 8q^3 - 10q^5 - 16q^7 + 37q^9 + 40q^{11}  + \cdots$ \\ 
    36 & $-1\phantom{-}$ & $-1\phantom{-}$ & & $q + 18q^5 + 8q^7 - 36q^{11} - 10q^{13} - 18q^{17} + \cdots$ \\ 
    72 & $1\phantom{-}$ & $-1\phantom{-}$ & & $q - 14q^5 - 24q^7 + 28q^{11} - 74q^{13} - 82q^{17} + \cdots$ \\ 
    108 & $-1\phantom{-}$ & $1\phantom{-}$ & & $q - 9q^5 - q^7 - 63q^{11} - 28q^{13} - 72q^{17}   + \cdots$ \\ 
    128 & $-1\phantom{-}$ & & & $q - 2q^3 + 6q^5 - 20q^7 - 23q^9 - 14q^{11} + \cdots$ \\ 
    144 & $1\phantom{-}$ & $1\phantom{-}$ & & $q + 16q^5 + 12q^7 - 64q^{11} + 58q^{13} + 32q^{17}+ \cdots$\\ 
    200 & $1\phantom{-}$ & & $-1\phantom{-}$ & $q + q^3 - 6q^7 - 26q^9 - 19q^{11} + 12q^{13} + \cdots$ \\ 
    216 & $1\phantom{-}$ & $-1\phantom{-}$ & & $q + q^5 - 9q^7 - 17q^{11} - 44q^{13} + 56q^{17}+ \cdots$ \\ 
      864 & $-1\phantom{-}$ & $-1\phantom{-}$ & & $q - 19q^5 - 13q^7 - 65q^{11} - 56q^{13} - 108q^{17} + \cdots$ \\ \hline
  \end{tabular}	
  \caption{Atkin-Lehner eigenvalues and the $q$-expansions of newforms of weight $4$ associated with conifolds.}
  \label{tab:holFormsConifold}
\end{table}

\begin{table}[h!]
  \footnotesize
  \centering 
  \begin{tabular}{|cc|rrr|c|}
    \hline
    $N$ & $k$ &  $W_{2^\cdot}$ & $W_{3^\cdot}$ & $W_{5^\cdot}$ & beginning of $q$-expansion\\ [1mm] \hline
    36 & 2 & $-1\phantom{-}$ & $1\phantom{-}$ & & $q - 4q^7 + 2q^{13} + 8q^{19} -5q^{25}-4q^{31} + \cdots$ \\
    54 & 2 & $1\phantom{-}$ & $-1\phantom{-}$ & & $q - q^2 + q^4 + 3q^5 - q^7 - q^8+\cdots$ \\
    54 & 4 & $-1\phantom{-}$ & $-1\phantom{-}$ & & $q + 2q^2 + 4q^4 + 3q^5 + 29q^7 + 8q^8 +\cdots$ \\
    180 & 4 & $-1\phantom{-}$ & $1\phantom{-}$ & $-1\phantom{-}$ & $q + 5q^5 + 2q^7 + 30q^{11} - 4q^{13} + 90q^{17} + \cdots$\\ \hline
  \end{tabular}	
  \caption{Atkin-Lehner eigenvalues and the $q$-expansions of newforms associated with attractors.}
  \label{tab:holFormsAttractor}
\end{table}

For each newform $f$ of level $N$ and weight $k$ we choose the Eichler integral $\widetilde{f} = \widetilde f_\infty$
as defined in \eqref{ftauinf} and then compute the period polynomials $r_f(\gamma)$ for a set of generators of $\Gamma_0(N)$ 
by using~\eqref{eq:BernRel} with $\tau_0$ chosen so that the imaginary parts of $\tau_0$ and $\gamma^{-1}\tau_0$ are as large as possible. 
These period polynomials can then be written as
\begin{align}
    r_f(\gamma) = \omega_f^+ \hat{r}_f^+(\gamma)+\omega_f^- \hat{r}_f^-(\gamma)
\end{align}
with $\hat{r}_f^\pm \in \tZ^1(\Gamma_0(N),V_{k-2}(\mathbb{Q}))^\pm$ and we choose the periods $\omega_f^\pm$ so that all $\hat{r}_f^\pm$ have integral coefficients and do not have any non-trivial common divisor. This makes the periods unique up to a sign which we fix by requiring $\omega_f^+, \text{Im} \ \omega_f^->0$. We list numerical values for the periods and $\hat{r}_f^\pm(\gamma)$ for a chosen $\gamma \in \Gamma_0(N)$ in Table \ref{tab:periodpolynomialsConifold} and Table \ref{tab:periodpolynomialsAttractor}.

To compute the quasiperiods associated to a normalized Hecke eigenform $f$ of level $N$ and weight $k$ we first find a meromorphic form $F$ such that $[F]$ has the same Hecke eigenvalues as $f$. We make the ansatz
\begin{align}
    F=\frac{g}{h}
\end{align}
where $g\in S_{k+k_h}(\Gamma_0(N))$ has the same Atkin-Lehner eigenvalues as $f$ and $h\in M_{k_h}(\Gamma_0^*(N))$ is chosen such that $k_h$ is as small as possible and $h$ has the maximal vanishing order at $\infty$. Such a form necessarily has to be an eta quotient and the forms are explicitly given in Table \ref{tab:invariant_h}. We then determine $g$ so that $[F]$ has the same Hecke eigenvalues as $f$ and normalize $F$ so that the quasiperiods fulfill 
\begin{align}
    \omega_f^+ \eta_F^--\omega_f^-\eta_F^+ = (2\pi i)^{k-1} \, .
\end{align}
This makes $[F]$ unique up to the addition of rational multiples of $[f]$. For the modular forms associated with the conifold points we fix this by requiring that the ratios $\eta_f^\pm/e^\pm$ are rational and for modular forms of weight 2 (or 4) associated with the attractor points we fix this by requiring that the projection of $\Pi''(z_*)$ (or $\Pi'''(z_*)$) on the Hodge structure $(1,2)$ (or $(0,3)$) are given by rational linear combinations of the quasiperiods (or rational linear combinations multiplied by $2\pi i$). The beginning of the $q$-expansions of our choice of meromorphic forms can be found in Table~\ref{tab:merFormsConifold} and Table~\ref{tab:merFormsAttractor}. The resulting numerical values for the quasiperiods are given in Table \ref{tab:periodpolynomialsConifold} and Table \ref{tab:periodpolynomialsAttractor}.
We provide a supplementary Pari file  containing more detailed data at~\cite{ADEKSup}.

\begin{table}[h!]
  \footnotesize
  \centering 
  \begin{tabular}{|c|c|}
    \hline
    $N$ & beginning of $q$-expansion\\ [1mm] \hline
    8 & $-q^{-1} - 52q - 256q^2 - 1842q^3 - 10240q^4 - 40792q^5 - 138240q^6 + \cdots$ \\ 
    9 & $-\frac{3}{2}q^{-1} - 108q - 246q^2 - \frac{3645}{2}q^3 - 7884q^4 - 32853q^5 - 104976q^6 + \cdots$ \\ 
    16 & $8q^{-2} - 4q^{-1} - 96q + 416q^2 + 3000q^3 + 18432q^4 + 75968q^5 + 260496q^6+ \cdots$ \\ 
    25 & $-\frac{216}{5}q^{-3} - \frac{32}{5}q^{-2} - 280q - 1344q^2 - \frac{50928}{5}q^3 - 40640q^4 - 165000q^5 - 543360q^6 + \cdots$ \\
    27 & $96q^{-2} - 408q - 5832q^3 - 24288q^4 - 69984q^5 - 209952q^6 - 505536q^7 - 1189728q^8 + \cdots$ \\ 
    32 &  $864q^{-3} + 512q^{-2} + 128q^{-1} - 2112q - 12288q^2 - 66816q^3 - 327680q^4 - 1094816q^5 + \cdots$ \\ 
    36 &  $96q^{-4} + 27q^{-3} - 8q^{-2} - 12q^{-1} - 115q + 1128q^2 + 7992q^3 + 38048q^4 + \cdots$ \\ 
    72 &  $12348q^{-7} + 19440q^{-6} + 29250q^{-5} + 32256q^{-4} + 28674q^{-3} + 16704q^{-2} + 3852q^{-1} + \cdots$ \\ 
    108 & $41472q^{-12} + 127776q^{-11} + 216000q^{-10} + 384912q^{-9} + 602112q^{-8} + 839664q^{-7}  + \cdots$ \\ 
    128 &  $219488q^{-19} + 373248q^{-18} + 628864q^{-17} - 2052000q^{-15} - 6849024q^{-14} - 16451136q^{-13} + \cdots$ \\ 
    144 &  $139968q^{-18} + 202500q^{-15} + 329280q^{-14} + 685464q^{-13} + 1140480q^{-12} + 1661088q^{-11}+ \cdots$\\ 
    200 &  $-8230800q^{-19} + 3930400q^{-17} + 6553600q^{-16} + 31050000q^{-15} + 50489600q^{-14} + \cdots$ \\ 
    216 &  $1752048q^{-23} + 12266496q^{-22} + 36006768q^{-21} + 80640000q^{-20} + 175809888q^{-19} + \cdots$ \\ 
      864 &  $-2176782336q^{-108} - 4233748608q^{-107} - 8503971840q^{-104} + 20782697472q^{-102} + \cdots$ \\ \hline
  \end{tabular}	
  \caption{The $q$-expansions of meromorphic partners of weight $4$ associated with conifolds.}
  \label{tab:merFormsConifold}
\end{table}

\begin{table}[h!]
  \footnotesize
  \centering 
  \begin{tabular}{|cc|c|}
    \hline
    $N$ & $k$ & beginning of $q$-expansion\\ [1mm] \hline
    36 & 2  & $\frac{1}{2}q^{-1} + \frac{37}{2}q + 2q^2 + \frac{15}{2}q^3 + 12q^4 + 19q^5 + 36q^6 - \frac{25}{2}q^7 + 88q^8 + 117q^9 + 180q^{10} + \cdots$ \\
    54 & 2 & $-2q^{-2} + 2q^{-1} + 36q - 44q^2 - 10q^4 + 58q^5 - 162q^6 - 192q^7 - 450q^8+\cdots$ \\
    54 & 4  & $1125q^{-5} - 384q^{-4} + 486q^{-3} - 264q^{-2} - 18q^{-1} + 10374q + 24744q^2 + 19197q^3 +\cdots$ \\
    180 & 4  & $165888q^{-12} - 31944q^{-11} - 24000q^{-10} + 52488q^{-9} + 49152q^{-8} + 65856q^{-7} + \cdots$\\ \hline
  \end{tabular}	
  \caption{The $q$-expansions of meromorphic partners associated with attractors.}
  \label{tab:merFormsAttractor}
\end{table}

\begin{table}[h!]
  \footnotesize
  \centering 
  \begin{tabular}{|c|cccc|}
    \hline
    $N$ & $\gamma$ & $\begin{array}{c} \hat{r}_f^+(\gamma) \\ \hat{r}_f^-(\gamma) \end{array}$ & $\begin{array}{c} \omega_f^+ \\ \omega_f^- \end{array}$ & $\begin{array}{c} \eta_F^+ \\ \eta_F^- \end{array}$\\ [1mm] \hline
    8 & $\left(\begin{array}{cc}
                     5&-2  \\
                     8&-3
                   \end{array}\right)$
              & $\begin{array}{c} -8\tau^2 + 6\tau - 1 \\ -4\tau^2 + 4\tau - 1 \end{array}$ & $\begin{array}{l} 6.997563016 \\ 8.671187331 i \end{array}$ & $\begin{array}{l} -261.3739159 \\ -359.3354423 i \end{array}$ \\ \hline
    9 & $\left(\begin{array}{cc}
                     7&-4  \\
                     9&-5 
                   \end{array}\right)$
              & $\begin{array}{c} -3\tau + 2 \\ -6\tau^2 + 7\tau - 2 \end{array}$ & $\begin{array}{l} 2.756850788 \\ 14.32501690 i \end{array}$ & $\begin{array}{l} -251.8644616  \\ -1398.702062 i \end{array}$ \\ \hline
    16 & $\left(\begin{array}{cc}
                      5&-1  \\
                      16&-3 
                      \end{array}\right)$
              & $\begin{array}{c} -16\tau^2 + 8\tau - 1 \\ 32\tau^2 - 12\tau + 1 \end{array}$ & $\begin{array}{l} 4.335593665 \\ 6.997563016 i \end{array}$ & $\begin{array}{l} -473.0985414  \\ -820.7842673 i \end{array}$ \\ \hline
    25 & $\left(\begin{array}{cc}
                      6&-1  \\
                      25&-4 
                    \end{array}\right)$
              & $\begin{array}{c} -50\tau^2 + 20\tau - 2 \\ 90\tau^2 - 28\tau + 2 \end{array}$ & $\begin{array}{l} 3.208713029 \\ 6.146700439 i \end{array}$ & $\begin{array}{l} -689.385618  \\ -1397.911578 i \end{array}$ \\ \hline
    27 & $\left(\begin{array}{cc}
                      20&-3  \\
                      27&-4 
                    \end{array}\right)$
              & $\begin{array}{c} 99\tau^2 - 30\tau + 2 \\ -189\tau^2 + 54\tau - 4 \end{array}$ & $\begin{array}{l} 2.446835111 \\ 3.688508720 i \end{array}$ & $\begin{array}{l} -805.6020738  \\ -1315.789720 i \end{array}$ \\ \hline
    32 & $\left(\begin{array}{cc}
                      19&-3  \\
                      32&-5 
                    \end{array}\right)$
              & $\begin{array}{c} -800\tau^2 + 256\tau - 21 \\ -96\tau^2 + 32\tau - 3 \end{array}$ & $\begin{array}{l} 1.294170585 \\ 2.509465291 i \end{array}$ & $\begin{array}{l} -1663.153920  \\ -3416.610839 i \end{array}$ \\ \hline
    36 & $\left(\begin{array}{cc}
                      7&-1  \\
                      36&-5 
                    \end{array}\right)$
              & $\begin{array}{c} 180\tau^2 - 48\tau + 3 \\ 252\tau^2 - 72\tau + 5 \end{array}$ $$ & $\begin{array}{l} 3.389773856 \\ 4.669340978 i \end{array}$ & $\begin{array}{l} -563.4426618 \\ -849.3062501 i \end{array}$ \\ \hline
    72 & $\left(\begin{array}{cc}
                      41&-4  \\
                      72&-7 
                    \end{array}\right)$
              & $\begin{array}{c} -6264\tau^2 + 1224\tau - 61 \\ -936\tau^2 + 180\tau - 9 \end{array}$ & $\begin{array}{l} 0.814623455 \\ 1.761169120 i \end{array}$ & $\begin{array}{l} -2926.645899 \\ -6631.737112 i \end{array}$ \\ \hline
    108 & $\left(\begin{array}{cc}
                       77&-5  \\
                       108&-7 
                     \end{array}\right)$
              & $\begin{array}{c} -5940\tau^2 + 768\tau - 27 \\ -11124\tau^2 + 1440\tau - 47 \end{array}$ & $\begin{array}{l} 0.430875512 \\ 0.973682307 i \end{array}$ & $\begin{array}{l} -5475.977852 \\ -12950.17424 i \end{array}$ \\ \hline
    128 & $\left(\begin{array}{cc}
                       71&-5  \\
                       128&-9 
                     \end{array}\right)$
              & $\begin{array}{c} -35200\tau^2 + 4960\tau - 177 \\ -3456\tau^2 + 480\tau - 17 \end{array}$ & $\begin{array}{l} 0.429682347 \\ 1.199538394 i \end{array}$ & $\begin{array}{l} -6281.481738 \\ -18113.21495 i \end{array}$ \\ \hline
    144 & $\left(\begin{array}{cc}
                       13&-1  \\
                       144&-11
                     \end{array}\right)$
              & $\begin{array}{c} 58752\tau^2 - 8928\tau + 336 \\ 20736\tau^2 - 3168\tau + 120 \end{array}$ & $\begin{array}{l} 0.234935370 \\ 0.572179387 i \end{array}$ & $\begin{array}{l} -10002.22789 \\ -25416.00625 i \end{array}$ \\ \hline
    200 & $\left(\begin{array}{cc}
                       109&-6  \\
                       200&-11 
                     \end{array}\right)$
              & $\begin{array}{c} -626000\tau^2 + 69000\tau - 1916 \\ -45200\tau^2 + 5000\tau - 140 \end{array}$ & $\begin{array}{l} 0.067278112 \\ 0.233028535 i \end{array}$ & $\begin{array}{l} -43624.24958 \\ -154786.5233 i \end{array}$ \\ \hline
    216 & $\left(\begin{array}{cc}
                      185&-6  \\
                      216&-7 
                    \end{array}\right)$
              & $\begin{array}{c} 299160\tau^2 - 19440\tau + 305 \\ -140616\tau^2 + 9072\tau - 147 \end{array}$ & $\begin{array}{l} 0.092748402 \\ 0.266377323 i \end{array}$ & $\begin{array}{l} -25980.21583 \\ -77290.71700 i \end{array}$ \\ \hline
      864 & $\left(\begin{array}{cc}
                         559&-11  \\
                         864&-17 
                       \end{array}\right)$
              & $\begin{array}{c} -15725664\tau^2 + 617904\tau - 6099 \\  -6487776\tau^2 + 255312\tau - 2511 \end{array}$ & $\begin{array}{l} 0.028461772 \\ 0.113238985 i  \end{array}$ & $\begin{array}{l} -121085.3301 \\ -490469.4664 i  \end{array}$ \\ \hline
  \end{tabular}	
  \caption{Period polynomials and approximate values of periods and quasiperiods for newforms of weight 4 associated with conifolds and for chosen $\gamma \in \Gamma_0(N)$.}
  \label{tab:periodpolynomialsConifold}
\end{table}

\begin{table}[h!]
  \footnotesize
  \centering 
  \begin{tabular}{|cc|cccc|}
    \hline
    $N$ & $k$ & $\gamma$ & $\begin{array}{c} \hat{r}_f^+(\gamma) \\ \hat{r}_f^-(\gamma) \end{array}$ & $\begin{array}{c} \omega_f^+ \\ \omega_f^- \end{array}$ & $\begin{array}{c} \eta_F^+ \\ \eta_F^- \end{array}$\\ [1mm] \hline
    36 & 2 & $\left(\begin{array}{cc}
                      61&-13  \\
                      108&-23 
                    \end{array}\right)$
              & $\begin{array}{c} -1 \\ -1 \end{array}$ $$ & $\begin{array}{l} 2.103273157 \\ 1.214325323 i \end{array}$ & $\begin{array}{l} \hphantom{-}35.27180728 \\ \hphantom{-}23.35152423 i \end{array}$ \\ \hline
    54 & 2 & $\left(\begin{array}{cc}
                      43&-4  \\
                      54&-5
                    \end{array}\right)$
              & $\begin{array}{c} 1 \\ -1 \end{array}$ & $\begin{array}{l} 1.052362237 \\ 0.892458100 i \end{array}$ & $\begin{array}{l} \hphantom{-}32.63160582 \\ \hphantom{-}33.64385854 i \end{array}$ \\ \hline
    54 & 4 & $\left(\begin{array}{cc}
                      23&-3  \\
                      54&-7
                    \end{array}\right)$
              & $\begin{array}{c} -288\tau^2 + 75\tau - 5 \\ 756\tau^2 - 207\tau + 13 \end{array}$ & $\begin{array}{l} 6.323218461 \\ 0.761033398 i \end{array}$ & $\begin{array}{l} \hphantom{-} 64915.70757 \\ \hphantom{-}7773.726563 i \end{array}$ \\ \hline
    180 & 4 & $\left(\begin{array}{cc}
                       77&-3  \\
                       180&-7 
                     \end{array}\right)$
              & $\begin{array}{c} -52200\tau^2 + 4068\tau - 81 \\ 55800\tau^2 - 4332\tau + 83 \end{array}$ & $\begin{array}{l} 0.549166142 \\ 0.261665000 i \end{array}$ & $\begin{array}{l} \hphantom{-}429900.2582 \\ \hphantom{-}204385.8725 i \end{array}$ \\ \hline
  \end{tabular}	
  \caption{Period polynomials and approximate values of periods and quasiperiods for newforms associated with attractors and for chosen $\gamma \in \Gamma_0(N)$.}
  \label{tab:periodpolynomialsAttractor}
\end{table}

\begin{table}[h]
  \centering
  $
  \begin{array}{|cc|cc|}
    \hline
    N & k_h & h & \text{beginning of $q$-expansion}\\
    \hline
    8 & 4 & \frac{1^88^8}{2^44^4} & q^2 - 8q^3 + 24q^4 - 32q^5 + 28q^6+\cdots\\[2mm]
    9 & 4 & \frac{2^69^6}{3^4} & q^2 - 6q^3 + 9q^4 + 14q^5 - 54q^6+\cdots\\[2mm]
    16 & 4 & \frac{1^816^8}{2^48^4} & q^4 - 8q^5 + 24q^6 - 32q^7 + 24q^8+\cdots\\[2mm]
    25 & 4 & \frac{1^525^5}{5^2} & q^5 - 5q^6 + 5q^7 + 10q^8 - 15q^9+\cdots\\[2mm]
    27 & 4 & \frac{1^627^6}{3^29^2} & q^6 - 6q^7 + 9q^8 + 12q^9 - 42q^{10}+\cdots\\[2mm]
    32 & 4 & \frac{1^832^8}{2^416^4} & q^8 - 8q^9 + 24q^{10} - 32q^{11} + 24q^{12}+\cdots\\[2mm]
    36 & 4 & \frac{1^64^66^49^636^6}{2^63^412^418^6} & q^6 - 6q^7 + 15q^8 - 22q^9 + 21q^{10}+\cdots\\[2mm]
    54 & 4 & \frac{1^32^327^354^3}{3^16^19^118^1} & q^9 - 3q^{10} - 3q^{11} + 15q^{12} - 3q^{13}+\cdots\\[2mm]
    72 & 4 & \frac{1^66^28^69^612^272^6}{2^33^44^318^324^436^3} & q^{12} - 6q^{13} + 12q^{14} - 4q^{15} - 15q^{16}+\cdots\\[2mm]
    108 & 4 & \frac{1^64^66^218^227^6108^6}{2^63^29^212^236^254^6} & q^{18} - 6q^{19} + 15q^{20} - 24q^{21} + 33q^{22}+\cdots\\[2mm]
    128 & 4 & \frac{1^8128^8}{2^464^4} & q^{32} - 8q^{33} + 24q^{34} - 32q^{35} + 24q^{36}+\cdots\\[2mm]
    144 & 4 & \frac{1^66^29^616^624^2144^6}{2^33^48^318^348^472^3} & q^{24} - 6q^{25} + 12q^{26} - 4q^{27} - 18q^{28}+\cdots\\[2mm]
    180 & 4 & \frac{1^34^35^36^29^320^330^236^345^3180^3}{2^33^210^312^215^218^360^290^3} & q^{18} - 3q^{19} + 3q^{20} - 2q^{21} + 10q^{24}+\cdots\\[2mm]
    200 & 8 & \frac{1^{10}8^{10}10^220^225^{10}200^{10}}{2^54^55^440^450^5100^5} & q^{60} - 10q^{61} + 40q^{62} - 80q^{63} + 95q^{64}+\cdots\\[2mm]
    216 & 4 & \frac{1^66^18^612^118^127^636^1216^6}{2^33^24^39^224^254^372^2108^3} & q^{36} - 6q^{37} + 12q^{38} - 6q^{39} - 3q^{40}+\cdots\\[2mm]
    864 & 4 & \frac{1^66^118^127^632^648^1144^1864^6}{2^33^29^216^354^396^2288^2432^3} & q^{144} - 6q^{145} + 12q^{146} - 6q^{147} - 6q^{148}+\cdots\\[2mm]\hline
  \end{array}
  $
  \caption{For each given level $N$ the unique normalized form $h\in M_{k_h}(\Gamma_0^*(N))$ such that $k_h$ is as small as possible and $h$ has maximal vanishing order at $\infty$. The notation is such that e.g.\ $\frac{1^88^8}{2^44^4}$ corresponds to $\frac{\eta(\tau)^8\eta(8\tau)^8}{\eta(2\tau)^4\eta(4\tau)^4}$.}
  \label{tab:invariant_h}
\end{table}

\FloatBarrier

\newpage

\addcontentsline{toc}{section}{References}
\bibliographystyle{plain}
\bibliography{ADEK}

\end{document}